\newcommand{\mc}[1]{\mathcal{#1}}
\newcommand{\mf}[1]{\mathfrak{#1}}
\newcommand{\mb}[1]{\mathbb{#1}}
\newcommand{\mbb}[1]{\mathbbm{#1}}
\newcommand{\quot}[2] {\ensuremath{\raisebox{.40ex}{\ensuremath{#1}}
\! \big / \! \raisebox{-.40ex}{\ensuremath{#2}}}}
\newcommand{\tint}{{\textstyle\int}}
\newcommand{\ass}[1]{\stackrel{#1}{\longleftrightarrow}}
\DeclareMathOperator{\Mat}{Mat}
\DeclareMathOperator{\tr}{Tr}
\DeclareMathOperator{\res}{Res}
\DeclareMathOperator{\ord}{ord}
\DeclareMathOperator{\Span}{Span}
\newtheorem{remark}[theorem]{Remark}
\numberwithin{equation}{section}
\definecolor{light}{gray}{.9}
\begin{document}

\title{Adler-Gelfand-Dickey approach to classical
\texorpdfstring{$\mc W$}{W}-algebras within the theory of Poisson vertex algebras}
\shorttitle{AGD approach to classical W-algebras}

\volumeyear{2014}
\paperID{xxx}

\author{Alberto De Sole\affil{1}, Victor G. Kac\affil{2} and Daniele Valeri\affil{3}}
\abbrevauthor{A. De Sole, V. Kac and D. Valeri}
\headabbrevauthor{A. De Sole, V. Kac and D. Valeri}

\address{%
\affilnum{1}Dip.to Matematica, Universit\`a La Sapienza,
P.le Aldo Moro 2, 00185 Rome, Italy, 
desole@mat.uniroma1.it,
\affilnum{2}
Department of Mathematics, MIT,
77 Mass. Avenue, Cambridge, MA 02139, USA,
kac@math.mit.edu,
and
\affilnum{3}
SISSA, Via Bonomea 265, 34136 Trieste, Italy,
dvaleri@sissa.it.
}

\correspdetails{
desole@mat.uniroma1.it
}

\received{9 January 2014}
\revised{xxx}
\accepted{xxx}


\begin{abstract}
We put the Adler-Gelfand-Dickey approach to classical $\mc W$-algebras in the framework of
Poisson vertex algebras.
We show how to recover the bi-Poisson structure of the KP hierarchy,
together with its generalizations and reduction to the $N$-th KdV hierarchy,
using the formal distribution calculus and the $\lambda$-bracket formalism.
We apply the Lenard-Magri scheme
to prove integrability of the corresponding hierarchies.
We also give a simple proof of a theorem of Kupershmidt and Wilson in this framework.
Based on this approach,
we generalize all these results to the matrix case.
In particular, we find (non-local) bi-Poisson structures of the matrix KP 
and the matrix $N$-th KdV hierarchies, and 
we prove integrability of the $N$-th matrix KdV hierarchy.
\end{abstract}

\maketitle
\tableofcontents

\section{Introduction}
It is well known that classical $\mc W$-algebras play an important role
in the theory of integrable
bi-Hamiltonian equations.
One of the most powerful approaches to classical $\mc W$-algebras is via 
the Drinfeld-Sokolov Hamiltonian reduction, \cite{DS85},
associated to any simple Lie algebra and its principal nilpotent element.
Later on this theory was extended to more general nilpotent elements,
and  it was put by \cite{DSKV13a} in the framework of Poisson vertex algebras (PVA).

However the most important classical $\mc W$-algebras,
those associated to $\mf{sl}_N$ and its principal nilpotent element,
have appeared prior to \cite{DS85} together with their deep connection to integrable systems.
They were identified with the so-called second Poisson structure of the
$N$-th (or generalized) KdV equations,
the KdV equation corresponding to the case of $N=2$.
The first Poisson structure of the KdV equation was found by
\cite{Gar71} and \cite{ZF71}. Later, it was shown by \cite{Mag78},
that there exists not only a Poisson structure but a one-parameter family
of compatible Poisson structures for the KdV equation. In this case
we say that we have a bi-Poisson structure.
The fact that the same equation could be written in two different
Hamiltonian forms has become fundamental in proving integrability of such equations using a recurrence procedure, nowadays called the Lenard-Magri scheme of integrability.

The generalization to the $N$-th KdV hierarchy was suggested by \cite{GD76}
who found a Poisson structure for these
hierarchies, obtained as a Poisson algebra on a suitable space of
scalar differential operators ($N$ being the order of the operators).
A candidate for a second Poisson structure (whence 
a bi-Poisson structure) was conjectured by \cite{Adl79} and
proved by \cite{GD78}. These Poisson structures are
usually known as the \emph{first and second Adler-Gelfand-Dickey (AGD) Poisson structures}
and the second one is a special case of general classical $\mc W$-algebras.

Later on, all the $N$-th KdV hierarchies were ``embedded'' into one big hierarchy
called the KP hierarchy, \cite{Sat81}. A Poisson structure
for this hierarchy (which was a slight modification of the AGD Poisson structure)
was suggested by \cite{Wat83}, and \cite{Dic87} proved that there
exists a second Poisson structure. 
In fact, it was shown by \cite{Rad87} that, for any $N\geq1$,
there is a bi-Poisson structure for the KP hierarchy which induces
the bi-Poisson structure for the $N$-th KdV hierarchy.
We refer to the book of \cite{Dic03} for a detailed exposition of these
topics and a large list of references.

The main goal of the present paper is to give a Poisson vertex algebra
interpretation of the AGD approach to classical $\mc W$-algebras.
This point of view greatly simplifies the theory and also allows to study its matrix generalization.

Let $\mc V$ be a differential algebra with derivation $\partial$.
Following \cite{Adl79},
given a pseudodifferential operator $L\in\mc V((\partial^{-1}))$,
we define the corresponding \emph{Adler map} $A^{(L)}:\,\mc V((\partial^{-1}))\to\mc V((\partial^{-1}))$ 
given by \eqref{adlermap},
and the associated skewadjoint matrix differential operator $H^{(L)}$ with coefficients in $\mc V$,
given by \eqref{20131022:eq1}.
In Theorem \ref{L-c} 
we consider the algebra $\mc V_N^\infty$ of differential polynomials 
on the differential variables $\{u_i\}_{i\geq-N}$ and the pseudodifferential operator
$L=\partial^N+u_{-N}\partial^{N-1}+u_{-N+1}\partial^{N-2}+\dots$,
and we prove that, in this case,
the corresponding 1-parameter family of matrices $H^{(L-c)}$ 
gives a bi-Poisson structure on $\mc V_N^\infty$.
Theorem \ref{L-c-2} is the analogous result
for the algebra $\mc V_N$ of differential polynomials in finitely many variables $\{u_i\}_{i=-N}^{-1}$
and $L=\partial^N+u_{-N}\partial^{N-1}+\dots+u_{-1}$.
Also in this case we get a bi-Poisson structure on $\mc V_N$.
This is the classical $\mc W$-algebra associated to the Lie algebra $\mf{gl}_N$
and its principal nilpotent element.
Furthermore, performing Dirac's reduction by the constraint $u_{-N}=0$,
we obtain bi-Poisson structures on 
$\mc W_N^\infty=\quot{\mc V_N^\infty}{\langle u_{-N}\rangle}$
and on $\mc W_N=\quot{\mc V_N}{\langle u_{-N}\rangle}$
(the latter is the classical $\mc W$-algebra for $\mf{sl}_N$).

In Section \ref{sec:hierarchies}
we associate to each of these bi-Poisson structures
the corresponding integrable hierarchy of bi-Hamiltonian equations,
using the Lenard-Magri scheme:
on $\mc W_N^\infty$ we obtain the KP hierarchy,
while on $\mc W_N$ we obtain the $N$-th KdV hierarchy.

Furthermore, in Section \ref{sub:extra2} we use the Adler map
to define natural PVA homomorphisms
$\mc V_{M+N}^\infty\to\mc V_M^\infty\otimes\mc V_N^\infty$
and $\mc V_{M+N}\to\mc V_M\otimes\mc V_N$,
which are a generalization of the usual ``Miura map'', \cite{Miu68}.
This allows us to give another proof of the Theorem of \cite{KW81}.

A major advantage of our approach is that all the above constructions and proofs
extend in a straightforward way to the case
when the variables $u_i$ are replaced by $m\times m$ matrices $U_i$.
This is done in Section \ref{sec:matrixAGD},
where, as a result, we construct 
the matrix KP and $N$-th KdV bi-Hamiltonian equations.
The main difference with the scalar case is that,
in the matrix case, Dirac reduction (developed by \cite{DSKV13c})
by the constraint $U_{-N}=0$ leads to a non-local bi-Poisson structure.
These non-local Poisson structures have been discovered by \cite{Bil95} and \cite{OS98}.
We then need to use the theory of non-local Poisson structures introduced by \cite{DSK13},
and the machinery of rational matrix pseudodifferential operators developed by \cite{CDSK13b},
to prove integrability of the matrix $N$-th KdV hierarchy.

\section{Preliminaries}\label{sec:back}

\subsection{Some simple facts on formal distribution calculus}\label{sub:formal}
We briefly review here some basic facts on formal distribution calculus which will be used
throughout the paper, cf. \cite{Kac96}.

Given a vector space $\mc A$,
an $\mc A$-valued formal distribution in $z$ is a series of the form
$a(z)=\sum_{n\in\mb Z}a_nz^n$, where $a_n\in\mc A$.
They form a vector space denoted by $\mc A[[z,z^{-1}]]$.
An $\mc A$-valued formal distribution in two variables $z$ and $w$ is a series of the form
$a(z,w)=\sum_{m,n\in\mb Z}a_{mn}z^mw^n$, where $a_{mm}\in\mc A$.
They form a vector space denoted by $\mc A[[z,z^{-1},w,w^{-1}]]$.

The $\delta$-\emph{function} is, by definition, the $\mb F$-valued formal distribution
$$
\delta(z-w)
=\sum_{n\in\mb Z}z^{-n-1}w^n\in\mb F[[z,z^{-1},w,w^{-1}]]\,.
$$
For every $a(z)\in\mc A[[z,z^{-1}]]$ we have
\begin{equation}\label{deltaprop}
a(z)\delta(z-w)
=a(w)\delta(z-w)\,.
\end{equation}
In particular, $\res_za(z)\delta(z-w)=a(w)$,
where $\res_z$ denotes the coefficient of $z^{-1}$.

We denote by $i_z$ the power series expansion for large $|z|$.
For example, 
$$
i_z(z-w)^{-1}=\sum_{k\in\mb Z_+}z^{-k-1}w^k
\,.
$$
Using this notation,
the $\delta$-function can be rewritten as follows
\begin{equation}\label{delta}
\delta(z-w)=i_{z}(z-w)^{-1}-i_{w}(z-w)^{-1}\,.
\end{equation}
For $a(z)=\sum_{n\in\mb Z}a_nz^n\in\mc A[[z,z^{-1}]]$, we 
denote $a(z)_+=\sum_{n\in\mb Z_+}a_nz^n$
and $a(z)_-=\sum_{n<0}a_nz^n$.
It is easy to check that
\begin{equation}
\label{20130927:cor1}
\res_za(z)i_{z}(z-w)^{-1}
=a(w)_+
\,.
\end{equation}

\subsection{The algebra of matrix pseudodifferential operators}\label{sub:pseudo}

By a \emph{differential algebra} we mean a unital commutative associative algebra
$\mc A$ over a field $\mb F$ of characteristic $0$, with a derivation $\partial$.

The algebra $\Mat_{m\times m}\mc A$ of matrices with coefficients in $\mc A$
is a unital associative algebra,
with the obvious action of $\partial$ as a derivation.
We consider the algebra $\Mat_{m\times m}\mc A((\partial^{-1}))$ of $m\times m$ \emph{matrix
pseudodifferential operators} with coefficients in $\mc A$.
Its product is determined by the following formula 
($n\in\mb Z$, $A\in\Mat_{m\times m}\mc A$):
$$
\partial^n\circ A=\sum_{k\in\mb Z_+}\binom{n}{k}A^{(k)}\partial^{n-k}\,.
$$
We say that a non-zero
$A(\partial)=\sum_{n\leq N}A_n\partial^n\in\Mat_{m\times m}\mc A((\partial^{-1}))$
has \emph{order} $\ord(A(\partial))=N$ if $A_N\neq0$.
We denote by
$\Mat_{m\times m}\mc A((\partial^{-1}))_N$
the space of all matrix pseudodifferential operators
of order less than or equal to $N$.
We say that $A(\partial)$ is \emph{monic} if $A_N=\mbb1_m$.

The \emph{residue} of 
$A(\partial)\in\Mat_{m\times m}\mc A((\partial^{-1}))$
is, by definition,
\begin{equation}\label{eq:res}
\res_{\partial} A(\partial)=A_{-1}\ (=\mbox{ coefficient of }\partial^{-1})\,.
\end{equation}
The \emph{adjoint} of $A(\partial)\in\Mat_{m\times m}\mc A((\partial^{-1}))$
is, by definition,
$$
A(\partial)^*=\sum_{n\leq N}(-\partial)^n\circ A_n^t\,,
$$
where $A_n^t$ denotes the transpose matrix of $A_n$.

We have the subalgebras
$\Mat_{m\times m}\mc A[\partial]$ 
of \emph{matrix differential operators}, and
$\Mat_{m\times m}\mc A[[\partial^{-1}]]\partial^{-1}$
of \emph{matrix integral operators}.
Every matrix pseudodifferential operator $A(\partial)$
can be decomposed uniquely as
$A(\partial)=A(\partial)_++A(\partial)_-$,
where $A(\partial)_+\in\Mat_{m\times m}\mc A[\partial]$
and $A(\partial)_-\in\Mat_{m\times m}\mc A[[\partial^{-1}]]\partial^{-1}$.


We let
$\mb F[\partial,\partial^{-1}]\circ\Mat_{m\times m}\mc A\subset\Mat_{m\times m}\mc A((\partial^{-1}))$
be the space of all pseudodifferential operators of the form
$\sum_{n\in\mb Z}\partial^n\circ A_n$,
where all but finitely many elements $A_n\in\Mat_{m\times m}\mc A$ are zero.
We also let $\Mat_{m\times m}\mc A[[\partial,\partial^{-1}]]\supset\Mat_{m\times m}\mc A((\partial^{-1}))$
be the space of all formal series in $\partial,\partial^{-1}$
with coefficients in $\Mat_{m\times m}\mc A$ (it is not an algebra).

Recall the following simple facts about matrix pseudodifferential operators.
\begin{proposition}\label{prop:roots}
Let $A(\partial)$ be a monic matrix pseudodifferential operator of order $N\geq1$.
\begin{enumerate}[(a)]
\item
There exists exactly one monic matrix pseudodifferential operator of order one,
which we denote $A^{\frac{1}{N}}(\partial)$, such that
$(A^{\frac{1}{N}}(\partial))^N=A(\partial)$.
\item
There exists 
exactly one monic matrix pseudodifferential operator of order $-N$, which we denote 
$A^{-1}(\partial)$, such that $A(\partial)\circ A^{-1}(\partial)=A^{-1}(\partial)\circ A(\partial)=\mbb1_m$.
\end{enumerate}
\end{proposition}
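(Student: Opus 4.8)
The plan is to determine both operators recursively, coefficient by coefficient, exploiting the fact that the leading coefficient of $A(\partial)$ is the identity matrix. Throughout, write $A(\partial)=\partial^N+\sum_{i<N}A_i\partial^i$ with $A_i\in\Mat_{m\times m}\mc A$, and recall the product rule $\partial^n\circ C=\sum_k\binom nk C^{(k)}\partial^{n-k}$.

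For part (a), I look for $B(\partial)=\partial+\sum_{j\leq0}B_j\partial^j$ and impose $B(\partial)^N=A(\partial)$. Expanding $B(\partial)^N$ and comparing the coefficient of $\partial^{N-1-k}$ on both sides for $k=0,1,2,\dots$, the key observation is that the unknown $B_{-k}$ enters the coefficient of $\partial^{N-1-k}$ only through the $N$ terms in which $B_{-k}\partial^{-k}$ occupies one of the $N$ factors while every other factor contributes its leading term $\partial$. When $B_{-k}\partial^{-k}$ sits in position $j$, the term $\partial^{j-1}\circ B_{-k}\partial^{-k}\circ\partial^{N-j}$ has leading part $B_{-k}\partial^{N-1-k}$, so summing over $j=1,\dots,N$ the unknown appears with coefficient $N\mbb1_m$; all remaining contributions involve only the previously determined coefficients $B_{-j}$ with $j<k$ and their derivatives. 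Since $\mc A$ has characteristic $0$, $N\mbb1_m$ is invertible, so the equation determines $B_{-k}$ uniquely. Running the induction on $k$ yields both existence and uniqueness of $A^{\frac1N}(\partial)$.

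For part (b), I argue in the same style. Seeking a right inverse $B(\partial)=\partial^{-N}+\sum_{j<-N}B_j\partial^j$ with $A(\partial)\circ B(\partial)=\mbb1_m$, I compare coefficients of $\partial^{-k}$ for $k=0,1,2,\dots$. The term $\partial^N\circ B_{-N-k}\partial^{-N-k}$ has leading part $B_{-N-k}\partial^{-k}$, so the new unknown $B_{-N-k}$ now appears with coefficient $\mbb1_m$ and is again uniquely determined by the lower-order data. This produces a unique monic right inverse of order $-N$. Running the analogous recursion for $C(\partial)\circ A(\partial)=\mbb1_m$ produces a unique monic left inverse, and by associativity $C=C\circ(A\circ B)=(C\circ A)\circ B=B$, so the two coincide and give the claimed two-sided inverse $A^{-1}(\partial)$.

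The only genuine subtlety is the noncommutativity of $\Mat_{m\times m}\mc A$: in general the coefficient multiplying a new unknown could be a non-invertible matrix, which would obstruct the recursion. Here, however, that coefficient is always a scalar multiple of the identity, namely $N\mbb1_m$ in (a) and $\mbb1_m$ in (b), precisely because $A(\partial)$ is monic; this is exactly what makes each step solvable. The remaining work is the purely mechanical bookkeeping of the binomial corrections $\binom nk C^{(k)}$, which I would organize by a single induction on $k$ and do not spell out.
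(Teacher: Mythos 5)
Your argument is correct: the recursion on the coefficient of $\partial^{N-1-k}$ (resp.\ $\partial^{-k}$) does isolate the new unknown with invertible scalar coefficient $N\mbb1_m$ (resp.\ $\mbb1_m$), all other contributions involve only previously determined coefficients and their derivatives, and the left/right inverse identification via associativity is the standard closing step. Note that the paper itself offers no proof of Proposition \ref{prop:roots} --- it is merely ``recalled'' as a known fact --- so there is nothing to compare against; your coefficient-by-coefficient induction is precisely the standard argument the paper implicitly relies on.
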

The \emph{symbol} of a matrix pseudodifferential operator 
$A(\partial)=\sum_{n\leq N}A_n\partial^n\in\Mat_{m\times m}\mc A((\partial^{-1}))$ 
is the Laurent series $A(z)=\sum_{n\leq N}A_nz^n\in\Mat_{m\times m}\mc A((z^{-1}))$,
where $z$ is an indeterminate commuting with $\mc A$.
This gives us a bijective map $\Mat_{m\times m}\mc A((\partial^{-1}))\longrightarrow\Mat_{m\times m}\mc A((z^{-1}))$
which is not an algebra homomorphism.
For $A(\partial),B(\partial)\in\Mat_{m\times m}\mc A((\partial^{-1}))$, we have
\begin{equation}\label{eq:mult_symbol}
(A\circ B)(z)=A(z+\partial)B(z)\,.
\end{equation}
Here and further, for any $n\in\mb Z$, we expand $(z+\partial)^n$ 
in non-negative powers of $\partial$ (and we let the powers of $\partial$ act to the right,
on the coefficients of $B(z)$),
i.e. by $(z+\partial)^n$ we mean $i_z(z+\partial)^n$.

The following lemma will be used in Section \ref{sec:hierarchies}.
\begin{lemma}\label{lem:residui}
\begin{enumerate}[(a)]
\item
If $a(z),b(z)\in\mc A((z^{-1}))$ are formal Laurent series with coefficients in an algebra $\mc A$,
then
$\res_za(z)b(z-x)=\res_za(z+x)b(z)$ (where we expand $(z\pm x)^n$ in non-negative powers of $x$).
\item
If $A(\partial),B(\partial)\in\mc A((\partial^{-1}))$ are 
pseudodifferential operators over the differential algebra $\mc A$,
then
$\res_zA(z)B^*(-z+\lambda)=\res_zA(z+\lambda+\partial)B(z)$.
\end{enumerate}
\end{lemma}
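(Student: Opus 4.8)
The plan is to prove both identities by a direct expansion that in each case collapses to the binomial ``upper-negation'' identity $\binom{n}{p}(-1)^p=\binom{p-n-1}{p}$, and then to bootstrap part (b) from part (a) by isolating the $\lambda$-dependence.

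For part (a), I would write $a(z)=\sum_m a_mz^m$, $b(z)=\sum_n b_nz^n$ and expand both sides as formal power series in $x$. Using $(z-x)^n=\sum_{k\ge0}\binom nk(-x)^kz^{n-k}$, extracting the coefficient of $z^{-1}$ on the left forces $k=m+n+1$, whence $\res_za(z)b(z-x)=\sum_{m+n\ge-1}\binom{n}{m+n+1}(-1)^{m+n+1}a_mb_nx^{m+n+1}$. Using $(z+x)^m=\sum_{l\ge0}\binom ml x^lz^{m-l}$, extracting $z^{-1}$ on the right forces $l=m+n+1$, giving $\res_za(z+x)b(z)=\sum_{m+n\ge-1}\binom{m}{m+n+1}a_mb_nx^{m+n+1}$. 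Comparing coefficients of $x^{m+n+1}$, the two sides agree precisely by upper negation with $p=m+n+1$ (note $p-n-1=m$). I would remark that this is just the translation invariance $\res_zf(z+x)=\res_zf(z)$ of the formal residue for $f\in\mc A((z^{-1}))$.

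For part (b) my strategy is first to settle the case $\lambda=0$, namely $\res_zA(z)B^*(-z)=\res_zA(z+\partial)B(z)$. Since the coefficient of $z^{-1}$ in a symbol equals the coefficient of $\partial^{-1}$ in the operator, \eqref{eq:mult_symbol} gives $\res_zA(z+\partial)B(z)=\res_\partial(A\circ B)$, so this base case is really $\res_\partial(A\circ B)=\res_zA(z)B^*(-z)$. I would verify it by computing both residues in terms of the coefficients $A_m,B_n$: from $B^*(\partial)=\sum_n(-\partial)^n\circ B_n$ and \eqref{eq:mult_symbol} one gets the symbol $B^*(z)=\sum_n(-1)^n\sum_{j\ge0}\binom nj B_n^{(j)}z^{n-j}$, and after extracting $z^{-1}$ both sides become sums over $m,n$ carrying the single surviving index $m+n+1$, the equality reducing once more to upper negation with $p=m+n+1$.

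Finally I would restore $\lambda$ by a two-step bootstrap. The base case holds for any pseudodifferential operator over any differential algebra, so I apply it over $\mc A[[\lambda]]$ (with $\partial\lambda=0$) to the operator whose symbol is $A(z+\lambda)$; since its symbol shifted by $\partial$ is $A(z+\lambda+\partial)$, this yields $\res_zA(z+\lambda)B^*(-z)=\res_zA(z+\lambda+\partial)B(z)$, which is exactly the right-hand side sought. It then remains to show $\res_zA(z)B^*(-z+\lambda)=\res_zA(z+\lambda)B^*(-z)$, and this is precisely part (a) with $a(z)=A(z)$, $x=\lambda$, and $b(z)=B^*(-z)$, so that $b(z-\lambda)=B^*(-z+\lambda)$; chaining the two equalities gives the claim. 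The main obstacle I anticipate is not the arithmetic but the bookkeeping of expansion conventions — keeping straight that $(z+\lambda+\partial)^m$, $(-z+\lambda)^n$, and $A(z+\lambda)$ are each expanded in non-negative powers of the appropriate variable — together with checking that $A(z+\lambda)$ is genuinely the symbol of a pseudodifferential operator over $\mc A[[\lambda]]$ so that the $\lambda=0$ identity may legitimately be reused; once these are pinned down, every step collapses to the same binomial identity.
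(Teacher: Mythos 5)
Your argument is correct, and it reaches the same two identities by a visibly different route from the paper. The paper's proof of (a) is a two-line argument: write $a(z+x)=e^{x\partial_z}a(z)$ and use integration by parts $\res_z A(z)\partial_zB(z)=-\res_z B(z)\partial_z A(z)$, i.e.\ skew-adjointness of $\partial_z$ with respect to $\res_z$; your binomial computation with the upper-negation identity $\binom{n}{p}(-1)^p=\binom{p-n-1}{p}$ proves the same translation invariance of the residue by brute force, which is more elementary but longer. The real divergence is in (b): the paper obtains it in one stroke as the special case of (a) in which $x=\lambda+\partial$ acts on the coefficients of $B$ (so that $B^*(-z+\lambda)=\big(\big|_{x=\lambda+\partial}B(z-x)\big)$), whereas you factor the shift as $\lambda+\partial=\partial$ followed by $\lambda$: first the case $\lambda=0$, verified directly via $\res_zA(z+\partial)B(z)=\res_\partial(A\circ B)$ and the symbol of $B^*$, then a base change to $\mc A[[\lambda]]$ applied to the operator with symbol $A(z+\lambda)$, and finally part (a) with $x=\lambda$ to move the $\lambda$ from $B^*$ to $A$. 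Your decomposition buys you freedom from having to interpret the formal variable $x$ of (a) as the operator $\lambda+\partial$ acting on the coefficients of $B$ all at once (the one genuinely delicate point in the paper's one-liner, since $\partial$ does not commute with the $B_n$), at the cost of an extra direct verification and the check -- which you correctly flag and which does hold -- that $A(z+\lambda)\in\mc A[[\lambda]]((z^{-1}))$ is a legitimate symbol over the differential algebra $\mc A[[\lambda]]$ with $\partial\lambda=0$. Both proofs are sound; the paper's is shorter, yours makes every expansion convention explicit.
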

\begin{proof}
Part (a) follows from the formula of  integration by parts,
$\res_z A(z)\partial_z B(z)=-\res_zB(z)\partial_zA(z)$,
and the Taylor expansion $a(z+x)=e^{x\partial_z}a(z)$.
Part (b) is a special case of (a), when $x=\lambda+\partial$,
acting on the coefficients of $B$.
\end{proof}

\subsection{Poisson vertex algebras}\label{sub:pva}

\begin{definition}\label{def:lambda}
A $\lambda$-\emph{bracket} on a differential algebra $\mc V$ 
is an $\mb F$-linear map
$\{\cdot\,_\lambda\,\cdot\}:\,\mc V\otimes\mc V\to\mc V[\lambda]$
satisfying ($f,g,h\in\mc V$)
\begin{enumerate}[(i)]
\item
\emph{sesquilinearity}:
$\{\partial f_\lambda g\}=-\lambda\{f_\lambda g\}$,
$\{f_\lambda\partial g\}=(\lambda+\partial)\{f_\lambda g\}$,
\item
\emph{Leibniz rules}:
$\{f_\lambda gh\}=\{f_\lambda g\}h+\{f_\lambda h\}g$,
$\{fh_\lambda g\}=\{f_{\lambda+\partial}g\}_{\rightarrow}\!h+\{h_{\lambda+\partial}g\}_{\rightarrow}\!f$.
\end{enumerate}
\end{definition}
Here and further we use the following notation:
if $\{f_\lambda g\}=\sum_{n\in\mb Z_+}\lambda^n c_n$,
then $\{f_{\lambda+\partial}g\}_{\rightarrow}h=\sum_{n\in\mb Z_+}c_n(\lambda+\partial)^nh$.
\begin{definition}\label{def:pva}
A \emph{Poisson vertex algebra} (PVA) is a differential algebra $\mc V$ endowed 
with a $\lambda$-bracket $\{\cdot\,_\lambda\,\cdot\}$
satisfying ($f,g,h\in\mc V$)
\begin{enumerate}[(i)]
\setcounter{enumi}{2}
\item
\emph{skew-symmetry}:
$\{g_\lambda f\}=-\{f_{-\lambda-\partial}g\}$,
\item
\emph{Jacobi identity}:
$\{f_\lambda\{g_\mu h\}\}-\{g_\mu\{f_\lambda h\}\}=\{\{f_\lambda g\}_{\lambda+\mu}h\}$.
\end{enumerate}
\end{definition}
In the skew-symmetry we use the following notation:
if $\{f_\lambda g\}=\sum_{n\in\mb Z_+}\lambda^n c_n$,
then $\{f_{-\lambda-\partial}g\}=\sum_{n\in\mb Z_+}(-\lambda-\partial)^nc_n$.
\begin{definition}\label{def:CFTtype}
Let $\mc V$ be a PVA. A \emph{Virasoro element} $T\in\mc V$ with \emph{central charge} $c\in\mb F$
is such that
$\{T_\lambda T\}=(2\lambda+\partial)T+c\lambda^3$,
a $T$-\emph{eigenvector} $a\in\mc V$ of \emph{conformal weight} $\Delta_a\in\mb F$
is such that
$\{T_\lambda a\}=(\Delta_a\lambda+\partial)a+O(\lambda^2)$,
and a \emph{primary element} $a\in\mc V$ of conformal weight $\Delta_a$
is such that $\{T_\lambda a\}=(\Delta_a\lambda+\partial)a$.
By definition, a \emph{PVA of CFT} (conformal field theory) \emph{type} is generated,
as a differential algebra, by a Virasoro element 
and a finite number of primary elements.
\end{definition}

Let $\mc I\subset\mc V$ be a differential ideal.
We say that $\mc I$ is a \emph{PVA ideal} if $\{\mc I_\lambda\mc V\}\subset\mc I[\lambda]$ 
(by skew-symmetry and the fact that $\mc I$ is a differential ideal we also have
$\{\mc V_\lambda\mc I\}\subset\mc I[\lambda]$).
Given a PVA ideal $\mc I$, we can consider the induced Poisson
vertex algebra structure over the differential algebra $\quot{\mc V}{\mc I}$,
which is called the \emph{quotient PVA}. 

In this paper we consider PVA structures on 
the algebra $R_I=\mb F[u_i^{(n)}\mid i\in I,n\in\mb Z_+]$ 
of differential polynomials in the variables $\{u_i\}_{i\in I}$,
where $I$ is an index set (possibly infinite).
The derivation $\partial$ is defined by $\partial(u_i^{(n)})=u_i^{(n+1)}$, $i\in I,n\in\mb Z_+$.
Note that on $R_I$ we have the following commutation relations:
$\left[\frac{\partial}{\partial u_i^{(n)}},\partial\right]=\frac{\partial}{\partial u_i^{(n-1)}}$,
where the RHS is considered to be zero if $n=0$.

\begin{theorem}[{\cite[Theorem 1.15]{BDSK09}}]\label{master}
Let $\mc V=R_I$
and $H=\big(H_{ij}(\lambda)\big)_{i,j\in I}\in\Mat_{I\times I}\mc V[\lambda]$.
\begin{enumerate}[(a)]
\item 
There is a unique 
$\lambda$-bracket $\{\cdot\,_\lambda\,\cdot\}_H$ on $\mc V$,
such that $\{u_i{}_\lambda u_j\}_H=H_{ji}(\lambda)$ for every $i,j\in I$,
and it is given by the following Master Formula
\begin{equation}\label{masterformula}
\{f_\lambda g\}_H=
\sum_{\substack{i,j\in I\\m,n\in\mb Z_+}}\frac{\partial g}{\partial u_j^{(n)}}(\lambda+\partial)^n
H_{ji}(\lambda+\partial)(-\lambda-\partial)^m\frac{\partial f}{\partial u_i^{(m)}}\,.
\end{equation}
\item 
The $\lambda$-bracket \eqref{masterformula} on $\mc V$
is skew-symmetric if and only if skew-symmetry
holds on generators ($i,j\in I$):
\begin{equation}\label{skewsimgen}
\{u_i{}_\lambda u_j\}_H=-\{u_j{}_{-\lambda-\partial}u_i\}_H\,,
\end{equation}
and this is equivalent to skewadjointness of $H$.
\item 
Assuming that the skew-symmetry condition \eqref{skewsimgen} holds, 
the $\lambda$-bracket \eqref{masterformula} satisfies the Jacobi identity, 
thus making $\mc V$ a PVA, provided that the Jacobi identity holds 
on any triple of generators ($i,j,k\in I$):
\begin{equation}\label{jacobigen}
\{u_i{}_\lambda\{u_j{}_\mu u_k\}_H\}_H
-\{u_j{}_\mu\{u_i{}_\lambda u_k\}_H\}_H
=
\{{\{u_i{}_\lambda u_j\}_H}_{\lambda+\mu}u_k\}_H
\,.
\end{equation}
\end{enumerate}
\end{theorem}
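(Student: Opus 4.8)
The plan is to handle the three parts in order, proving in (a) both uniqueness and existence of the bracket, and then using the explicit Master Formula to reduce (b) and (c) to conditions on the generators.

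For uniqueness in (a), I would show that the axioms force the value of $\{f_\lambda g\}$ once $\{u_i{}_\lambda u_j\}_H=H_{ji}(\lambda)$ is fixed. Reducing the second entry first: right sesquilinearity gives $\{f_\lambda u_j^{(n)}\}=(\lambda+\partial)^n\{f_\lambda u_j\}$, and the right Leibniz rule makes $\{f_\lambda\cdot\}$ a derivation, so $\{f_\lambda g\}=\sum_{j,n}\frac{\partial g}{\partial u_j^{(n)}}(\lambda+\partial)^n\{f_\lambda u_j\}$ for every differential polynomial $g$. Reducing the first entry of $\{f_\lambda u_j\}$ by the same device---left sesquilinearity $\{u_i^{(m)}{}_\lambda u_j\}=(-\lambda)^m\{u_i{}_\lambda u_j\}$ and the left Leibniz rule, with the arrow convention letting $(\lambda+\partial)$ act to the right---turns it into $\sum_{i,m}H_{ji}(\lambda+\partial)(-\lambda-\partial)^m\frac{\partial f}{\partial u_i^{(m)}}$. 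Substituting one expansion into the other produces exactly the Master Formula. For existence I would simply define $\{\cdot_\lambda\cdot\}_H$ by that formula: it lands in $\mc V[\lambda]$ because only finitely many partial derivatives of the polynomials $f,g$ are nonzero; evaluating at $f=u_i$, $g=u_j$ collapses both sums to $H_{ji}(\lambda)$; and the two sesquilinearity identities and two Leibniz rules follow by direct manipulation, the only nonroutine ingredient being the commutation relation $[\frac{\partial}{\partial u_i^{(n)}},\partial]=\frac{\partial}{\partial u_i^{(n-1)}}$ recorded above, which governs how $\partial$ passes across the partial derivatives.

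Parts (b) and (c) share the same shape: one implication is trivial by specializing to generators, and the work is in the converse, that a property holding on generators propagates to all of $\mc V$. For (b), I would observe that the combination $\{f_\lambda g\}_H+\{g_{-\lambda-\partial}f\}_H$, expanded through the Master Formula, is assembled from the generator combinations $\{u_i{}_\lambda u_j\}_H+\{u_j{}_{-\lambda-\partial}u_i\}_H$ by the very sesquilinearity and Leibniz manipulations just used; hence it vanishes identically as soon as it vanishes on generators, which is \eqref{skewsimgen}. Translating \eqref{skewsimgen} into skewadjointness of $H$ is then bookkeeping: writing $H_{ij}(\lambda)=\sum_n\lambda^n h_{ij,n}$, the condition $\{u_i{}_\lambda u_j\}_H=-\{u_j{}_{-\lambda-\partial}u_i\}_H$ reads $H_{ji}(\lambda)=-\sum_n(-\lambda-\partial)^n h_{ij,n}$, which is precisely $H^*=-H$ for the adjoint of Section \ref{sub:pseudo}.

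The main obstacle is the converse in (c). The strategy is to introduce the Jacobiator $J(f,g,h)=\{f_\lambda\{g_\mu h\}_H\}_H-\{g_\mu\{f_\lambda h\}_H\}_H-\{{\{f_\lambda g\}_H}_{\lambda+\mu}h\}_H$ and to prove that it satisfies Leibniz-type rules separately in each of its three arguments. Granting this, $J$ is determined by its values on triples of generators through the same reduction as in (a), so hypothesis \eqref{jacobigen} forces $J\equiv0$. Establishing the Leibniz behavior of $J$ is where the effort lies: one expands each of the three double brackets via the Master Formula and the already-proved axioms, and verifies that the many terms carrying second derivatives $\frac{\partial^2 h}{\partial u_i^{(m)}\partial u_j^{(n)}}$, together with the cross terms in $\lambda$, $\mu$ and $\lambda+\mu$, cancel in each slot. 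The skew-symmetry from (b) is used to keep the three slots interchangeable and so cut down the number of independent checks. This is a long but entirely mechanical computation, with no conceptual difficulty beyond tracking where the operators $(\lambda+\partial)$ and $(\mu+\partial)$ act.
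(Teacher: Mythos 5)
The paper itself gives no proof of this statement: it is imported verbatim as Theorem 1.15 of \cite{BDSK09}, so there is nothing internal to compare against. Your outline is the standard argument from that reference and is correct in all three parts: uniqueness and existence in (a) via sesquilinearity and the Leibniz rules reducing everything to generators, (b) by matching the Master Formula under $\lambda\mapsto-\lambda-\partial$ with skewadjointness of $H$, and (c) by showing the Jacobiator is a derivation (with the appropriate sesquilinearity) in each slot — the cross terms and second-derivative terms cancelling, with skew-symmetry handling the first two arguments — so that \eqref{jacobigen} on generators forces it to vanish identically.
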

\begin{remark}
Theorem \ref{master} holds in the more general situation when $\mc V$
is any algebra of differential functions in the variables $\{u_i\}_{i\in I}$
(see \cite{BDSK09} for a definition), but in the present paper we will only consider
the case when $\mc V=R_I$.
\end{remark}
\begin{definition}\label{hamop}
A \emph{Poisson structure} on $\mc V$ is a matrix differential operator
$H(\partial)\in\Mat_{\ell\times\ell}\mc V[\partial]$
such that the corresponding $\lambda$-bracket $\{\cdot\,_\lambda\,\cdot\}_H$
defines a PVA structure on $\mc V$.
\end{definition}
\begin{example}\label{gfzN}
On $R_1=\mb F[v,v^{\prime},v^{\prime\prime},\dots]$,
we have the so-called \emph{Gardner-Faddeev-Zakharov} (GFZ) PVA ,
given by
\begin{equation}\label{gfz_bracket}
\{ v_\lambda v \} = \lambda\,,
\end{equation}
and the corresponding Poisson structure is $H(\partial)=\partial$.
In general, for $N\geq1$, 
we consider the algebra of differential polynomials in $N$ variables
$R_N=\mb F[v_i^{(n)}\mid i\in\{1,\dots,N\},n\in\mb Z_+]$
with the generalized GFZ $\lambda$-bracket
%
$$
\{ {v_j}_\lambda {v_i} \} = s_{ij}\lambda
\,\,,\,\,\,\,
i,j\in\{1,\dots,N\}\,,
$$
where $S=(s_{ij})_{i,j=1}^N$ is a symmetric matrix over $\mb F$.
The corresponding Poisson structure is $H(\partial)=S\partial$.
\end{example}
\begin{example}\label{vir-mag}
On $R_1=\mb F[u,u^{\prime},u^{\prime\prime},\dots]$ we have 
the \emph{Virasoro-Magri} PVA, given by ($c\in\mb F$)
\begin{equation}\label{virasoro_bracket}
\{u_\lambda u\}=(\partial+2\lambda)u+c\lambda^3\,,
\end{equation}
with Poisson structure $H(\partial)=u^\prime+2u\partial+c\partial^3$.
\end{example}

\subsection{Hamiltonian equations}\label{sub:ham_structures}

The following proposition is immediate to check:
\begin{proposition}\label{pvahamop}
Let $\mc V$ be a PVA. 
Then we have a well defined Lie algebra
bracket on the quotient space $\quot{\mc V}{\partial\mc V}$:
\begin{equation}\label{lambda=0}
\{\tint f,\tint g\}=\tint \left.\{f_\lambda g\}\right|_{\lambda=0}\,.
\end{equation}
Here and further, $\tint:\mc V\to\quot{\mc V}{\partial\mc V}$ is the canonical quotient map.
Moreover, we have a well defined Lie algebra action of $\quot{\mc V}{\partial\mc V}$ on $\mc V$
by derivations of the commutative associative product on $\mc V$, commuting with $\partial$,
given by 
$$
\{\tint f,g\}=\{f_\lambda g\}|_{\lambda=0}\,.
$$
\end{proposition}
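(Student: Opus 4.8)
The plan is to verify everything directly from the PVA axioms of Definitions \ref{def:lambda} and \ref{def:pva}, evaluated at $\lambda=0$ (and $\mu=0$). The guiding principle is twofold: setting $\lambda=0$ suppresses the powers of $\lambda$ produced by sesquilinearity, and $\tint\partial g=0$ for all $g\in\mc V$, so that any term carrying a surviving power of $\partial$ (or of $\lambda$) is annihilated after applying $\tint$. First I would settle well-definedness of the action $\tint f\mapsto\{f_\lambda\,\cdot\,\}|_{\lambda=0}$: replacing $f$ by $\partial f$ gives, by sesquilinearity (i), $\{\partial f_\lambda g\}|_{\lambda=0}=-\lambda\{f_\lambda g\}|_{\lambda=0}=0$, so the map depends only on the class $\tint f\in\quot{\mc V}{\partial\mc V}$.

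That the action is by derivations commuting with $\partial$ is then read off from the remaining axioms at $\lambda=0$: the first Leibniz rule (ii) gives $\{f_\lambda gh\}|_{\lambda=0}=\{f_\lambda g\}|_{\lambda=0}\,h+\{f_\lambda h\}|_{\lambda=0}\,g$, and the second relation in (i) gives $\{f_\lambda\partial g\}|_{\lambda=0}=(\lambda+\partial)\{f_\lambda g\}|_{\lambda=0}=\partial\big(\{f_\lambda g\}|_{\lambda=0}\big)$. For the bracket \eqref{lambda=0} on $\quot{\mc V}{\partial\mc V}$, bilinearity is immediate from $\mb F$-linearity of the $\lambda$-bracket; independence of the representative of $\tint g$ follows from the same $\partial$-computation after applying $\tint$ (the term $\partial(\cdots)$ dies), and independence of the representative of $\tint f$ is the well-definedness of the action already shown. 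Skew-symmetry is next: writing $\{f_\lambda g\}=\sum_n\lambda^n c_n$, axiom (iii) gives $\{g_\lambda f\}=-\{f_{-\lambda-\partial}g\}=-\sum_n(-\lambda-\partial)^n c_n$; evaluating at $\lambda=0$ and applying $\tint$, every term with $n\geq1$ lies in $\partial\mc V$ and is killed, leaving $\{\tint g,\tint f\}=-\tint c_0=-\{\tint f,\tint g\}$.

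The crux is the Jacobi identity, which I would obtain by setting $\lambda=\mu=0$ in axiom (iv). On the left the two double brackets become $\{f_0\{g_0 h\}\}-\{g_0\{f_0 h\}\}$, i.e. $\{\tint f,\{\tint g,h\}\}-\{\tint g,\{\tint f,h\}\}$ in the notation of the action. For the right-hand side $\{{\{f_\lambda g\}}_{\lambda+\mu}h\}$: expanding $\{f_\lambda g\}=\sum_n\lambda^n c_n$, the prefactors $\lambda^n$ with $n\geq1$ vanish at $\lambda=0$, and then $\mu=0$ turns the outer bracket variable $\lambda+\mu$ into $0$, so this evaluates to $\{(c_0)_0 h\}=\{\tint f,\tint g\}\cdot h$, using that the action is well defined on classes. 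The resulting identity $\{\tint f,\{\tint g,h\}\}-\{\tint g,\{\tint f,h\}\}=\{\{\tint f,\tint g\},h\}$ is precisely the statement that $\tint f\mapsto\{\tint f,\,\cdot\,\}$ is a Lie algebra homomorphism into $\Der\mc V$; applying $\tint$ to it (and using $\tint\{\tint g,h\}=\{\tint g,\tint h\}$) together with the skew-symmetry just proved yields the Jacobi identity for \eqref{lambda=0} on $\quot{\mc V}{\partial\mc V}$.

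I expect the only step requiring genuine care to be the interpretation of the right-hand side of the Jacobi axiom and the verification that the two evaluations $\lambda=0$ and $\mu=0$ interact correctly with the expansion $\sum_n\lambda^n c_n$ — in particular that the \emph{same} $\lambda$ appears in the prefactor $\lambda^n$ and in the outer bracket variable $\lambda+\mu$, so that both collapse to the single contribution $\{(c_0)_0h\}$. Everything else is a direct reading of the axioms at $\lambda=0$ combined with $\tint\partial=0$, which is why the statement is indeed immediate.
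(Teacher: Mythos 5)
Your proof is correct and is exactly the direct verification from the axioms that the paper omits (it states only that the proposition ``is immediate to check''). All the key points are handled properly: sesquilinearity at $\lambda=0$ for well-definedness, $\tint\circ\partial=0$ for skew-symmetry on the quotient, and the correct reading of $\{\{f_\lambda g\}_{\lambda+\mu}h\}=\sum_n\lambda^n\{{c_n}_{\lambda+\mu}h\}$ so that setting $\lambda=\mu=0$ leaves only $\{{(c_0)}_0h\}$.
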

\begin{definition}\label{hamsys}
Let $\mc V$ be a PVA.
The \emph{Hamiltonian equation} 
with \emph{Hamiltonian functional} $\tint h\in\quot{\mc V}{\partial\mc V}$ is
\begin{equation}\label{hameq}
\frac{du}{dt}
=\{\tint h,u\}
\end{equation}
An \emph{integral of motion} 
for the Hamiltonian equation \eqref{hameq} is an element $\tint f\in\quot{\mc V}{\partial\mc V}$
such that $\{\tint h,\tint f\}=0$
(or, equivalently, $\frac{d}{dt}\tint f=0$).
Equation \eqref{hameq} is called \emph{integrable}
if there exists an infinite sequence $\tint h_0=\tint h,\,\tint h_1,\,\tint h_2,\dots$,
of linearly independent integrals of motion in involution: $\{\tint h_m,\tint h_n\}=0$ for all $m,n\in\mb Z_+$.
The corresponding \emph{integrable hierarchy of Hamiltonian equations} is
\begin{equation}\label{eq:hierarchy}
\frac{du}{dt_n}=\{\tint h_n,u\}\,,\,\,\ n\in\mb Z_+\,.
\end{equation}
\end{definition}
In the special case when $\mc V=R_\ell$
and $H\in\Mat_{\ell\times\ell}\mc V[\partial]$ is a Poisson structure,
the Lie bracket \eqref{lambda=0} on $\quot{\mc V}{\partial\mc V}$ takes the usual form 
(see \eqref{masterformula}):
$$
\{\tint f,\tint g\}_H
=\sum_{i,j\in I}\int\frac{\delta g}{\delta u_j}H_{ji}(\partial)\frac{\delta f}{\delta u_i}
\,,
$$
where $\frac{\delta f}{\delta u_i}$ denotes 
the \emph{variational derivative} of $f\in\mc V$ with respect to $u_i$,
\begin{equation}\label{eq:def_varder}
\frac{\delta f}{\delta u_i}=\sum_{n\in\mb Z_+}(-\partial)^n\frac{\partial f}{\partial u_i^{(n)}}\,,
\end{equation}
(it is well defined on $\quot{\mc V}{\partial\mc V}$ since $\frac{\delta}{\delta u_i}\circ\partial=0$),
and the Hamiltonian equation associated to the Hamiltonian functional $\tint h\in\quot{\mc V}{\partial\mc V}$
is, as usual,
$$
\frac{du_i}{dt}
=\sum_{j\in I}H_{ij}(\partial)\frac{\delta h}{\delta u_j},\ i\in I
\,.
$$

\subsection{Bi-PVA and the Lenard-Magri scheme of integrability}
\label{sub:lenard_scheme}
\begin{definition}\label{def:compatible}
Two PVA $\lambda$-brackets $\{\cdot\,_\lambda\,\cdot\}_0$ and $\{\cdot\,_\lambda\,\cdot\}_1$
on a differential algebra $\mc V$ are \emph{compatible}
if any their linear combination (or, equivalently, their sum) is a PVA $\lambda$-bracket.
We say in this case that $\mc V$ is a \emph{bi-PVA}.
\end{definition}
For example, the GFZ $\lambda$-bracket \eqref{gfz_bracket} 
and the Virasoro-Magri $\lambda$-bracket \eqref{virasoro_bracket}
are compatible.

According to the \emph{Lenard-Magri scheme of integrability}, \cite{Mag78},
in order to obtain an integrable hierarchy of Hamiltonian equations,
one needs to find a sequence 
$\{\tint h_n\}_{n\in\mb Z_+}\subset\quot{\mc V}{\partial\mc V}$ 
spanning an infinite dimensional space,
such that
\begin{equation}\label{lenardrecursion}
\{\tint h_n,u\}_1
=\{\tint h_{n+1},u\}_0
\,\,\text{ for }\,\, n\in\mb Z_+,\,u\in\mc V
\,.
\end{equation}
If this is the case, then 
$\{\tint h_m,\tint h_n\}_\delta=0$, for all $m,n\in\mb Z_+$, $\delta=0,1$.
Hence, we get the corresponding integrable hierarchy of Hamiltonian equations \eqref{eq:hierarchy}.
Moreover, if $\{\tint h_0,u\}_0=0$,
and $\{\tint g_n\}_{n\in\mb Z_+}\subset\quot{\mc V}{\partial\mc V}$ 
is another sequence satisfying the Lenard-Magri recursion
$\{\tint g_n,u\}_1=\{\tint g_{n+1},u\}_0$, for all $n\in\mb Z_+$ and $u\in\mc V$,
then the two sequences of integrals of motion are compatible:
$\{\tint h_m,\tint g_n\}_\delta=0$, for all $m,n\in\mb Z_+$, $\delta=0,1$, \cite[Sec.2.1]{BDSK09}.

\subsection{Non-local Poisson vertex algebras}
\label{sub:non_local}

For a vector space $V$, we shall use the following notation:
$$
V_{\lambda,\mu}:=V[[\lambda^{-1},\mu^{-1},(\lambda+\mu)^{-1}]][\lambda,\mu]\,,
$$
namely, the quotient of the $\mb F[\lambda,\mu,\nu]$-module
$V[[\lambda^{-1},\mu^{-1},\nu^{-1}]][\lambda,\mu,\nu]$
by the submodule 
$(\nu-\lambda-\mu)V[[\lambda^{-1},\mu^{-1},\nu^{-1}]][\lambda,\mu,\nu]$.
We have the natural embedding 
$\iota_{\mu,\lambda}:\,V_{\lambda,\mu}\hookrightarrow V((\lambda^{-1}))((\mu^{-1}))$
defined by expanding the negative powers of $\nu=\lambda+\mu$
by geometric series in the domain $|\mu|>|\lambda|$.

Let $\mc V$ be a differential algebra.
A \emph{non-local} $\lambda$-\emph{bracket} on $\mc V$ is an $\mb F$-linear map
$\{\cdot\,_\lambda\,\cdot\}:\,\mc V\otimes \mc V\to \mc V((\lambda^{-1}))$
satisfying the sesquilinearity conditions 
and the Leibniz rules,
as in Definition \ref{def:lambda}.
It is called \emph{skew-symmetric} if the skew-symmetry condition in Definition \ref{def:pva} 
holds as well.
The term $\{g_{-\lambda-\partial}f\}$ in the RHS of the skewsymmetry condition
should be interpreted as follows:
we move $-\lambda-\partial$ to the left and
we expand it in non-negative powers of $\partial$,
acting on the coefficients of the $\lambda$-bracket.
Clearly, from skew-symmetry and the left Leibniz rule, we also have the 
right Leibniz rule, which should be interpreted in a similar way.

The non-local skew-symmetric $\lambda$-bracket $\{\cdot\,_\lambda\,\cdot\}$ 
is called \emph{admissible} if
$$
\{f_\lambda\{g_\mu h\}\}\in\mc V_{\lambda,\mu}
\qquad
\text{ for all } f,g,h\in\mc V\,.
$$
Here we are identifying the space $\mc V_{\lambda,\mu}$
with its image in $\mc V((\lambda^{-1}))((\mu^{-1}))$ via the embedding $\iota_{\mu,\lambda}$.
Note that, from skew-symmetry, 
we also have that 
$\{g_\mu\{f_\lambda h\}\}\in\mc V_{\lambda,\mu}$ 
and $\{\{f_\lambda g\}_{\lambda+\mu} h\}\in\mc V_{\lambda,\mu}$.
Therefore, the Jacobi identity can be understood as an equality in the space $\mc V_{\lambda,\mu}$.
\begin{definition}[\cite{DSK13}]\label{20130513:def}
A \emph{non-local Poisson vertex algebra} (PVA) is a differential algebra $\mc V$
endowed with an admissible non-local skew-symmetric $\lambda$-bracket
$\{\cdot\,_\lambda\,\cdot\}:\,\mc V\otimes \mc V\to \mc V((\lambda^{-1}))$
satisfying the Jacobi identity.
\end{definition}

\subsection{Dirac reduction}
\label{sub:non_local_2}

Let $\mc V$ be a (non-local) Poisson vertex algebra 
with $\lambda$-bracket $\{\cdot\,_{\lambda}\,\cdot\}$.
Let $\theta_1,\dots,\theta_m$ be elements of $\mc V$,
and let $\mc I=\langle\theta_1,\dots,\theta_m\rangle_{\mc V}$
be the differential ideal generated by them.
Consider the matrix pseudodifferential operator
$C(\partial)=(C_{\alpha\beta}(\partial))_{\alpha,\beta=1}^m
\in\Mat_{m\times m}\mc V((\partial^{-1}))$,
whose symbol is
\begin{equation}\label{eq:C}
C_{\alpha\beta}(\lambda)=\{\theta_{\beta}{}_{\lambda}\theta_{\alpha}\}\,.
\end{equation}
By the skew-symmetry condition,
the pseudodifferential operator $C(\partial)$ is skewadjoint.
We shall assume that the matrix pseudodifferential operator $C(\partial)$ is invertible, 
and we denote its inverse by
$C^{-1}(\partial)=\big((C^{-1})_{\alpha\beta}(\partial)\big)_{\alpha,\beta=1}^m$.
\begin{definition}\label{20130514:def}
The \emph{Dirac modification} of the PVA $\lambda$-bracket $\{\cdot\,_{\lambda}\,\cdot\}$,
associated to the elements $\theta_1,\dots,\theta_m$,
is the map
$\{\cdot\,_{\lambda}\,\cdot\}^D:\,\mc V\times\mc V\to\mc V((\lambda^{-1}))$
given by ($a,b\in\mc V$):
\begin{equation}\label{eq:dirac}
\{a_{\lambda}b\}^D
=\{a_{\lambda}b\}
-\sum_{\alpha,\beta=1}^m
\{{\theta_{\beta}}_{\lambda+\partial}b\}_{\to}
(C^{-1})_{\beta\alpha}(\lambda+\partial)
\{a_{\lambda}\theta_{\alpha}\}\,.
\end{equation}
\end{definition}
\begin{theorem}[{\cite[Theorem 2.2]{DSKV13c}}]\phantomsection\label{prop:dirac}
\begin{enumerate}[(a)]
\item
The Dirac modification $\{\cdot\,_\lambda\,\cdot\}^D$
is a PVA $\lambda$-bracket on $\mc V$.
\item
All the elements $\theta_i,\,i=1,\dots,m$, are central 
with respect to the Dirac modified $\lambda$-bracket:
$\{a_\lambda\theta_i\}^D=\{{\theta_i}_\lambda a\}^D=0$
for all $i=1,\dots,m$ and $a\in\mc V$.
\item
The differential ideal $\mc I=\langle\theta_1,\dots,\theta_m\rangle_{\mc V}\subset\mc V$,
generated by $\theta_1,\dots,\theta_m$,
is such that
$\{\mc I\,_\lambda\,\mc V\}^D$,
$\{\mc V\,_\lambda\,\mc I\}^D\,
\subset\mc I((\lambda^{-1}))$.
\end{enumerate}
The quotient space $\quot{\mc V}{\mc I}$ is a (non-local) PVA,
with $\lambda$-bracket induced by $\{\cdot\,_\lambda\,\cdot\}^D$,
which we call the \emph{Dirac reduction} of $\mc V$ 
by the constraints $\theta_1,\dots,\theta_m$.
\end{theorem}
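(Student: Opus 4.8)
The plan is to verify directly that $\{\cdot\,_\lambda\,\cdot\}^D$ satisfies all the axioms of a (non-local) PVA $\lambda$-bracket, to establish that the $\theta_i$ become central, and then to read off parts (b), (c) and the quotient statement. I would begin with the two sesquilinearity conditions and the two Leibniz rules. These reduce to the corresponding properties of the original bracket $\{\cdot\,_\lambda\,\cdot\}$ together with a direct manipulation of the correction term in \eqref{eq:dirac}: the scalar $-\lambda$ (respectively the operator $\lambda+\partial$) produced by sesquilinearity in the entry being differentiated can be pulled through the operator $(C^{-1})_{\beta\alpha}(\lambda+\partial)$ and the factor $\{\theta_\beta{}_{\lambda+\partial}b\}_\to$, using the definition of the arrow notation. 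This is routine but requires care in tracking how $\partial$ acts to the right across the three factors of the correction.

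Next I would prove centrality in the right entry, $\{a_\lambda\theta_i\}^D=0$. Setting $b=\theta_i$ in \eqref{eq:dirac}, the factor $\{\theta_\beta{}_{\lambda+\partial}\theta_i\}_\to$ is, by \eqref{eq:C}, exactly the symbol $C_{i\beta}(\lambda+\partial)$; by the composition rule \eqref{eq:mult_symbol} the sum over $\beta$ of $C_{i\beta}(\lambda+\partial)$ composed with $(C^{-1})_{\beta\alpha}(\lambda+\partial)$ equals $\delta_{i\alpha}$, since $C\circ C^{-1}=\mbb1_m$. Hence the correction term collapses to $\{a_\lambda\theta_i\}$ and cancels the first term. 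For skew-symmetry, $\{b_\lambda a\}^D=-\{a_{-\lambda-\partial}b\}^D$, the principal terms match by skew-symmetry of $\{\cdot\,_\lambda\,\cdot\}$, while the two correction terms match after the substitution $\lambda\mapsto-\lambda-\partial$, using that $C^{-1}$ is skewadjoint (a consequence of the skewadjointness of $C$, which holds by \eqref{eq:C} and skew-symmetry). Once skew-symmetry is available, centrality in the left entry, $\{\theta_i{}_\lambda a\}^D=0$, follows by applying it to the identity $\{a_{-\lambda-\partial}\theta_i\}^D=0$.

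The main obstacle is the Jacobi identity for $\{\cdot\,_\lambda\,\cdot\}^D$, together with admissibility, so that the triple brackets lie in $\mc V_{\lambda,\mu}$. I would expand each of the three double brackets in the Jacobi identity into a principal term coming from $\{\cdot\,_\lambda\,\cdot\}$ plus several correction terms; the principal terms cancel by the Jacobi identity of the original bracket. The remaining correction terms are then grouped and cancelled, mirroring the classical proof that the finite-dimensional Dirac bracket is Poisson. The key inputs are: the Jacobi identity of $\{\cdot\,_\lambda\,\cdot\}$ applied to triples containing one or two of the $\theta_\alpha$; the relation $C(\lambda+\partial)\circ C^{-1}(\lambda+\partial)=\mbb1_m$ together with the consequences of differentiating $\sum_\beta C_{\gamma\beta}(C^{-1})_{\beta\alpha}=\delta_{\gamma\alpha}$; and the sesquilinearity and Leibniz rules established above. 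I expect this to be a lengthy bookkeeping computation, the difficulty lying entirely in keeping track of the order in which $\lambda+\partial$, $\lambda+\mu+\partial$ and the operators $(C^{-1})(\cdot)$ act, rather than in any conceptual subtlety. Admissibility is checked in parallel: it follows from admissibility of $\{\cdot\,_\lambda\,\cdot\}$ and the fact that composing with $C^{-1}$ keeps the expressions within $\mc V_{\lambda,\mu}$.

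Finally, part (c) and the quotient statement follow formally. Every element of $\mc I=\langle\theta_1,\dots,\theta_m\rangle_{\mc V}$ is an $\mc V$-linear combination of the $\theta_i^{(n)}$, and by the right Leibniz rule and sesquilinearity $\{a_\lambda(f\theta_i^{(n)})\}^D=\{a_\lambda f\}^D\,\theta_i^{(n)}+(\lambda+\partial)^n\{a_\lambda\theta_i\}^D\, f$; here the second summand vanishes by centrality and the first lies in $\mc I((\lambda^{-1}))$, giving $\{\mc V_\lambda\mc I\}^D\subset\mc I((\lambda^{-1}))$, whence $\{\mc I_\lambda\mc V\}^D\subset\mc I((\lambda^{-1}))$ by skew-symmetry. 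Thus $\mc I$ is a PVA ideal for $\{\cdot\,_\lambda\,\cdot\}^D$, and by part (a) together with the quotient construction for (non-local) PVAs the induced $\lambda$-bracket makes $\mc V/\mc I$ a (non-local) PVA, as claimed.
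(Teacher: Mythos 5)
This theorem is not proved in the paper at all: it is imported verbatim from \cite{DSKV13c} (Theorem 2.2 there), so there is no in-paper argument to compare your proposal against; what you have written is essentially a reconstruction of the proof in that reference. The steps you make explicit are correct: centrality in the right entry does follow from $\{\theta_\beta{}_{\lambda+\partial}\theta_i\}_\to=C_{i\beta}(\lambda+\partial)$ together with the symbol composition rule \eqref{eq:mult_symbol} and $C\circ C^{-1}=\mbb1_m$; skew-symmetry of the correction term does rest on the skewadjointness of $C^{-1}(\partial)$, which follows from that of $C(\partial)$; centrality in the left entry then comes for free; and part (c) plus the quotient statement are exactly the Leibniz-rule and sesquilinearity argument you give. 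The only substantive caveat is that the Jacobi identity and admissibility are described rather than verified: the cancellation scheme you outline (principal terms cancel by the Jacobi identity of the original bracket; correction terms cancel using the bracket of the entries of $C^{-1}$ extracted from $\{a_\lambda (C\circ C^{-1})_{\gamma\alpha}\}=0$) is the right one and is what the cited proof carries out, but as written it is a plan, not a computation. For admissibility you also implicitly use that $\mc V_{\lambda,\mu}$ is stable under application of $(C^{-1})_{\beta\alpha}(\lambda+\partial)$, $(C^{-1})_{\beta\alpha}(\mu+\partial)$ and $(C^{-1})_{\beta\alpha}(\lambda+\mu+\partial)$; this is true but requires the structural lemmas on $\mc V_{\lambda,\mu}$ from \cite{DSK13}, so it is worth flagging rather than asserting in passing.
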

\begin{example}\label{gfz_reduced}
Let $N\geq1$ and let us consider the generalized GFZ PVA $R_N$ defined 
in Example \ref{gfzN} (associated to the symmetric matrix $S=\big(s_{ij}\big)_{i,j=1}^N$).
Let $\theta=v_1+\dots+v_N\in R_N$. We have $\{\theta_\lambda\theta\}=\lambda s$, where
$s=\sum_{h,k=1}^N s_{hk}\in\mb F$.
Provided that $s\neq0$, we can
consider the Dirac reduction of $R_N$ by $\theta$,
and we can identify it with
$R_{N-1}=\mb F[v_i^{(n)}\mid i\in\{1,\dots,N-1\},n\in\mb Z_+]$. 
The corresponding Dirac modified $\lambda$-bracket \eqref{eq:dirac} on $R_{N-1}$ is given by
\begin{equation}\label{formula}
\{{v_i}_\lambda v_j\}^D=\left(s_{ij}-\frac{s_is_j}{s}\right)\lambda\,,
\end{equation}
for all $i,j=1,\ldots,N-1$, where $s_i=\sum_{k=1}^Ns_{ik}\in\mb F$.
\end{example}
\begin{lemma}\label{20131002:lem1}
Let $\mc V_1$ and $\mc V_2$ be PVAs and let
$\varphi:\mc V_1\rightarrow\mc V_2$
be a PVA homomorphism.
Let $\theta_1,\dots\theta_m\in\mc V_1$
be such that the matrix $C(\partial)\in\Mat_{m\times m}\mc V_1((\partial^{-1}))$
defined in \eqref{eq:C} is invertible.
Then, $\varphi$ induces a PVA homomorphism
of the corresponding Dirac reduced PVAs
$$
\varphi:\,\quot{\mc V_1}{\langle\theta_1,\dots,\theta_m\rangle_{\mc V_1}}
\to\quot{\mc V_2}{\langle\varphi(\theta_1),\dots,\varphi(\theta_m)\rangle_{\mc V_2}}
\,.
$$
\end{lemma}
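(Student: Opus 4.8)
The plan is to show that $\varphi$ intertwines the Dirac-modified $\lambda$-brackets on $\mc V_1$ and $\mc V_2$ \emph{exactly}, and then to pass to the quotients. The first point to settle is that the Dirac reduction of $\mc V_2$ by the constraints $\varphi(\theta_1),\dots,\varphi(\theta_m)$ is defined at all, i.e. that the associated matrix $\tilde C(\partial)\in\Mat_{m\times m}\mc V_2((\partial^{-1}))$, with symbol $\tilde C_{\alpha\beta}(\lambda)=\{\varphi(\theta_\beta)_\lambda\varphi(\theta_\alpha)\}$, is invertible. Since $\varphi$ is a differential algebra homomorphism it commutes with $\partial$, so it extends coefficientwise to an algebra homomorphism $\hat\varphi:\Mat_{m\times m}\mc V_1((\partial^{-1}))\to\Mat_{m\times m}\mc V_2((\partial^{-1}))$ --- this uses only that the product $\partial^n\circ A=\sum_k\binom nk A^{(k)}\partial^{n-k}$ is built from $\partial$ and multiplication. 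Because $\varphi$ is a PVA homomorphism we have $\hat\varphi(C)=\tilde C$, and applying $\hat\varphi$ to $C(\partial)\circ C^{-1}(\partial)=C^{-1}(\partial)\circ C(\partial)=\mbb1_m$ shows $\tilde C(\partial)$ is invertible with $\tilde C^{-1}(\partial)=\hat\varphi(C^{-1}(\partial))$.

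Next I would check that $\varphi$ descends to the quotients. As $\mc I_1=\langle\theta_1,\dots,\theta_m\rangle_{\mc V_1}$ is the differential ideal generated by the $\theta_\alpha$ and $\varphi$ is a differential algebra homomorphism, $\varphi(\mc I_1)\subseteq\langle\varphi(\theta_1),\dots,\varphi(\theta_m)\rangle_{\mc V_2}=\mc I_2$; hence $\varphi$ induces a differential algebra homomorphism $\bar\varphi:\quot{\mc V_1}{\mc I_1}\to\quot{\mc V_2}{\mc I_2}$.

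The heart of the argument is the identity
$$
\varphi(\{a_\lambda b\}^D)=\{\varphi(a)_\lambda\varphi(b)\}^D \qquad (a,b\in\mc V_1)\,,
$$
obtained by applying $\varphi$ term-by-term to the Dirac modification formula \eqref{eq:dirac}. The PVA homomorphism property gives $\varphi(\{a_\lambda b\})=\{\varphi(a)_\lambda\varphi(b)\}$ and $\varphi(\{a_\lambda\theta_\alpha\})=\{\varphi(a)_\lambda\varphi(\theta_\alpha)\}$; and since $\varphi$ commutes with $\partial$ and with multiplication, it passes through both the symbol $(C^{-1})_{\beta\alpha}(\lambda+\partial)$ (turning it into $(\tilde C^{-1})_{\beta\alpha}(\lambda+\partial)$ by the first step) and the $\to$-substitution in $\{\theta_\beta{}_{\lambda+\partial}b\}_{\to}$. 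I expect this bookkeeping --- keeping track of the right action of $\lambda+\partial$ in the $\to$ notation and of the pseudodifferential operator $(C^{-1})_{\beta\alpha}(\lambda+\partial)$ as $\varphi$ is pushed inside --- to be the only delicate point; but since each of the three operations involved (the $\lambda$-bracket, $\partial$, and multiplication) is intertwined exactly by $\varphi$, the two sides match on the nose. Note that the computation is purely formal, taking place in $\mc V_2((\lambda^{-1}))$, so it is insensitive to whether the modified brackets are local or non-local.

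Finally, reducing the displayed identity modulo $\mc I_2((\lambda^{-1}))$ shows that $\bar\varphi$ intertwines the $\lambda$-brackets induced on $\quot{\mc V_1}{\mc I_1}$ and $\quot{\mc V_2}{\mc I_2}$ by Theorem \ref{prop:dirac}; together with the fact that $\bar\varphi$ is a differential algebra homomorphism, this makes $\bar\varphi$ the desired PVA homomorphism.
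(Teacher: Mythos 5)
Your proposal is correct and follows essentially the same route as the paper: both arguments establish that $\varphi$ carries $C^{-1}(\partial)$ to the inverse of the constraint matrix for $\varphi(\theta_1),\dots,\varphi(\theta_m)$, then apply $\varphi$ term-by-term to the Dirac modification formula \eqref{eq:dirac} to get $\varphi(\{a_\lambda b\}_1^D)=\{\varphi(a)_\lambda\varphi(b)\}_2^D$, and pass to the quotients. Your write-up simply makes explicit the bookkeeping (the extension of $\varphi$ to matrix pseudodifferential operators and the descent to the differential-ideal quotients) that the paper leaves implicit.
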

\begin{proof}
Let $D_{\alpha\beta}(\lambda)
=\{\varphi(\theta_\beta)_{\lambda}\varphi(\theta_\alpha)\}_2=\varphi(C_{\alpha\beta}(\lambda))$, 
$\alpha,\beta=1,\dots,m$.
Since $\varphi$ is a differential algebra homomorphism,
the matrix $D(\partial)\in\Mat_{m\times m}\mc V_2((\partial^{-1}))$
is invertible, and $D^{-1}(\partial)=\varphi\big(C^{-1}(\partial)\big)$.
Hence, by \eqref{eq:dirac} we have
$$
\varphi\left(\{a_\lambda b\}_1^D\right)
=\{\varphi(a)_{\lambda}\varphi(b)\}_2^D\,,
$$
for all $a,b\in\mc V_1$, as required.
\end{proof}
In general, if we have two compatible PVA $\lambda$-brackets
$\{\cdot\,_\lambda\,\cdot\}_0$ and $\{\cdot\,_\lambda\,\cdot\}_1$ on $\mc V$
(recall Definition \ref{def:compatible}),
and we take their Dirac reductions by a finite number of constraints
$\theta_1,\dots,\theta_m$,
we do NOT get compatible PVA $\lambda$-brackets 
on $\quot{\mc V}{\mc I}$, where $\mc I=\langle\theta_1,\dots,\theta_m\rangle_{\mc V}$.
However,
in the special case when the constraints $\theta_1,\dots,\theta_m$
are central with respect to the first $\lambda$-bracket $\{\cdot\,_\lambda\,\cdot\}_0$
we have the following result.
\begin{theorem}[{\cite[Theorem 2.3]{DSKV13c}}]\label{20130516:thm1}
Let $\mc V$ be a differential algebra,
endowed with two compatible PVA $\lambda$-brackets 
$\{\cdot\,_\lambda\,\cdot\}_0$, $\{\cdot\,_\lambda\,\cdot\}_1$.
Let $\theta_1,\dots,\theta_m\in\mc V$ be central elements
with respect to the first $\lambda$-bracket:
$\{a_\lambda\theta_i\}_0=0$ for all $i=1,\dots,m$, $a\in\mc V$.
Let $C(\partial)=\big(C_{\alpha,\beta}(\partial)\big)_{\alpha,\beta=1}^m$
be the matrix pseudodifferential operator
given by \eqref{eq:C} for the second $\lambda$-bracket:
$C_{\alpha,\beta}(\lambda)=\{{\theta_\beta}_\lambda{\theta_\alpha}\}_1$.
Suppose that the matrix $C(\partial)$ is invertible,
and consider the Dirac modified PVA $\lambda$-bracket
$\{\cdot\,_\lambda\,\cdot\}_1^D$ given by \eqref{eq:dirac}.
Then,
$\{\cdot\,_\lambda\,\cdot\}_0$
and $\{\cdot\,_\lambda\,\cdot\}_1^D$
are compatible PVA $\lambda$-brackets on $\mc V$.
Moreover, the differential algebra ideal
$\mc I=\langle\theta_1,\dots,\theta_m\rangle_{\mc V}$
is a PVA ideal for both the $\lambda$-brackets 
$\{\cdot\,_\lambda\,\cdot\}_0$
and $\{\cdot\,_\lambda\,\cdot\}_1^D$,
and we have the induced compatible PVA $\lambda$-brackets on $\quot{\mc V}{\mc I}$.
\end{theorem}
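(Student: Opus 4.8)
The plan is to reduce everything to the Dirac-reduction Theorem \ref{prop:dirac}, which already guarantees that the modified bracket $\{\cdot\,_\lambda\,\cdot\}_1^D$ is a (possibly non-local) PVA $\lambda$-bracket, that the $\theta_i$ are central for it, and that $\mc I$ is a PVA ideal for it. Thus the only genuinely new assertions are: (i) the \emph{compatibility} of $\{\cdot\,_\lambda\,\cdot\}_0$ with $\{\cdot\,_\lambda\,\cdot\}_1^D$; and (ii) that $\mc I$ is \emph{also} a PVA ideal for $\{\cdot\,_\lambda\,\cdot\}_0$.

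First I would dispose of (ii). The centrality hypothesis $\{a_\lambda\theta_i\}_0=0$ together with skew-symmetry gives $\{{\theta_i}_\lambda a\}_0=0$ as well, and then sesquilinearity (Definition \ref{def:lambda}) upgrades this to $\{a_\lambda\theta_i^{(n)}\}_0=\{{\theta_i^{(n)}}_\lambda a\}_0=0$ for every $n$. Writing a general element of $\mc I$ as a sum of terms $g\,\theta_i^{(n)}$ and applying the left Leibniz rule, the only surviving summand of $\{(g\,\theta_i^{(n)})_\lambda a\}_0$ is $\{g_{\lambda+\partial}a\}_{0,\to}\theta_i^{(n)}$, which lies in $\mc I[\lambda]$. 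Hence $\{\mc I\,_\lambda\,\mc V\}_0\subset\mc I[\lambda]$ and $\mc I$ is a PVA ideal for $\{\cdot\,_\lambda\,\cdot\}_0$.

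The heart of the matter is (i), and the key idea is a \emph{pencil identity}. For $t\in\mb F$ consider the bracket $\{\cdot\,_\lambda\,\cdot\}_0+t\{\cdot\,_\lambda\,\cdot\}_1$, which is a PVA $\lambda$-bracket by compatibility of the two given brackets. I claim that its associated matrix \eqref{eq:C}, call it $C^{(t)}(\partial)$, is simply $tC(\partial)$: indeed by centrality $\{{\theta_\beta}_\lambda\theta_\alpha\}_0=0$, so the $0$-part of \eqref{eq:C} drops out. For $t\neq0$ the matrix $tC(\partial)$ is invertible with inverse $t^{-1}C^{-1}(\partial)$, so Theorem \ref{prop:dirac}(a) applies and the Dirac modification of $\{\cdot\,_\lambda\,\cdot\}_0+t\{\cdot\,_\lambda\,\cdot\}_1$ is again a PVA $\lambda$-bracket. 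Substituting into \eqref{eq:dirac} and again using centrality to kill $\{{\theta_\beta}_{\lambda+\partial}b\}_0$ and $\{a_\lambda\theta_\alpha\}_0$, the three powers of $t$ in the correction term combine as $t\cdot t^{-1}\cdot t=t$, and one obtains the clean identity
\[
\big(\{\cdot\,_\lambda\,\cdot\}_0+t\{\cdot\,_\lambda\,\cdot\}_1\big)^D
=\{\cdot\,_\lambda\,\cdot\}_0+t\{\cdot\,_\lambda\,\cdot\}_1^D\,.
\]
The left side is a PVA $\lambda$-bracket for every $t\neq0$, while at $t=0$ the right side is just $\{\cdot\,_\lambda\,\cdot\}_0$; rescaling by a nonzero scalar then shows that every linear combination $a\{\cdot\,_\lambda\,\cdot\}_0+b\{\cdot\,_\lambda\,\cdot\}_1^D$ is a PVA $\lambda$-bracket, which is exactly compatibility in the sense of Definition \ref{def:compatible}.

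Finally, for the quotient statement, since $\mc I$ is a PVA ideal for both $\{\cdot\,_\lambda\,\cdot\}_0$ (by (ii)) and $\{\cdot\,_\lambda\,\cdot\}_1^D$ (by Theorem \ref{prop:dirac}(c)), it is a PVA ideal for each of their linear combinations, so each descends to $\quot{\mc V}{\mc I}$; the descended brackets are then compatible because the descent map is linear and carries the pencil identity on $\mc V$ to the corresponding identity on $\quot{\mc V}{\mc I}$. The main obstacle I anticipate is purely bookkeeping: checking that the centrality hypothesis genuinely forces $C^{(t)}=tC$ and that the powers of $t$ cancel exactly as claimed. This is the one place where the hypothesis that the $\theta_i$ are central for the \emph{first} bracket is indispensable, since otherwise $C^{(t)}$ would acquire a $t$-independent term and the pencil identity — hence the whole argument — would fail.
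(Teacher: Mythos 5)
Your proof is correct: the pencil identity $\big(\{\cdot\,_\lambda\,\cdot\}_0+t\{\cdot\,_\lambda\,\cdot\}_1\big)^D=\{\cdot\,_\lambda\,\cdot\}_0+t\{\cdot\,_\lambda\,\cdot\}_1^D$, forced by centrality via $C^{(t)}=tC$ and the cancellation $t\cdot t^{-1}\cdot t=t$ in \eqref{eq:dirac}, combined with Theorem \ref{prop:dirac} applied to the pencil bracket, is exactly the argument of the cited source [DSKV13c, Theorem 2.3] (the paper itself gives no proof, only the citation). Your handling of the ideal statement for $\{\cdot\,_\lambda\,\cdot\}_0$ via sesquilinearity, skew-symmetry and the Leibniz rule is also the standard one, so there is nothing to add.
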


\section{Adler type pseudodifferential operators,
classical \texorpdfstring{$\mc W$}{W}-algebras, and the Miura map}\label{sec:AGD}

\subsection{The Adler map for a scalar pseudodifferential operator}\label{subsec:adler}

Let $L$ be a scalar pseudodifferential operator of order $\ord(L)=N\in\mb Z$,
with coefficients in a differential algebra $\mc A$.
The corresponding \emph{Adler map} 
$A^{(L)}:\mc A((\partial^{-1}))\longrightarrow
\mc A((\partial^{-1}))$ is given by (cf. \cite{Adl79})
\begin{equation}\label{adlermap}
A^{(L)}(F)=(LF)_+L-L(F L)_+
=L(FL)_--(LF)_-L
\,,
\end{equation}
for any $F\in\mc A((\partial^{-1}))$.
By the last expression in \eqref{adlermap}, we have that
$A^{(L)}(F)\in\mc A((\partial^{-1}))_{N-1}$ for every $F\in\mc A((\partial^{-1}))$.
Moreover, if $F\in\mc A((\partial^{-1}))_{-N-1}$, then $(FL)_+=(LF)_+=0$,
and therefore $A^{(L)}(F)=0$.
In conclusion, $A^{(L)}$ induces a map
$A^{(L)}:\,\quot{\mc A((\partial^{-1}))}{\mc A((\partial^{-1}))_{-N-1}}\longrightarrow\mc A((\partial^{-1}))_{N-1}$.
Note that $\mc A((\partial^{-1}))=\mb F[\partial,\partial^{-1}]\circ\mc A+\mc A((\partial^{-1}))_{-N-1}$,
and $(\mb F[\partial,\partial^{-1}]\circ\mc A)\cap\mc A((\partial^{-1}))_{-N-1}
=\partial^{-N-1}\mb F[\partial^{-1}]\circ\mc A$.
Hence, we can canonically identify 
$$
\quot{\mc A((\partial^{-1}))}{\mc A((\partial^{-1}))_{-N-1}}\simeq
\quot{\left(\mb F[\partial,\partial^{-1}]\circ\mc A\right)}
{\left(\partial^{-N-1}\mb F[\partial^{-1}]\circ\mc A\right)}
\,,
$$
and we get the induced map
\begin{equation}\label{adlermap2}
A^{(L)}:
\quot{\left(\mb F[\partial,\partial^{-1}]\circ\mc A\right)}
{\left(\partial^{-N-1}\mb F[\partial^{-1}]\circ\mc A\right)}
\longrightarrow\mc A((\partial^{-1}))_{N-1}\,.
\end{equation}

Let $I=\{-N,-N+1,-N+2,\dots\}\subset\mb Z$.
We have the identifications
\begin{equation}\label{id1bis}
\quot{\left(\mb F[\partial,\partial^{-1}]\circ\mc A\right)}{\left(\partial^{-N-1}
\mb F[\partial^{-1}]\circ\mc A\right)}\stackrel{\sim}{\longrightarrow}\mc A^{\oplus I}
\,\,,\,\,\,\,
\sum_{n=-N}^M\partial^n\circ F_n\mapsto (F_n)_{n\in I}\,,
\end{equation}
and
\begin{equation}\label{id2bis}
\mc A((\partial^{-1}))_{N-1}\stackrel{\sim}{\longrightarrow}\mc A^I
\,\,,\,\,\,\,
\sum_{n=-N}^\infty P_n\partial^{-n-1}\mapsto(P_n)_{n\in I}\,.
\end{equation}
Therefore the map $A^{(L)}$ in \eqref{adlermap2}
induces a map
\begin{equation}\label{def:H}
H^{(L)}:\, \mc A^{\oplus I}\longrightarrow \mc A^I\,.
\end{equation}
This map is given by an $I\times I$ matrix differential operator
$H^{(L)}=\big(H^{(L)}_{ij}(\partial)\big)_{i,j\in I}$,
which we compute explicitly
in terms of the generating series of its entries:
\begin{equation}\label{Hseries}
H^{(L)}(\partial)(z,w)=\sum_{i,j\in I}H_{ij}^{(L)}(\partial)z^{-i-1}w^{-j-1}
\,.
\end{equation}
\begin{lemma}\label{hseries}
We have
\begin{equation}\label{h}
H^{(L)}(\partial)(z,w)
=L(w)i_w(w-z-\partial)^{-1}\circ L(z)
-L(z+\partial)i_w(w-z-\partial)^{-1}\circ L^*(-w+\partial)
\,,
\end{equation}
where, as usual, we expand $L(z+\partial)$ and $L^*(-w+\partial)$ in non-negative powers of $\partial$.
\end{lemma}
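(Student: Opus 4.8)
The plan is to unwind the definition of the matrix operator $H^{(L)}$ through the identifications \eqref{id1bis} and \eqref{id2bis}, reducing the statement to a computation of the symbol of the Adler map. By \eqref{id2bis} the coefficient of $\partial^{-i-1}$ in an operator equals $\res_z$ of its symbol times $z^i$, so by \eqref{id1bis} the entries are characterized, for every $a\in\mc A$, by $\sum_{i\in I}\big(H^{(L)}_{ij}(\partial)a\big)z^{-i-1}=A^{(L)}(\partial^j\circ a)(z)$. Multiplying by $w^{-j-1}$ and summing over $j\in I$, the generating series \eqref{Hseries} becomes $H^{(L)}(\partial)(z,w)\,a=\sum_{j\ge -N}A^{(L)}(\partial^j\circ a)(z)\,w^{-j-1}$, and it suffices to evaluate the right-hand side and match it with \eqref{h}. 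Note that this sum is well defined: each $A^{(L)}(\partial^j\circ a)$ lies in $\mc A((\partial^{-1}))_{N-1}$, so the $w$-powers are bounded above by $N-1$, and the terms with $j\le -N-1$ drop out since $A^{(L)}$ annihilates $\mc A((\partial^{-1}))_{-N-1}$.

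I would then compute using the two-term expression $A^{(L)}(F)=L(FL)_- - (LF)_- L$, treating the two summands separately, as they will produce the two terms of \eqref{h}. For each, I pass to symbols via the product rule \eqref{eq:mult_symbol}, evaluate the projection $(\cdot)_-$ by splitting the relevant geometric kernel through \eqref{delta} into its $i_z$ and $i_w$ parts, and carry out the sum over $j$; the geometric series in $j$ collapses to the resolvent $i_w(w-z-\partial)^{-1}$. In the summand $-(LF)_- L$ the factor $L$ standing to the \emph{left} of the input is evaluated directly and yields $L(w)$, while the trailing $L$ gives $L(z)$ after the composition, producing the first term $L(w)\,i_w(w-z-\partial)^{-1}\circ L(z)$. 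In the summand $L(FL)_-$ the outer $L$ yields $L(z+\partial)$, while the factor $L$ standing to the \emph{right} of the input becomes, after extracting its $w$-dependence, the adjoint $L^*(-w+\partial)$; here I would invoke Lemma \ref{lem:residui}(b) (equivalently the definition of the adjoint), which is exactly the identity that replaces such a trailing operator by $L^*$ with the shifted argument $-w+\partial$, giving the second term $-L(z+\partial)\,i_w(w-z-\partial)^{-1}\circ L^*(-w+\partial)$.

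The step I expect to be the main obstacle is controlling the $i_z$ versus $i_w$ expansions together with the spurious low-order contributions. Taken in isolation, the $j$-sum of either Adler summand involves the pieces of order $\le -N-1$, on which $A^{(L)}$ does not vanish term-by-term; these pieces have unbounded $w$-powers and only cancel in the difference of the two summands. Thus the passage to the large-$w$ expansion $i_w(w-z-\partial)^{-1}$ must be justified using precisely this cancellation and the vanishing of $A^{(L)}$ on $\mc A((\partial^{-1}))_{-N-1}$, rather than summand-by-summand. Alongside this, the non-commutative placement of $\partial$ (which acts to the right and is recorded by the symbol $\circ$), the signs coming from the $(\cdot)_-$ projections, and obtaining the adjoint with the correct argument $-w+\partial$ require careful bookkeeping. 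Once both summands are assembled and these expansions reconciled, their sum is exactly \eqref{h}.
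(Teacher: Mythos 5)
Your overall strategy is the same as the paper's: reduce \eqref{h} to the generating series $\sum_{j}A^{(L)}(\partial^j\circ f)(z)\,w^{-j-1}$ via \eqref{20131022:eq1}, recognize the geometric series in $j$ as a $\delta$-function/resolvent kernel, use the analogue of \eqref{deltaprop} to pull the left factor of $L$ out as $L(w)$, and use the adjoint identity to convert the trailing $L$ into $L^*(-w+\partial)$. The one substantive divergence is your choice of the second expression $A^{(L)}(F)=L(FL)_--(LF)_-L$ in \eqref{adlermap}, and this is precisely what creates the obstacle you flag at the end. The paper works instead with the first expression $A^{(L)}(F)=(LF)_+L-L(FL)_+$, and this choice dissolves the difficulty: for $F=\partial^j\circ f$ with $j\le -N-1$ one has $(LF)_+=(FL)_+=0$, so \emph{each} of the two plus-form summands vanishes individually, the sum over $j\in I$ can be extended to all $j\in\mb Z$ summand-by-summand, and each summand assembles into $\big(L(\partial)\delta(w-\partial)\circ f\big)_+L(\partial)$, resp. $L(\partial)\big(\delta(w-\partial)\circ fL(\partial)\big)_+$; then \eqref{deltaprop} gives $\delta(w-\partial)_+=i_w(w-\partial)^{-1}$ and the two terms of \eqref{h} fall out with no cross-term bookkeeping. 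With your minus-form decomposition, the low-order pieces $j\le -N-1$ contribute $LFL$ to each summand separately (cancelling only in the difference), so the passage to the large-$|w|$ kernel $i_w(w-z-\partial)^{-1}$ genuinely cannot be done term-by-term, as you correctly observe; but you do not actually carry out the cancellation argument, and the cleanest way to do so is simply to add and subtract $LFL$ — i.e., to convert back to the plus form. So the proposal is the right proof in outline, with one acknowledged but unclosed gap whose resolution is exactly the paper's choice of projection; I would recommend restating your computation with the $(\cdot)_+$ form from the start.
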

\begin{proof}
By definition, the matrix element $H^{(L)}_{ij}(\partial)$ is given by
\begin{equation}\label{20131022:eq1}
H^{(L)}_{ij}(\partial)(f)
=\res_{\partial}\left(A^{(L)}\left(\partial^j\circ f\right)\partial^i\right)\,,
\end{equation}
for all $f\in\mc A$ and $i,j\in I$.
Note that, by the above observations on the Adler map $A^{(L)}$,
the RHS of \eqref{20131022:eq1} is zero for $i$ or $j$ less than $-N$.
Therefore, recalling the definition \eqref{delta} of the $\delta$-function,
\begin{equation}\label{20131022:eq2}
\begin{array}{l}
\displaystyle{
H^{(L)}(\partial)(z,w)f
=\sum_{i,j\in\mb Z}\res_{\partial}\left(A^{(L)}
\left(\partial^j\circ f\right)\partial^i\right)z^{-i-1}w^{-j-1}=
} \\
\displaystyle{
=\res_{\partial}\!
\Big(
\big(L(\partial)\delta(w-\partial)\circ f\big)_+
L(\partial)
\delta(z-\partial)
-L(\partial)
\big(\delta(w-\partial)\circ fL(\partial)\big)_+
\delta(z-\partial)
\Big).
}
\end{array}
\end{equation}
By equation \eqref{deltaprop}, we have
\begin{equation}\label{20131022:eq3}
\left(L(\partial)\delta(w-\partial)\circ f\right)_+
=L(w)\delta(w-\partial)_+\circ f=L(w)i_w(w-\partial)^{-1}\circ f\,.
\end{equation}
Similarly,
\begin{equation}\label{20131022:eq4}
\left(\delta(w-\partial)\circ fL(\partial)\right)_+
=\delta(w-\partial)_+\circ\left(L^*(-w+\partial)f\right)
=i_w(w-\partial)^{-1}\circ\left(L^*(-w+\partial)f\right)\,,
\end{equation}
where $(L^*(-w+\partial)f)\in\mc A((w^{-1}))$ is obtained by applying the (non-negative) 
powers of $\partial$ to $f$
(we put parentheses to denote this).
Combining equations \eqref{20131022:eq2}, \eqref{20131022:eq3} and \eqref{20131022:eq4},
we get
\begin{equation}\label{20131108:eq1}
\begin{array}{c}
H^{(L)}(\partial)(z,w)f
=\res_{\partial}\Big(
L(w)i_w(w-\partial)^{-1}\circ fL(\partial)\delta(z-\partial)
\\
-L(\partial)i_w(w-\partial)^{-1}\circ\left(L^*(-w+\partial)f\right)\delta(z-\partial)
\Big)\,.
\end{array}
\end{equation}
By equation \eqref{deltaprop},
inside the residue  we can replace $\partial$, written on the right, by $z$
(and therefore $\partial$, written anywhere, by $z+\partial$, written in the same place).
Hence, equation \eqref{20131108:eq1} gives \eqref{h}.
\end{proof}
In the last term of equation \eqref{h} $L^*(\partial)$ denotes the adjoint of the
pseudodifferential operator $L(\partial)$,
and $L^*(-w+\partial)\in\mc A[\partial]((w^{-1}))$ 
is obtained by replacing $\partial$ by $-w+\partial$
and expanding in non-negative powers of $\partial$.
This is not the same as the adjoint of the differential operator $L(-w+\partial)$.
In fact, we have the identity
\begin{equation}\label{20131022:eq5}
L(z+\partial)^*=L^*(-z+\partial)\,.
\end{equation}

Note that, while $H^{(L)}(\partial)(z,w)$ lies in $\mc A[\partial]((z^{-1},w^{-1}))$,
i.e. it has powers of $z$ and $w$ simultaneously bounded above (by construction),
the two terms in the RHS of \eqref{h} do not:
they lie in $\mc A[\partial]((z^{-1}))((w^{-1}))$
(and the powers of $z$ are not bounded above).

\subsection{Preliminary properties of the Adler map}
\label{sec:2.2}

Let $\mc V$ be a differential algebra,
let $L(\partial)\in\mc V((\partial^{-1}))$ be a pseudodifferential operator of order $N$,
and let $H^{(L)}(\partial)(z,w)$ be as in \eqref{h}.
%
%
\begin{lemma}\phantomsection\label{20130925:lem1}
\begin{enumerate}[(a)]
\item
$H^{(L)}(\partial)(z,w)
=-H^{(L)}(\partial)^*(w,z)$.
\item
The following identity holds:
\begin{equation}
\label{20130925:eq_jacobi}
\begin{array}{l}
\displaystyle{
H^{(L)}(\lambda)(z_2,z_1)
i_{z_2}(z_2-z_3-\mu-\partial)^{-1}
L(z_3)
} \\
\displaystyle{
-H^{(L)}(\lambda)(z_3+\mu+\partial,z_1)
i_{z_2}(z_2-z_3-\mu-\partial)^{-1}
L^*(-z_2+\mu)
} \\
\displaystyle{
-L(z_1)
i_{z_1}(z_1-z_3-\lambda-\mu-\partial)^{-1}
H^{(L)}(\mu)(z_3,z_2)
} \\
\displaystyle{
+L(z_3+\lambda+\mu+\partial)
i_{z_1}(z_1-z_3-\lambda-\mu-\partial)^{-1}
\Big(\Big|_{x=\lambda+\mu+\partial}
\!\!\!
H^{(L)}(\mu)(z_1-x,z_2)
\Big)
} \\
\displaystyle{
=H^{(L)}(\lambda+\mu+\partial)(z_3,z_1)
i_{z_1}(z_1-z_2-\lambda-\partial)^{-1}
L(z_2)
} \\
\displaystyle{
-H^{(L)}(\lambda+\mu+\partial)(z_3,z_2+\lambda+\partial)
i_{z_1}(z_1-z_2-\lambda-\partial)^{-1}L^*(-z_1+\lambda)\,.
}
\end{array}
\end{equation}
\end{enumerate}
In the fourth term of the RHS of equation \eqref{20130925:eq_jacobi}
we used the following notation:
for a Laurent series $P(z)=\sum_{n=-\infty}^Nc_nz^n\in\mc A((z^{-1}))$ and elements $a,b\in\mc A$,
we let
\begin{equation}\label{20131024:eq1}
a\Big(\Big|_{x=\nu+\partial}P(z+x)b\Big)
=
\sum_{n=-\infty}^Nai_z(z+\nu+\partial)^n(c_nb)
\,.
\end{equation}
\end{lemma}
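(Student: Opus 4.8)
The plan is to establish both parts by direct computation from the explicit formula \eqref{h}, using only the formal-distribution identities already available in the excerpt: the $\delta$-function substitution \eqref{deltaprop}, the expansion \eqref{delta}, the composition rule \eqref{eq:mult_symbol}, Lemma \ref{lem:residui}, and the adjunction identity \eqref{20131022:eq5}.

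For part (a) I would take the formal $\partial$-adjoint of the right-hand side of \eqref{h} one summand at a time. In the first summand the factors $L(w)$ and $L(z)$ are order-zero multiplication operators (the variables $z,w$ being central), so taking the adjoint merely reverses their order and sends $\partial\mapsto-\partial$, i.e. turns $i_w(w-z-\partial)^{-1}$ into $i_w(w-z+\partial)^{-1}$. For the second summand I would use \eqref{20131022:eq5} in the forms $L(z+\partial)^*=L^*(-z+\partial)$ and $L^*(-w+\partial)^*=L(w+\partial)$. After swapping $z\leftrightarrow w$, the desired identity $H^{(L)}(\partial)(z,w)=-H^{(L)}(\partial)^*(w,z)$ collapses to the single relation
\[
L(w)\,\Delta\,L(z)=L(z+\partial)\,\Delta\,L^*(-w+\partial),
\qquad
\Delta:=i_w(w-z-\partial)^{-1}+i_z(z-w+\partial)^{-1}.
\]
Since $i_z(z-w+\partial)^{-1}=-i_z(w-z-\partial)^{-1}$, the kernel $\Delta$ is, up to sign, the $\delta$-function $\delta(w-z-\partial)$ of \eqref{delta}; the relation then follows from the substitution property \eqref{deltaprop} combined with the adjunction identity \eqref{20131022:eq5}.

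Part (b) is the generating-series form of the Jacobi identity \eqref{jacobigen}, and is the real content of the lemma. The strategy is the same, but the computation is much heavier: I would substitute \eqref{h} for each of the six occurrences of $H^{(L)}$ in \eqref{20130925:eq_jacobi}, expanding every factor in non-negative powers of $\partial$ and reading the shifts through the notation \eqref{20131024:eq1}. This produces a sum of monomials, each a product of two symbols ($L$ or $L^*$) together with the ``external'' factors $L(z_3)$, $L^*(-z_2+\mu)$, $L(z_2)$, $L^*(-z_1+\lambda)$, separated by the expanded resolvents $i_{z_a}(\cdots)^{-1}$. I would then collapse these using the composition rule \eqref{eq:mult_symbol} and the partial-fraction-type relations that rewrite a product of two resolvents $i_{z_1}(\cdots)^{-1}\,i_{z_2}(\cdots)^{-1}$ as a single resolvent plus a $\delta$-type remainder, and finally match the surviving terms against the two summands of the right-hand side.

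The main obstacle is precisely this bookkeeping. Each individual summand of \eqref{h} lives only in the larger space $\mc A[\partial]((z^{-1}))((w^{-1}))$, so the intermediate expressions carry genuinely non-local ($\delta$-function) pieces; the crux is to organize the cancellations so that every such piece disappears and only the local combination in $\mc A[\partial]((z^{-1},w^{-1}))$ survives. I expect the cleanest bookkeeping comes from not breaking the pairing structure of \eqref{h}: treating terms~1--2, 3--4 and 5--6 of \eqref{20130925:eq_jacobi} each as one instance of the Adler-type building block with $H^{(L)}(\lambda)(\cdot,z_1)$ (respectively $L$) inserted in the outer slot, so that the associativity of the product in $\mc A((\partial^{-1}))$ does most of the work and the $\delta$-remainders are forced to cancel in pairs.
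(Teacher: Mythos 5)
Your part (a) is essentially the paper's own argument: take the adjoint of \eqref{h} using \eqref{20131022:eq5}, observe that in the sum $H^{(L)}(\partial)(z,w)+H^{(L)}(\partial)^*(w,z)$ the two expansions of the resolvent combine into the $\delta$-function $\delta(w-z-\partial)$ via \eqref{delta}, and then kill both resulting terms with the substitution property \eqref{deltaprop} (each equals $L(z+\partial)\delta(w-z-\partial)L(z)$). This part is complete and correct.

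For part (b), what you have written is a description of the method rather than a proof. The method is indeed the one the paper uses --- substitute \eqref{h} into each of the six occurrences of $H^{(L)}$ in \eqref{20130925:eq_jacobi}, obtaining twelve monomials of the form (product of three symbols $L$ or $L^*$) times (product of two expanded resolvents), and then cancel --- but the entire mathematical content of part (b) is the verification that these twelve terms cancel in the required pattern, and you have not exhibited that pattern. Concretely, in the paper's bookkeeping two pairs cancel outright (the second monomial coming from the first term of \eqref{20130925:eq_jacobi} against the first monomial coming from the sixth, and the first from the second term against the second from the third); two triples collapse via the partial-fraction identity $a^{-1}b^{-1}-i_a(a+b)^{-1}b^{-1}-i_a(a+b)^{-1}a^{-1}=0$ (equation \eqref{obvious}); and the second of these triples leaves a genuine $\delta$-function remainder, produced by $i_{z_2}(z_1-z_2-x)^{-1}=i_{z_1}(z_1-z_2-x)^{-1}-\delta(z_1-z_2-x)$, which must then be seen to cancel against the one remaining pair of monomials. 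None of this is automatic, and your proposed organization --- treating terms $1$--$2$, $3$--$4$ and $5$--$6$ as three self-contained ``Adler-type building blocks'' --- does not match how the cancellations actually occur: they cross those groupings (a piece of term $1$ cancels against a piece of term $6$, a piece of term $4$ against a piece of term $5$, etc.), so following that organization literally is likely to obscure rather than force the cancellation. The toolbox you list (\eqref{deltaprop}, \eqref{delta}, \eqref{eq:mult_symbol}, Lemma \ref{lem:residui}, \eqref{20131022:eq5}) is sufficient and the overall strategy is the right one, but the proof of (b) is missing: you need to write out the six expansions and display the explicit cancellation, including the verification that the two $\delta$-remainders are exactly opposite.
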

\begin{proof}
Taking the adjoint of \eqref{h}
and using equation \eqref{20131022:eq5}, we get
\begin{equation}\label{h2}
H^{(L)}(\partial)^*(z,w)
=L(z)i_w(w-z+\partial)^{-1}L(w)
-L(w+\partial)i_w(w-z+\partial)^{-1}L^*(-z+\partial)
\,.
\end{equation}
Combining equations \eqref{h} and \eqref{h2}, and using equation \eqref{delta}, we get,
\begin{equation}\label{20131108:eq2}
\begin{array}{l}
H^{(L)}(\partial)(z,w)
+H^{(L)}(\partial)^*(w,z)
\\
=L(w)\delta(w-z-\partial)L(z)
-L(z+\partial)\delta(w-z-\partial)L^*(-w+\partial)
\,.
\end{array}
\end{equation}
By equation \eqref{deltaprop},
the first term in the RHS of \eqref{20131108:eq2} is equal to $L(z+\partial)\delta(w-z-\partial)L(z)$.
Moreover, if $L(z)=\sum_na_nz^n$,
then $L^*(-w+\partial)=\sum_n(w-\partial)^na_n$.
Therefore, by equation \eqref{deltaprop},
 $\delta(w-z-\partial)L^*(-w+\partial)=\delta(w-z-\partial)\sum_nz^na_n=\delta(w-z-\partial)L(z)$.
Hence, 
the second term in the RHS of \eqref{20131108:eq2} is equal to $L(z+\partial)\delta(w-z-\partial)L(z)$
as well, proving part (a).

Using \eqref{h}, we can rewrite each of the six terms of equation \eqref{20130925:eq_jacobi}.
The first term is
\begin{equation}\label{eq1}
\begin{array}{l}
\vphantom{\Big(}
\displaystyle{
H^{(L)}(\lambda)(z_2,z_1)
i_{z_2}(z_2-z_3-\mu-\partial)^{-1}
L(z_3)
} \\
\vphantom{\Big(}
\displaystyle{
= i_{z_1}(z_1-z_2-x)^{-1}
i_{z_2}(z_2-z_3-y)^{-1}
\Big(
L(z_1)
\big(\big|_{x=\lambda+\partial}L(z_2)\big)
\big(\big|_{y=\mu+\partial}L(z_3)\big) 
} \\
\vphantom{\Big(}
\displaystyle{
-L(z_2+x)
\big(\big|_{x=\lambda+\partial}L^*(-z_1+\lambda)\big)
\big(\big|_{y=\mu+\partial}L(z_3)\big)
\Big)
\,,}
\end{array}
\end{equation}
the second term is
\begin{equation}\label{eq2}
\begin{array}{l}
\displaystyle{
H^{(L)}(\lambda)(z_3+\mu+\partial,z_1)
i_{z_2}(z_2-z_3-\mu-\partial)^{-1}
L^*(-z_2+\mu)
} \\
\vphantom{\Big(}
\displaystyle{
=i_{z_1}(z_1-z_3-x-y)^{-1}
i_{z_2}(z_2-z_3-y)^{-1}
\times
} \\
\vphantom{\Big(}
\displaystyle{
\times\Big(
L(z_1)
\big(\big|_{x=\lambda+\partial}L(z_3+y)\big)
\big(\big|_{y=\mu+\partial}L^*(-z_2+\mu)\big)
} \\
\vphantom{\Big(}
\displaystyle{
-L(z_3+x+y)
\big(\big|_{x=\lambda+\partial}L^*(-z_1+\lambda)\big)
\big(\big|_{y=\mu+\partial}L^*(-z_2+\mu)\big)
\Big)
\,,}
\end{array}
\end{equation}
the third term is
\begin{equation}\label{eq3}
\begin{array}{l}
\displaystyle{
L(z_1)
i_{z_1}(z_1-z_3-\lambda-\mu-\partial)^{-1}
H^{(L)}(\mu)(z_3,z_2)
} \\
\vphantom{\Big(}
\displaystyle{
=i_{z_1}(z_1\!-z_3\!-\!x\!-\!y)^{-1}
i_{z_2}(z_2-z_3-y)^{-1}
\Big(
L(z_1)
\big(\big|_{x=\lambda+\partial}L(z_2)\big)
\big(\big|_{y=\mu+\partial}L(z_3)\big)
} \\
\vphantom{\Big(}
\displaystyle{
-L(z_1)
\big(\big|_{x=\lambda+\partial}L(z_3+y)\big)
\big(\big|_{y=\mu+\partial}L^*(-z_2+\mu)\big)
\Big)
\,,}
\end{array}
\end{equation}
the fourth term is
\begin{equation}\label{eq4}
\begin{array}{l}
\displaystyle{
L(z_3+\lambda+\mu+\partial)
i_{z_1}(z_1-z_3-\lambda-\mu-\partial)^{-1}
\Big(\Big|_{x=\lambda+\mu+\partial}
H^{(L)}(\mu)(z_1-x,z_2)
\Big)
} \\
\vphantom{\Big(}
\displaystyle{
=
i_{z_1}(z_1-z_3-x-y)^{-1}
i_{z_2}(z_1-z_2-x)^{-1}
L(z_3+x+y)
} \\
\vphantom{\Big(}
\displaystyle{
\Big(
-
\big(\big|_{x=\lambda+\partial}L(z_2)\big)
\big(\big|_{y=\mu+\partial}i_{z_1}L(z_1-x-y)\big)
} \\
\vphantom{\Big(}
\displaystyle{
+
\big(\big|_{x=\lambda+\partial}L^*(-z_1+\lambda)\big)
\big(\big|_{y=\mu+\partial}L^*(-z_2+\mu)\big)
\Big)
\,,}
\end{array}
\end{equation}
the fifth term is
\begin{equation}\label{eq5}
\begin{array}{l}
\displaystyle{
H^{(L)}(\lambda+\mu+\partial)(z_3,z_1)
i_{z_1}(z_1-z_2-\lambda-\partial)^{-1}
L(z_2)
} \\
\vphantom{\Big(}
\displaystyle{
=
i_{z_1}(z_1-z_3-x-y)^{-1}
i_{z_1}(z_1-z_2-x)^{-1}
\Big(
L(z_1)
\big(\big|_{x=\lambda+\partial}L(z_2)\big)
\big(\big|_{y=\mu+\partial}L(z_3)\big)
} \\
\vphantom{\Big(}
\displaystyle{
-L(z_3+x+y)
\big(\big|_{x=\lambda+\partial}L(z_2)\big)
\big(\big|_{y=\mu+\partial}i_{z_1}L(z_1-x-y)\big)
\Big)
\,,}
\end{array}
\end{equation}
and the last term is
\begin{equation}\label{eq6}
\begin{array}{l}
\displaystyle{
H^{(L)}(\lambda+\mu+\partial)(z_3,z_2+\lambda+\partial)
i_{z_1}(z_1-z_2-\lambda-\partial)^{-1}L^*(-z_1+\lambda)
} \\
\vphantom{\Big(}
\displaystyle{
= i_{z_1}(z_1-z_2-x)^{-1} i_{z_2}(z_2-z_3-y)^{-1}
\Big(
L(z_2+x)
\big(\big|_{x=\lambda+\partial}L^*(-z_1+\lambda)\big)
\times
} \\
\vphantom{\Big(}
\displaystyle{
\times
\big(\big|_{y=\mu+\partial}L(z_3)\big)
-
L(z_3+x+y)
\big(\big|_{x=\lambda+\partial}L^*(-z_1+\lambda)\big)
\big(\big|_{y=\mu+\partial}L^*(-z_2+\mu)\big)
\Big)
\,.}
\end{array}
\end{equation}
Combining the second term in the RHS of \eqref{eq1} 
and the first term in the RHS of \eqref{eq6} we get $0$,
and combining the first term in the RHS of \eqref{eq2} 
and the second term in the RHS of \eqref{eq3} we also get $0$.
Next, combining the first terms in the RHS of \eqref{eq1}, \eqref{eq3} and \eqref{eq5},
we get
$$
\begin{array}{l}
\vphantom{\Big(}
\displaystyle{
\Big(
i_{z_1}(z_1-z_2-x)^{-1}
i_{z_2}(z_2-z_3-y)^{-1}
-
i_{z_1}(z_1-z_3-x-y)^{-1}
i_{z_2}(z_2-z_3-y)^{-1}
} \\
\vphantom{\Big(}
\displaystyle{
-
i_{z_1}(z_1-z_3-x-y)^{-1}
i_{z_1}(z_1-z_2-x)^{-1}
\Big)
L(z_1)
\big(\big|_{x=\lambda+\partial}L(z_2)\big)
\big(\big|_{y=\mu+\partial}L(z_3)\big) 
\,,
}
\end{array}
$$
and this is zero by the obvious identity
\begin{equation}\label{obvious}
a^{-1}b^{-1}-i_a(a+b)^{-1}b^{-1}-i_a(a+b)^{-1}a^{-1}=0\,.
\end{equation}
Next,
combining the second terms in the RHS of \eqref{eq2}, \eqref{eq4} and \eqref{eq6},
we get
$$
\begin{array}{c}
\vphantom{\Big(}
\displaystyle{
\Big(
i_{z_1}(z_1-z_3-x-y)^{-1}
i_{z_2}(z_2-z_3-y)^{-1}
+
i_{z_1}(z_1-z_3-x-y)^{-1}
\times
} \\
\vphantom{\Big(}
\displaystyle{
\times
i_{z_2}(z_1-z_2-x)^{-1}
-
i_{z_1}(z_1-z_2-x)^{-1} 
i_{z_2}(z_2-z_3-y)^{-1}
\Big)
\times
} \\
\vphantom{\Big(}
\displaystyle{
\times
L(z_3+x+y)
\big(\big|_{x=\lambda+\partial}L^*(-z_1+\lambda)\big)
\big(\big|_{y=\mu+\partial}L^*(-z_2+\mu)\big)
\,,}
\end{array}
$$
which, by the identity $i_{z_2}(z_1-z_2-x)^{-1}=i_{z_1}(z_1-z_2-x)^{-1}-\delta(z_1-z_2-x)$
and equation \eqref{obvious},
can be rewritten as follow:
\begin{equation}\label{eq7}
\begin{array}{c}
\vphantom{\Big(}
\displaystyle{
-i_{z_1}(z_1-z_3-x-y)^{-1}
\delta(z_1-z_2-x)
L(z_3+x+y)
\times
} \\
\vphantom{\Big(}
\displaystyle{
\times
\big(\big|_{x=\lambda+\partial}L^*(-z_1+\lambda)\big)
\big(\big|_{y=\mu+\partial}L^*(-z_2+\mu)\big)
\,,}
\end{array}
\end{equation}
Next, combining the first term in the RHS of \eqref{eq4}
and the second term in the RHS of \eqref{eq5},
we get the opposite of \eqref{eq7}, proving the claim.
\end{proof}
\begin{lemma}\label{20131029:lem}
If $L(\partial)\in\mc V[\partial]$,
then
$H^{(L)}(\partial)(z,w)\in\mc V[\partial][z,w]$.
\end{lemma}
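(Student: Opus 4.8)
The plan is to show that only finitely many of the matrix entries $H_{ij}^{(L)}(\partial)$ are nonzero. Since $H^{(L)}$ is by construction a matrix differential operator (the map in \eqref{def:H} is given by entries $H_{ij}^{(L)}(\partial)$ lying in $\mc V[\partial]$), proving that the nonzero entries are confined to a finite square of indices immediately yields, via the generating series \eqref{Hseries}, that $H^{(L)}(\partial)(z,w)\in\mc V[\partial][z,w]$. From the discussion following \eqref{adlermap} we already know that $H_{ij}^{(L)}=0$ whenever $i$ or $j$ is less than $-N$, so the index set is $I=\{-N,-N+1,\dots\}$; it remains only to bound $i$ and $j$ from above.

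First I would bound $i$. Using the first expression $A^{(L)}(F)=(LF)_+L-L(FL)_+$ in \eqref{adlermap} together with the hypothesis $L\in\mc V[\partial]$, both $(LF)_+L$ and $L(FL)_+$ are products of honest differential operators, so $A^{(L)}(F)$ is itself a differential operator, of order at most $N-1$. Feeding $F=\partial^j\circ f$ into \eqref{20131022:eq1}, the residue $\res_\partial\big(A^{(L)}(\partial^j\circ f)\partial^i\big)$ equals the coefficient of $\partial^{-1-i}$ in the differential operator $A^{(L)}(\partial^j\circ f)$. Since this operator has only non-negative powers of $\partial$ and order $\leq N-1$, the residue vanishes unless $0\leq -1-i\leq N-1$, that is, unless $-N\leq i\leq -1$.

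Next I would bound $j$ symmetrically, using instead the second expression $A^{(L)}(F)=L(FL)_--(LF)_-L$. When $F=\partial^j\circ f$ with $j\geq 0$, the operator $F$ is differential, hence $FL$ and $LF$ are differential operators whose integral parts $(FL)_-$ and $(LF)_-$ vanish; therefore $A^{(L)}(\partial^j\circ f)=0$ and $H_{ij}^{(L)}=0$ for all $j\geq 0$. Combined with the lower bound already noted, this gives $-N\leq j\leq -1$.

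Putting the two bounds together, $H_{ij}^{(L)}$ is nonzero only for $(i,j)$ in the finite square $\{-N,\dots,-1\}^2$, so in \eqref{Hseries} the exponents $-i-1$ and $-j-1$ range over $\{0,\dots,N-1\}$ and $H^{(L)}(\partial)(z,w)$ is polynomial in $z$ and $w$. There is no serious obstacle: the argument rests entirely on the two equivalent expressions for the Adler map in \eqref{adlermap}, each of which kills the negative powers of $\partial$ on one side. The only point requiring care is translating the residue in \eqref{20131022:eq1} correctly into a constraint on $i$, keeping track of the fact that $A^{(L)}(F)$ being a differential operator of order $\leq N-1$ makes the coefficient of $\partial^{-1-i}$ survive precisely when $0\leq -1-i\leq N-1$. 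One could instead attempt a direct computation from \eqref{h}, cancelling against each other the negative powers of $w$ produced by $i_w(w-z-\partial)^{-1}$, but this is considerably more laborious and I would avoid it.
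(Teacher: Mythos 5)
Your proof is correct, but it follows a genuinely different route from the paper's. The paper's proof is a two-line appeal to previously established machinery: from the generating-series formula \eqref{h}, when $L(\partial)\in\mc V[\partial]$ the expression $L(w)i_w(w-z-\partial)^{-1}\circ L(z)-L(z+\partial)i_w(w-z-\partial)^{-1}\circ L^*(-w+\partial)$ visibly contains no negative powers of $z$, and then the skewadjointness $H^{(L)}(\partial)(z,w)=-H^{(L)}(\partial)^*(w,z)$ of Lemma \ref{20130925:lem1}(a) transfers this to $w$; since the powers of $z$ and $w$ are bounded above by construction, polynomiality follows. You instead argue directly at the level of the matrix entries $H^{(L)}_{ij}(\partial)$ defined by \eqref{20131022:eq1}, using the first expression $(LF)_+L-L(FL)_+$ in \eqref{adlermap} to see that $A^{(L)}(F)$ is a differential operator of order at most $N-1$ (forcing $-N\leq i\leq -1$ for the residue to survive), and the second expression $L(FL)_--(LF)_-L$ to see that $A^{(L)}(\partial^j\circ f)=0$ for $j\geq 0$ (forcing $-N\leq j\leq -1$). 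Both arguments are sound and in fact exploit the same underlying symmetry --- the paper's Lemma \ref{20130925:lem1}(a) is precisely the generating-series incarnation of the equality of the two expressions for the Adler map that you use separately. What your version buys is self-containedness and transparency: it needs neither the explicit formula \eqref{h} nor the skewadjointness lemma, and it makes visible exactly why the nonzero entries are confined to the square $\{-N,\dots,-1\}^2$. What the paper's version buys is brevity, since \eqref{h} and Lemma \ref{20130925:lem1}(a) are already on the table and are needed elsewhere anyway.
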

\begin{proof}
By equation \eqref{h} $H^{(L)}(\partial)(z,w)$ has no negative powers of $z$,
and by Lemma \ref{20130925:lem1}(a) it has no negative powers of $w$.
\end{proof}

\subsection{Adler type scalar pseudodifferential operators}

\begin{definition}\label{def:adler}
Let $\mc V$ be a differential algebra endowed with a $\lambda$-bracket $\{\cdot\,_\lambda\,\cdot\}$.
We call a pseudodifferential operator $L(\partial)\in\mc V((\partial^{-1}))$
of \emph{Adler type} (for the $\lambda$-bracket $\{\cdot\,_\lambda\,\cdot\}$)
if the following identity holds in $\mc V[\lambda]((z^{-1},w^{-1}))$:
\begin{equation}\label{generating}
\{L(z)_\lambda L(w)\}=H^{(L)}(\lambda)(w,z)
\,.
\end{equation}
\end{definition}
\begin{lemma}\label{20131027:prop1}
Let $\mc V$ be a differential algebra,
let $\{\cdot\,_\lambda\,\cdot\}$ be a $\lambda$-bracket on $\mc V$,
and let $L(\partial)\in\mc V((\partial^{-1}))$
be an Adler type pseudodifferential operator.
Then:
\begin{enumerate}[(a)]
\item The following identity holds in $\mc V[\lambda]((z^{-1},w^{-1}))$:
\begin{equation}\label{skew-symseries}
\{L(z)_{\lambda}L(w)\}=-\{L(w)_{-\lambda-\partial}L(z)\}\,.
\end{equation}
\item The following identity holds in
$\mc V[\lambda,\mu]((z_1^{-1},z_2^{-1},z_3^{-1}))$:
\begin{equation}\label{jacobiseries}
\{L(z_1)_{\lambda}\{L(z_2)_{\mu}L(z_3)\}\}
-\{L(z_2)_{\mu}\{L(z_1)_{\lambda}L(z_3)\}\}
=\{\{L(z_1)_{\lambda}L(z_2)\}_{\lambda+\mu}L(z_3)\}\,.
\end{equation}
\end{enumerate}
\end{lemma}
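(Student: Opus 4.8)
The statement to prove is Lemma \ref{20131027:prop1}: an Adler type pseudodifferential operator $L(\partial)$ automatically satisfies skew-symmetry \eqref{skew-symseries} and the Jacobi identity \eqref{jacobiseries} at the level of generating series. The key observation is that everything has already been set up so that these two identities reduce to structural properties of $H^{(L)}$ established in Lemma \ref{20130925:lem1}. So the plan is to translate the defining identity \eqref{generating} through those two parts.

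\smallskip

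For part (a), the plan is to start from the defining relation $\{L(z)_\lambda L(w)\}=H^{(L)}(\lambda)(w,z)$ and compute the right-hand side of \eqref{skew-symseries}, namely $-\{L(w)_{-\lambda-\partial}L(z)\}$. Applying the Adler definition again gives $-\{L(w)_{-\lambda-\partial}L(z)\}=-H^{(L)}(-\lambda-\partial)(z,w)$, where the substitution $\lambda\mapsto -\lambda-\partial$ must be interpreted as in the skew-symmetry convention (moving $-\lambda-\partial$ to the left and expanding in nonnegative powers of $\partial$ acting on coefficients). The goal is then to recognize that $-H^{(L)}(-\lambda-\partial)(z,w)$ equals $H^{(L)}(\lambda)(w,z)$. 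This is exactly the skewadjointness statement of Lemma \ref{20130925:lem1}(a): the adjoint operation on a matrix (pseudo)differential operator realizes precisely the $\lambda\mapsto -\lambda-\partial$ substitution together with the swap of arguments, and the sign matches. So part (a) follows by applying \eqref{generating} twice and invoking Lemma \ref{20130925:lem1}(a); the only care needed is to confirm the symbol-level bookkeeping of the adjoint versus the formal $-\lambda-\partial$ substitution, but this has been spelled out in the remarks following the definition of the adjoint.

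\smallskip

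For part (b), the plan is to expand each of the three double-bracket terms in \eqref{jacobiseries} using \eqref{generating} repeatedly, reducing every nested bracket to an expression built from $H^{(L)}$ and $L$. The inner brackets $\{L(z_2)_\mu L(z_3)\}$ etc.\ each become $H^{(L)}(\mu)(z_3,z_2)$; then applying the outer bracket $\{L(z_1)_\lambda \cdot\}$ requires the sesquilinearity and Leibniz rules together with the fact that the entries of $H^{(L)}$ are themselves expressed through $L(z)$, $L(z+\partial)$, and $L^*$, so that $\{L(z_1)_\lambda -\}$ acts on those via \eqref{generating} once more. The systematic bookkeeping of the spectral parameters $\lambda,\mu$ and the auxiliary $x,y$ (introduced by the Leibniz rule acting on products, cf.\ the notation in \eqref{20131024:eq1}) is exactly what produces the six-term structure. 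The claim is then that \eqref{jacobiseries}, after this expansion, becomes literally the identity \eqref{20130925:eq_jacobi} of Lemma \ref{20130925:lem1}(b) with the appropriate identification of variables (e.g.\ $z_i\mapsto z_i$, and the shifted arguments matching the $+\mu+\partial$, $+\lambda+\partial$, $+\lambda+\mu+\partial$ shifts already present there).

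\smallskip

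The main obstacle is part (b): the faithful matching of the expanded left- and right-hand sides of \eqref{jacobiseries} with the six terms of \eqref{20130925:eq_jacobi}. The difficulty is not conceptual but notational --- one must carefully track how the sesquilinear and Leibniz rules convert the abstract $\lambda$-bracket of the nested expression into the geometric-series factors $i_z(z-w-\cdots)^{-1}$ and the shifted symbol arguments, and verify that the two expansion orders (first $\mu$ then $\lambda$, versus the reversed order) produce precisely the left- and right-hand sides of the Jacobi-type identity already proved in Lemma \ref{20130925:lem1}(b). Since that lemma does all the genuine combinatorial work (the cancellations governed by \eqref{obvious} and the $\delta$-function identity), the proof of part (b) should amount to establishing a dictionary between the two notations and then citing \eqref{20130925:eq_jacobi}. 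I would present this by writing out one representative term of the expansion in detail, indicating the substitution of auxiliary variables, and then asserting that the remaining terms follow identically, so that \eqref{jacobiseries} is equivalent to \eqref{20130925:eq_jacobi}.
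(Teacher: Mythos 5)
Your strategy is the same as the paper's: both parts are meant to be read off from Lemma \ref{20130925:lem1} via the defining relation \eqref{generating}, and your treatment of part (a) is correct as it stands.

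For part (b), however, the concluding step as you describe it would not close the argument. Expanding each of the three double brackets in \eqref{jacobiseries} by sesquilinearity and the two Leibniz rules produces four terms apiece --- twelve terms in all (the terms \eqref{eq:a}--\eqref{eq:d}, \eqref{eq:a'}--\eqref{eq:d'}, \eqref{eq:a''}--\eqref{eq:d''} of the paper) --- whereas the identity \eqref{20130925:eq_jacobi} of Lemma \ref{20130925:lem1}(b) contains only six. No identification of variables makes the expanded Jacobi identity ``literally'' coincide with \eqref{20130925:eq_jacobi}, so the proof cannot consist of a dictionary followed by a single citation. What actually happens is that one application of \eqref{20130925:eq_jacobi} accounts for exactly half of the terms, giving $\eqref{eq:a}+\eqref{eq:c}-\eqref{eq:b'}-\eqref{eq:d'}=\eqref{eq:a''}+\eqref{eq:c''}$, and one is left to prove separately that $\eqref{eq:b}+\eqref{eq:d}-\eqref{eq:a'}-\eqref{eq:c'}=\eqref{eq:b''}+\eqref{eq:d''}$. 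This residual identity requires a \emph{second} application of \eqref{20130925:eq_jacobi} with $\lambda\leftrightarrow\mu$ and $z_1\leftrightarrow z_2$ exchanged, and the two applications do not match on the nose: their discrepancy involves the difference $i_{z_1}(z_2-z_1-\mu-\partial)^{-1}-i_{z_2}(z_2-z_1-\mu-\partial)^{-1}$, which is a $\delta$-function, and vanishes only after invoking the property \eqref{deltaprop} of $\delta(z-w)$. Your plan omits both the symmetrized second application and this final $\delta$-function cancellation; a reader carrying out your expansion would be left with six unaccounted-for terms.
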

\begin{proof}
In view of equation \eqref{generating},
part (a) follows from Lemma \ref{20130925:lem1}(a).
Let us prove part (b). Using sesquilinearity, the left and right Leibniz rules and equation \eqref{h}
we can rewrite each term of the equation \eqref{jacobiseries} as follows.
The first one is
%
\begin{align}
&\{L(z_1)_{\lambda}\{L(z_2)_{\mu}L(z_3)\}\}\notag\\
&=H^{(L)}(\lambda)(z_2,z_1)
i_{z_2}(z_2-z_3-\mu-\partial)^{-1}
L(z_3)
\label{eq:a}\\
&+L(z_2)
i_{z_2}(z_2-z_3-\lambda-\mu-\partial)^{-1}
H^{(L)}(\lambda)(z_3,z_1)
\label{eq:b}\\
&-H^{(L)}(\lambda)(z_3+\mu+\partial,z_1)
i_{z_2}(z_2-z_3-\mu-\partial)^{-1}
L^*(-z_2+\mu)
\label{eq:c}\\
&-L(z_3+\lambda+\mu+\partial)
i_{z_2}(z_2-z_3-\lambda-\mu-\partial)^{-1}
\Big(\Big|_{x=\lambda+\mu+\partial}
\!\!\!
H^{(L)}(\lambda)(z_2-x,z_1)
\Big)
\,;\label{eq:d}
\end{align}
the second term is
%
%
\begin{align}
&\{L(z_2)_{\mu}\{L(z_1)_{\lambda}L(z_3)\}_{H}\}_{H}\notag\\
&=H^{(L)}(\mu)(z_1,z_2)
i_{z_1}(z_1-z_3-\lambda-\partial)^{-1}
L(z_3)
\label{eq:a'}\\
&+L(z_1)
i_{z_1}(z_1-z_3-\lambda-\mu-\partial)^{-1}
H^{(L)}(\mu)(z_3,z_2)
\label{eq:b'}\\
&-H^{(L)}(\mu)(z_3+\lambda+\partial,z_2)
i_{z_1}(z_1-z_3-\lambda-\partial)^{-1}
L^*(-z_1+\lambda)
\label{eq:c'}\\
&-L(z_3+\lambda+\mu+\partial)
i_{z_1}(z_1-z_3-\lambda-\mu-\partial)^{-1}
\Big(\Big|_{x=\lambda+\mu+\partial}
\!\!\!
H^{(L)}(\mu)(z_1-x,z_2)
\Big)
\,;
\label{eq:d'}
\end{align}
the third term is
%
%
\begin{align}
&\{\{L(z_1)_{\lambda}L(z_2)\}_{H}{}_{\lambda+\mu}L(z_3)\}_{H}
\notag\\
&=H^{(L)}(\lambda+\mu+\partial)(z_3,z_1)
i_{z_1}(z_1-z_2-\lambda-\partial)^{-1}
L(z_2)
\label{eq:a''}\\
&-H^{(L)}(\lambda+\mu+\partial)(z_3,z_2)
i_{z_1}(z_2-z_1-\mu-\partial)^{-1}
L(z_1)
\label{eq:b''}\\
&-H^{(L)}(\lambda+\mu+\partial)(z_3,z_2+\lambda+\partial)
i_{z_1}(z_1-z_2-\lambda-\partial)^{-1}
L^*(-z_1+\lambda)
\label{eq:c''}\\
&+H^{(L)}(\lambda+\mu+\partial)(z_3,z_1+\mu+\partial)
i_{z_1}(z_2-z_1-\mu-\partial)^{-1}
L^*(-z_2+\mu)
\label{eq:d''}\,.
\end{align}
By Lemma \ref{20130925:lem1}(b) we have the following identity
$$
\eqref{eq:a}+\eqref{eq:c}-\eqref{eq:b'}-\eqref{eq:d'}
=\eqref{eq:a''}+\eqref{eq:c''}\,.
$$
Hence, equation \eqref{jacobiseries} is proved once we show that
\begin{equation}\label{20130925:eq_mid}
\eqref{eq:b}+\eqref{eq:d}-\eqref{eq:a'}-\eqref{eq:c'}
=\eqref{eq:b''}+\eqref{eq:d''}\,.
\end{equation}
Using equation \eqref{20130925:eq_jacobi}
with $\lambda$ and $\mu$ exchanged, and with $z_1$ and $z_2$ exchanged,
we get that equation \eqref{20130925:eq_mid} is equivalent to the following equation
$$
\begin{array}{l}
\vphantom{\Big(}
H^{(L)}(\lambda+\mu+\partial)(z_3,z_2)
\delta(z_2-z_1-\mu-\partial)
L(z_1)
\\
\vphantom{\Big(}
-H^{(L)}(\lambda+\mu+\partial)(z_3,z_1+\mu+\partial)
\delta(z_2-z_1-\mu-\partial)
L^*(-z_2+\mu)=0\,,
\end{array}
$$
which holds by the properties of the $\delta$-function.
\end{proof}
\begin{remark}\label{subalgebra}
Let $\mc V$ be a differential algebra, endowed with a $\lambda$-bracket $\{\cdot\,_\lambda\,\cdot\}$.
Let $L(\partial)\in\mc V((\partial^{-1}))$ be a pseudodifferential operator,
and let $\mc U\subset\mc V$ be the differential subalgebra 
generated by the coefficients of $L(\partial)$.
Clearly, if $L(\partial)$ is of Adler type,
then $\{\cdot\,_\lambda\,\cdot\}$ restricts to a $\lambda$-bracket of $\mc U$.
Lemma \ref{20131027:prop1}, together with Theorem \ref{master},
is saying that $\mc U$, with this $\lambda$-bracket, is a PVA.
\end{remark}

\subsection{The generic pseudodifferential operator of order \texorpdfstring{$N$}{N}
and the corresponding AGD bi-PVA \texorpdfstring{$\mc V_N^\infty$}{V_N}}
\label{sec:2.2b}

Let $N$ be a positive integer. 
Consider the algebra of differential polynomials
$\mc V_N^\infty=\mb F[u_i^{(n)}\mid i\in I,n\in\mb Z_+]$,
where, as before, $I=\{-N,-N+1,-N+2,\dots\}$.
The \emph{generic} pseudodifferential operator on $\mc V_N^\infty$ is, by definition,
\begin{equation}\label{lcap}
L(\partial)
=\partial^N+u_{-N}\partial^{N-1}+u_{-N+1}\partial^{-N-2}+\ldots
=\sum_{n\leq N}u_{-n-1}\partial^n\in\mc V_N^\infty((\partial^{-1})),
\end{equation}
where $u_{-N-1}=1$.
For $c\in\mb F$,
let $H^{(L-c)}(\partial)(z,w)\in\mc V_N^\infty[\partial]((z^{-1},w^{-1}))$
be the corresponding series, as in equation \eqref{h}.

Recall by Theorem \ref{master}(a) that a $\lambda$-bracket on $\mc V_N^\infty$
is uniquely determined by assigning
the $\lambda$-brackets on generators, $\{{u_i}_\lambda{u_j}\}$, for all $i,j\in I$,
or, equivalently, their generating series $\{L(z)_\lambda L(w)\}$.
In particular, there exists a unique $\lambda$-bracket on $\mc V_N^\infty$
such that $L(\partial)-c$ is of Adler type:
\begin{equation}\label{generating2}
\{L(z)_\lambda L(w)\}_c=H^{(L-c)}(\lambda)(w,z)\,.
\end{equation}

It is clear from the definition \eqref{adlermap} of the Adler map $A^{(L)}$
that $A^{(L-c)}$ is linear in $c$,
and therefore 
\begin{equation}\label{HK}
H^{(L-c)}(\partial)=H(\partial)-cK(\partial)\,,
\end{equation} 
where $H=H^{(L)}$ is given by \eqref{Hseries} and \eqref{h}, 
and $K\in\Mat_{I\times I}\mc V_N^\infty[\partial]$.
\begin{theorem}\label{L-c}
The 1-parameter family of $\lambda$-brackets 
$\{\cdot\,_\lambda\,\cdot\}_c=\{\cdot\,_\lambda\,\cdot\}_H-c\{\cdot\,_\lambda\,\cdot\}_K$, $c\in\mb F$,
defines a structure of bi-PVA on $\mc V_N^\infty$.
\end{theorem}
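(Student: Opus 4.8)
The plan is to reduce the statement to the Adler-type machinery of Lemma~\ref{20131027:prop1} together with the Master Formula of Theorem~\ref{master}, so that no new identity for $H^{(L)}$ need be established here. The key observation is that, by construction (equation \eqref{generating2}), for every fixed $c\in\mb F$ the shifted operator $L(\partial)-c$ is of Adler type for the $\lambda$-bracket $\{\cdot\,_\lambda\,\cdot\}_c$.

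First I would fix $c$ and prove that $\{\cdot\,_\lambda\,\cdot\}_c$ is a genuine PVA $\lambda$-bracket on all of $\mc V_N^\infty$. Since $L-c$ is of Adler type, Lemma~\ref{20131027:prop1}(a) yields the skew-symmetry identity \eqref{skew-symseries} for the generating series $\{L(z)_\lambda L(w)\}_c$, and Lemma~\ref{20131027:prop1}(b) yields the Jacobi identity \eqref{jacobiseries}. Expanding these generating-series identities in powers of the spectral variables and matching coefficients, using $L(z)=z^N+\sum_{i\in I}u_iz^{-i-1}$, recovers exactly the generator-level identities \eqref{skewsimgen} and \eqref{jacobigen} for the brackets $\{u_i{}_\lambda u_j\}_c$. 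With skew-symmetry and Jacobi verified on generators, Theorem~\ref{master}(b)--(c) upgrades them to the whole differential algebra, so $\{\cdot\,_\lambda\,\cdot\}_c$ is a PVA $\lambda$-bracket for each $c$.

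It remains to extract the bi-PVA structure from this one-parameter family. Setting $c=0$ shows that $\{\cdot\,_\lambda\,\cdot\}_H$ is a PVA $\lambda$-bracket. To see that $\{\cdot\,_\lambda\,\cdot\}_K$ is one as well and that the two are compatible, I would exploit that skew-symmetry is linear while the Jacobi identity is quadratic in the bracket: for the combination $\{\cdot\,_\lambda\,\cdot\}_{aH+bK}=a\{\cdot\,_\lambda\,\cdot\}_H+b\{\cdot\,_\lambda\,\cdot\}_K$ the obstruction to the Jacobi identity is a homogeneous quadratic expression in $(a,b)$. Because $\{\cdot\,_\lambda\,\cdot\}_c=\{\cdot\,_\lambda\,\cdot\}_H-c\{\cdot\,_\lambda\,\cdot\}_K$ is PVA for all $c$, and scalar multiples of a PVA bracket are again PVA, this quadratic vanishes for all $(a,b)$ with $a\neq0$; since $\mb F$ has characteristic zero, hence is infinite, it vanishes identically. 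Thus $\{\cdot\,_\lambda\,\cdot\}_K$ satisfies the Jacobi identity, the cross term vanishes, and every linear combination $a\{\cdot\,_\lambda\,\cdot\}_H+b\{\cdot\,_\lambda\,\cdot\}_K$ is a PVA $\lambda$-bracket. By Definition~\ref{def:compatible} this is precisely the statement that $\mc V_N^\infty$ is a bi-PVA.

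The substantive work is already carried out in Lemma~\ref{20130925:lem1}(b) and Lemma~\ref{20131027:prop1}; granting those, the present argument is essentially formal. Within it, the only point demanding care is the coefficient extraction in the second paragraph: one must check that the generating-series skew-symmetry and Jacobi identities are genuinely term-by-term equivalent to \eqref{skewsimgen} and \eqref{jacobigen}, with nothing lost in the simultaneous expansions in the variables $z,w$ (or $z_1,z_2,z_3$) and in the $\partial$ hidden inside the factors $i_w(w-z-\partial)^{-1}$ and the $(\lambda+\partial)$-shifts.
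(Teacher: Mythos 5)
Your proof is correct and follows essentially the same route as the paper: for each fixed $c$ the Adler-type property of $L-c$ together with Lemma \ref{20131027:prop1} gives skew-symmetry and the Jacobi identity on generators, and Theorem \ref{master} upgrades these to a PVA structure on all of $\mc V_N^\infty$. Your closing polarization argument (the Jacobi obstruction is quadratic in the bracket and vanishes for infinitely many $c$, hence identically) merely makes explicit the passage from ``PVA for every $c$'' to ``bi-PVA'', a step the paper's proof leaves implicit.
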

\begin{proof}
The statement is a special case of Remark \ref{subalgebra}.
By Definition \ref{hamop},
we have to prove that the $\lambda$-bracket
on $\mc V$ defined by equation \eqref{generating2}
defines a PVA structure on $\mc V$.
By Theorem \ref{master},
this is the same as proving skew-symmetry \eqref{skewsimgen} and Jacobi identity \eqref{jacobigen}
on generators,
which are the same as equations \eqref{skew-symseries} and \eqref{jacobiseries} respectively.
\end{proof}
\begin{definition}\label{agd}
The \emph{AGD bi-PVA} $\mc V_N^\infty$ is given by the 1-parameter family 
of $\lambda$-brackets $\{\cdot\,_\lambda\,\cdot\}_c$, $c\in\mb F$.
The operators $K$ and $H$ are usually called, respectively, the
\emph{first} and the \emph{second Adler-Gelfand-Dickey Poisson structures}.
\end{definition}
Using equation \eqref{h} we get the following explicit formulas for
$\{L(z)_\lambda L(w)\}_c
=\{L(z)_\lambda L(w)\}_H-c\{L(z)_\lambda L(w)\}_K$:
\begin{equation}\label{eq:H}
\begin{array}{c}
\vphantom{\Big(}
\{L(z)_\lambda L(w)\}_H
=L(z)i_z(z-w-\lambda-\partial)^{-1}L(w)
\\
\vphantom{\Big(}
-L(w+\lambda+\partial)i_z(z-w-\lambda-\partial)^{-1}L^*(-z+\lambda)
\end{array}
\end{equation}
and
\begin{equation}\label{eq:K}
\begin{array}{c}
\vphantom{\Big(}
\{L(z)_\lambda L(w)\}_K
=i_z(z-w-\lambda)^{-1}\left(L(z)-L(w+\lambda)\right)
\\
\vphantom{\Big(}
+i_z(z-w-\lambda-\partial)^{-1}\left(L(w)-L^*(-z+\lambda)\right)\,.
\end{array}
\end{equation}
Expanding equations \eqref{eq:H} and \eqref{eq:K}
in powers of $z$ and $w$, we get
the symbols of the Poisson structures $H$ and $K$ ($i,j\in I$):
\begin{equation}\label{HKij}
\begin{array}{l}
\displaystyle{
H_{ji}(\lambda)
=
\sum_{k,\alpha\in\mb Z_+}\binom{k}{\alpha}u_{i-k-1}
(\lambda+\partial)^\alpha u_{j+k-\alpha}
}\\
\displaystyle{
-\sum_{k,\alpha,\beta\in\mb Z_+}(-1)^\alpha\binom{j}{\alpha}
\binom{i-k-1}{\beta}u_{j+k-\alpha}
(\lambda+\partial)^{\alpha+\beta}u_{i-\beta-k-1}
\,,}
\\
\displaystyle{
K_{ji}(\lambda)
=
\epsilon_{ij}
\sum_{k\in\mb Z_+}\left(
\binom{i}{k}(\lambda+\partial)^k-\binom{j}{k}(-\lambda)^k\right)
u_{i+j-k}
\,,}
\end{array}
\end{equation}
where $\epsilon_{ij}=+1$ if $i,j\in\mb Z_+$, $\epsilon_{ij}=-1$ if $i,j<0$,
and $\epsilon_{ij}=0$ otherwise.
Note that the sums are all finite since $u_k=0$ for $k<-N-1$. 
\begin{remark}\label{gc1}
Let $R_+=\oplus_{i\in\mb Z_+}\mb F[\partial]u_i\subset\mc V^\infty_N$
and $R_-=\big(\oplus_{i\in I_-}\mb F[\partial]u_i\big)\oplus\mb F\subset\mc V^\infty_N$.
By the second equation in \eqref{HKij},
$R_+$ and $R_-$ are commuting Lie conformal subalgebras of $\mc V^\infty_N$
for the $\lambda$-bracket $\{\cdot\,_\lambda\,\cdot\}_K$:
$\{{R_{\pm}}_\lambda R_{\pm}\}_K\subset R_{\pm}[\lambda]$,
and $\{{R_{\pm}}_\lambda R_{\mp}\}_K=0$.
In fact, the Lie conformal algebra $R_+$
is isomorphic to the general Lie conformal algebra $\mf{gc}_1$,
via the map $u_i\mapsto J^i$ (see \cite[Sec.2.10]{Kac96}).
Recall from \cite{DSK02} that we can identify $R_+\simeq\mf{gc}_1\simeq\mb F[\partial,x]$
(the space of polynomials in two commuting variables $\partial$ and $x$)
via the isomorphism $u_i^{(n)}\mapsto\partial^n x^i$.
With this identification, the $\lambda$-bracket on $\mf{gc}_1$ becomes:
$$
\{A(\partial,x)_\lambda B(\partial,x)\}_K
=
A(-\lambda,x+\lambda+\partial)B(\lambda+\partial,x)-B(\lambda+\partial,x-\lambda)A(-\lambda,x)\,.
$$
In the same spirit, we can identify 
$$
R_-\simeq\quot{\big(\mb F[\partial,x^{-1}]\oplus\mb F\big)}{\Span_{\mb F}\{\partial^mx^{-N-1-n}-\delta_{m,0}\delta_{n,0}\}_{m,n\in\mb Z_+}}\,,
$$
via the isomorphism $u_i^{(n)}\mapsto\partial^n x^i$,
for all $i\in I_-$ and $n\in\mb Z_+$.
Under this identification, the $K$-$\lambda$-bracket \eqref{HKij} on $R_-$ becomes:
$$
\{A(\partial,x)_\lambda B(\partial,x)\}_K
=
-i_xA(-\lambda,x+\lambda+\partial)B(\lambda+\partial,x)+i_xB(\lambda+\partial,x-\lambda)A(-\lambda,x)\,.
$$
In the same spirit, we can also rewrite the $H$-$\lambda$-bracket \eqref{HKij}.
We let $R=R_+\oplus R_-=\big(\oplus_{i\in I}\mb F[\partial]u_i\big)\oplus\mb F\subset\mc V^\infty_N$.
Note that $R\subset\mc V^\infty_N$
is not a Lie conformal subalgebra for the $H$-$\lambda$-bracket,
since the expression of the $\lambda$-bracket of two generators is quadratic.
On the other hand, we can represent homogeneous polynomials of degree $2$ 
in the variables $u_i^{(n)}$ with polynomials in $\partial_1,x_1^{\pm1},\partial_2,x_2^{\pm1}$,
via the identification $\partial_1^m\partial_2^nx_1^ix_2^j\mapsto u_i^{(m)}u_j^{(n)}$.
With this notation, the $H$-$\lambda$-bracket \eqref{HKij} becomes
$$
\begin{array}{l}
\vphantom{\Big(}
\displaystyle{
\{A(\partial,x)_\lambda B(\partial,x)\}_K
=
A(-\lambda,x_1)B(\lambda+\partial_1+\partial_2,x_2)
i_{x_1}(x_1-x_2-\partial_2-\lambda)^{-1}
} \\
\vphantom{\Big(}
\displaystyle{
-i_{x_2}B(\lambda+\partial_1+\partial_2,x_2-\partial_1-\lambda)
i_{x_1}A(-\lambda,x_1+\partial_1+\lambda)
i_{x_1}(x_1-x_2+\partial_1+\lambda)^{-1}
\,.}
\end{array}
$$
\end{remark}

\subsection{The generic differential operator of order \texorpdfstring{$N$}{N}
and the corresponding AGD bi-PVA \texorpdfstring{$\mc V_N$}{V_N}}
\label{sub:diff_case}

Let $I_-=\{-N,\dots,-1\}$ ($\subset I$).
Consider the subalgebra of differential polynomials
$\mc V_N=\mb F[u_i^{(n)}\mid i\in I_-,n\in\mb Z_+]$ of $\mc V^\infty_N$,
and the differential operator
\begin{equation}\label{lcap2}
L(\partial)
=\partial^N+u_{-N}\partial^{N-1}+\dots+u_{-2}\partial+u_{-1}
\in\mc V_N[\partial]
\,,
\end{equation}
which we call \emph{generic}.
By Lemma \ref{20131029:lem},
for all $c\in\mb F$ the corresponding operator $H^{(L-c)}(\partial)(z,w)$,
given by equation \eqref{h}, lies in $\mc V[\partial][z,w]$,
i.e. $H^{(L-c)}_{ij}(\partial)=0$ unless $i,j\in I_-$.
Assign the (unique) $\lambda$-bracket on $\mc V_N$
such that $L(\partial)-c$ is of Adler type:
\begin{equation}\label{generating2b}
\{L(z)_\lambda L(w)\}_c=H^{(L-c)}(\lambda)(w,z)\,.
\end{equation}
\begin{theorem}\label{L-c-2}
The 1-parameter family of $\lambda$-brackets 
$\{\cdot\,_\lambda\,\cdot\}_c=\{\cdot\,_\lambda\,\cdot\}_H-c\{\cdot\,_\lambda\,\cdot\}_K$, $c\in\mb F$,
defines a structure of bi-PVA on $\mc V_N$.
\end{theorem}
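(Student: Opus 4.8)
The plan is to reproduce the argument used for Theorem \ref{L-c} almost verbatim; the single genuinely new point is that, because $\mc V_N$ has only the finitely many differential variables $u_{-N},\dots,u_{-1}$, one must first check that the prescription \eqref{generating2b} defines a $\lambda$-bracket \emph{internal} to $\mc V_N$, i.e. that the brackets of the generators do not leak into variables $u_k$ with $k\geq0$. To this end I would apply Lemma \ref{20131029:lem} with $\mc V=\mc V_N$ and $L-c\in\mc V_N[\partial]$; note that $L-c$ is again a monic differential operator of order $N$, the subtraction of the constant only shifting the free term $u_{-1}$. The lemma gives $H^{(L-c)}(\partial)(z,w)\in\mc V_N[\partial][z,w]$, whence $H^{(L-c)}_{ij}(\partial)=0$ unless $i,j\in I_-$ and every surviving entry has coefficients in $\mc V_N$. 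Thus $H^{(L-c)}(\lambda)(w,z)$ prescribes values $\{u_i{}_\lambda u_j\}_c\in\mc V_N[\lambda]$ for all $i,j\in I_-$, and by Theorem \ref{master}(a) there is a unique $\lambda$-bracket on $\mc V_N$ realizing them. This closure is the only step I expect to require real care, and it is exactly what separates the present finite-variable statement from Theorem \ref{L-c}, where no such restriction was needed.

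Once the $\lambda$-bracket is in place on $\mc V_N$ and $L-c$ is of Adler type for it by construction, the rest is an instance of Remark \ref{subalgebra} and repeats the proof of Theorem \ref{L-c}. Since $L-c\in\mc V_N((\partial^{-1}))$ is Adler type, Lemma \ref{20131027:prop1} supplies the skew-symmetry identity \eqref{skew-symseries} and the Jacobi identity \eqref{jacobiseries} for the generating series, now read in $\mc V_N[\lambda]((z^{-1},w^{-1}))$ and $\mc V_N[\lambda,\mu]((z_1^{-1},z_2^{-1},z_3^{-1}))$. Because Lemma \ref{20131029:lem} makes these series polynomial in the $z$-variables, extracting the coefficients of $z^{-i-1}w^{-j-1}$ (and its three-variable analogue) is unambiguous and yields precisely the skew-symmetry \eqref{skewsimgen} and Jacobi \eqref{jacobigen} conditions on the generators $u_i$, $i\in I_-$. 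By Theorem \ref{master}(b)--(c) this makes $\mc V_N$ a PVA for each fixed $c\in\mb F$.

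It remains to pass from ``PVA for every $c$'' to ``bi-PVA''. Here I would use that the Adler map $A^{(L-c)}$ is linear in $c$, so that $H^{(L-c)}=H-cK$ depends affinely on $c$ as recorded in \eqref{HK}, and hence $\{\cdot\,_\lambda\,\cdot\}_c=\{\cdot\,_\lambda\,\cdot\}_H-c\{\cdot\,_\lambda\,\cdot\}_K$ is a genuine one-parameter linear family. Having shown that every member of this family is a PVA $\lambda$-bracket is exactly the compatibility condition of Definition \ref{def:compatible}, so $\{\cdot\,_\lambda\,\cdot\}_H$ and $\{\cdot\,_\lambda\,\cdot\}_K$ are compatible and $\mc V_N$ is a bi-PVA, as claimed.
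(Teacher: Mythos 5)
Your proposal is correct and follows essentially the same route as the paper: the paper's proof of Theorem \ref{L-c-2} is literally ``the same as the proof of Theorem \ref{L-c}'', with the closure issue you highlight (that $H^{(L-c)}_{ij}(\partial)=0$ unless $i,j\in I_-$) already disposed of via Lemma \ref{20131029:lem} in the paragraph preceding the theorem. You have merely made explicit the steps the paper leaves implicit.
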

\begin{proof}
The same as the proof of Theorem \ref{L-c}.
\end{proof}
\begin{definition}\label{agd0}
The \emph{AGD bi-PVA} $\mc V_N$ is given by the 1-parameter family 
of $\lambda$-brackets $\{\cdot\,_\lambda\,\cdot\}_c$, $c\in\mb F$.
\end{definition}
The PVA $\mc V_N$ coincides with the classical $\mc W$-algebra
associated to the Lie algebra $\mf{gl}_N$ and its principal nilpotent element, \cite{DS85}.
The explicit formula for the $\lambda$-brackets of generators corresponding to
$H$ and $K$ are the same as \eqref{eq:H} and \eqref{eq:K}.
The corresponding matrix elements $H_{ji}(\lambda)$ and $K_{ji}(\lambda)$
are the same as in \eqref{HKij},
letting $u_i=0$ for $i\geq0$.
Note that $\mc V_N$ is a PVA subalgebra of $\mc V^\infty_N$ 
for the Poisson structure $K$,
but not for the Poisson structure $H$.
\begin{example}\label{20131003:exa1}
For $N=1$, we have $\mc V_1=\mb F[u^{(n)}\mid n\in\mb Z_+]$, and $L(\partial)=\partial+u\in\mc V[\partial]$.
In this case, 
$\{u_{\lambda}u\}_{H}=-\lambda$ and $K=0$.
\end{example}

\subsection{The PVA homomorphism \texorpdfstring{$\varphi_A$}{phi_A}}\label{sub:extra1}

\begin{proposition}\label{prop:halloween1}
Let $\mc V$ be a PVA, with $\lambda$-bracket $\{\cdot\,_\lambda\,\cdot\}$.
Let 
$$
A(\partial)=\partial^N+\sum_{i\in I}a_i\partial^{-i-1}\in\mc V((\partial^{-1}))
\,\, \,\,
\Big(\text{resp.}\,\,
A(\partial)=\partial^N+\sum_{i\in I_-}a_i\partial^{-i-1}\in\mc V[\partial]
\Big)
$$
be an Adler type pseudodifferential (resp. differential) operator
with respect to the $\lambda$-bracket $\{\cdot\,_\lambda\,\cdot\}$.
Then we have a PVA homomorphism
$$
\begin{array}{l}
\vphantom{\bigg(}
\displaystyle{
\varphi_A:\,\mc V_N^\infty=\mb F[u_i^{(n)}\mid i\in I,n\in\mb Z_+]\to\mc V
} \\
\vphantom{\bigg(}
\displaystyle{
\,\,\,\,
\Big(\text{resp.}\,\,
\varphi_A:\,\mc V_N=\mb F[u_i^{(n)}\mid i\in I_-,n\in\mb Z_+]\to\mc V
\Big)
\,,}
\end{array}
$$
from the AGD PVA $\mc V_N^\infty$ (resp. $\mc V_N$) for $c=0$ to $\mc V$,
given by $\varphi_A(u_i)=a_i$, for all $i$.
\end{proposition}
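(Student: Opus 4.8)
The statement to prove is that $\varphi_A(u_i)=a_i$ extends to a PVA homomorphism $\mc V_N^\infty\to\mc V$ (resp. $\mc V_N\to\mc V$). Since $\mc V_N^\infty=\mb F[u_i^{(n)}]$ is a free differential algebra on the generators $\{u_i\}_{i\in I}$, there is a unique differential algebra homomorphism sending $u_i\mapsto a_i$; the only content is that this differential-algebra map respects the $\lambda$-brackets. So the plan is to verify the $\lambda$-bracket compatibility on generators and then lift it to all of $\mc V_N^\infty$ using the Master Formula.

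**Key steps.** First I would record that $\varphi_A$ is a well-defined differential-algebra homomorphism by freeness, and observe the clean reformulation: the generating series of the generators are exactly the symbols, i.e. $\varphi_A(L_{\mc V_N^\infty}(z))=A(z)$, where $L_{\mc V_N^\infty}(\partial)=\partial^N+\sum_i u_i\partial^{-i-1}$ is the generic operator \eqref{lcap} and $A(\partial)$ is the given Adler type operator. Second, I would reduce the claim ``$\varphi_A$ is a PVA homomorphism'' to checking it on the generating series, i.e. to the identity
\begin{equation}\notag
\varphi_A\big(\{L(z)_\lambda L(w)\}_{c=0}\big)=\{A(z)_\lambda A(w)\}\,.
\end{equation}
This reduction is justified by Theorem \ref{master}(a): any element of $\mc V_N^\infty$ is a differential polynomial in the $u_i$, and the Master Formula \eqref{masterformula} expresses $\{f_\lambda g\}_{c=0}$ purely in terms of the generator brackets $\{u_i{}_\lambda u_j\}_{c=0}$ together with the differential-algebra operations $\partial$, multiplication, and the partial derivatives $\partial/\partial u_i^{(n)}$. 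Since $\varphi_A$ commutes with $\partial$ and with products, applying $\varphi_A$ to \eqref{masterformula} and using the chain rule transforms it into the Master Formula for $\mc V$ computing $\{\varphi_A(f)_\lambda\varphi_A(g)\}$, provided the generator identity holds.

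**The core identity.** The heart of the matter is thus the displayed generating-series identity. But this is immediate from the two definitions of Adler type. On the left, $\{L(z)_\lambda L(w)\}_{c=0}=H^{(L)}(\lambda)(w,z)$ by \eqref{generating2} (resp. \eqref{generating2b}), and $H^{(L)}(\lambda)(w,z)$ is the universal expression \eqref{h} built from the symbol $L(z)$ of the generic operator. Applying the differential-algebra homomorphism $\varphi_A$ to this expression replaces every coefficient $u_i$ by $a_i$, hence replaces the symbol $L(z)$ by $A(z)$ throughout \eqref{h}; since $\varphi_A$ commutes with $\partial$ and with the geometric expansions $i_w(\cdots)^{-1}$ (which only involve $\partial$ acting on coefficients), we obtain exactly $H^{(A)}(\lambda)(w,z)$. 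On the right, $A(\partial)$ being of Adler type for $\{\cdot\,_\lambda\,\cdot\}$ means precisely $\{A(z)_\lambda A(w)\}=H^{(A)}(\lambda)(w,z)$ by Definition \ref{def:adler}. The two sides therefore agree.

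**Main obstacle.** The conceptual steps are routine, so the only real care needed is bookkeeping: verifying that $\varphi_A$ genuinely commutes with the formal operations appearing in \eqref{h} and \eqref{masterformula} — in particular with the expansions $i_w(w-z-\partial)^{-1}$ and with the substitution $L^*(-w+\partial)$, where $\partial$ acts on coefficients that $\varphi_A$ maps into $\mc V$. This commutes because each such operation is a (possibly infinite but coefficientwise finite) $\mb F[\partial]$-linear combination of the coefficients, and $\varphi_A$ is $\partial$-equivariant and $\mb F$-linear. For the differential case one additionally notes, via Lemma \ref{20131029:lem}, that $H^{(A)}(\partial)(z,w)$ has no negative powers of $z,w$, so the image lands correctly and the restriction $\varphi_A:\mc V_N\to\mc V$ is well defined. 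I expect no genuine difficulty beyond this verification.
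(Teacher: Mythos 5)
Your proposal is correct and follows essentially the same route as the paper: reduce to the generating-series identity $\varphi_A(\{L(z)_\lambda L(w)\}_H)=\{A(z)_\lambda A(w)\}$ and observe that both sides equal $H^{(A)}(\lambda)(w,z)$, by the universality of formula \eqref{h} on the left and by Definition \ref{def:adler} on the right. The paper states this in three lines; your extra bookkeeping (Master Formula reduction, $\partial$-equivariance of $\varphi_A$ through the expansions) just makes explicit what the paper calls ``clear by the definitions.''
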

\begin{proof}
The map $\varphi_A$ is defined, in terms of generating series, by $\varphi_A(L(z))=A(z)$.
Hence, 
we only need to check
that $\varphi_A(\{L(z)_\lambda L(w)\}_H)=\{A(z)_\lambda A(w)\}$,
or, equivalently,
$\varphi_A(H^{(L)}(\lambda)(w,z))=H^{(A)}(\lambda)(w,z)$.
This is clear by the definitions of $H^{(L)}$, $H^{(A)}$ and $\varphi_A$.
\end{proof}

\subsection{Product of Adler type pseudodifferential operators
and the generalized Miura map}\label{sub:extra2}

\begin{proposition}\label{prop:halloween2}
Let $\mc V$ be a PVA
and let 
$A(\partial),B(\partial)\in\mc V((\partial^{-1}))$
be Adler type pseudodifferential operators
such that $\{A(z)_\lambda B(w)\}=0$.
Then $A(\partial)\circ B(\partial)$
is an Adler type pseudodifferential operator.
\end{proposition}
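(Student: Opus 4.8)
The plan is to check directly that $L:=A\circ B$ satisfies the Adler identity of Definition \ref{def:adler}, i.e. $\{L(z)_\lambda L(w)\}=H^{(L)}(\lambda)(w,z)$. By the explicit form \eqref{eq:H}, this amounts to proving
\begin{equation*}
\{(A\circ B)(z)_\lambda(A\circ B)(w)\}
=(A\circ B)(z)\,i_z(z-w-\lambda-\partial)^{-1}(A\circ B)(w)
-(A\circ B)(w+\lambda+\partial)\,i_z(z-w-\lambda-\partial)^{-1}(A\circ B)^*(-z+\lambda),
\end{equation*}
where, by \eqref{eq:mult_symbol}, $(A\circ B)(z)=A(z+\partial)B(z)$ and $(A\circ B)^*=B^*\circ A^*$. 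So the whole statement is an identity of generating series, to be verified using only sesquilinearity, the Leibniz rules, the Adler identities \eqref{generating} for $A$ and for $B$, and the hypothesis $\{A(z)_\lambda B(w)\}=0$.

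First I would record that the cross-brackets vanish in both orders: the hypothesis gives $\{A(z)_\lambda B(w)\}=0$, and skew-symmetry (Definition \ref{def:pva}(iii)) then forces $\{B(z)_\lambda A(w)\}=-\{A(w)_{-\lambda-\partial}B(z)\}=0$ as well. By sesquilinearity this propagates to all derivatives of the coefficients, so every $\lambda$-bracket pairing a coefficient of $A$ with a coefficient of $B$ is zero.

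Next I would expand the left-hand side. Writing $(A\circ B)(z)=A(z+\partial)B(z)$ as a sum of genuine products $a_m B^{(j)}(z)$ of elements of $\mc V$, I would apply the left Leibniz rule in the first argument and the right Leibniz rule in the second. Each resulting term brackets an $A$-coefficient against an $A$-coefficient, a $B$-coefficient against a $B$-coefficient, or a cross pair; by the previous step only the first two families survive. Resumming the surviving monomials back into generating series — and this is where the bookkeeping is heaviest, since sesquilinearity and the arrow notation shift the internal spectral parameters by the $\partial$'s coming from $A(z+\partial)$ — the ``$AA$'' family reduces to $\{A(z')_\mu A(w')\}$ with suitably $\partial$-shifted arguments, the $B$-factors surviving as left and right multipliers, and symmetrically for the ``$BB$'' family. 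I would then substitute \eqref{eq:H} for $A$ into the first family and \eqref{eq:H} for $B$ into the second.

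Finally I would combine. Each substitution produces a two-term expression, so four terms in all, whereas the target has two; the reconciliation is a purely rational manipulation of the $i_z(\ldots)^{-1}$ factors, cancelling the mixed contributions and fusing the rest by means of the geometric-series identity \eqref{obvious}, exactly in the style of the combination step in the proof of Lemma \ref{20130925:lem1}(b). Recognizing the surviving two terms as the Adler expression for $A\circ B$ uses \eqref{eq:mult_symbol} once more, in the form $(A\circ B)^*(-z+\lambda)=B^*(-z+\lambda+\partial)A^*(-z+\lambda)$, to read off the second term. The main obstacle, and essentially the whole content of the proof, is keeping the $\partial$-shifts and the choice of expansion ($i_z$ versus $i_w$) consistent across all four terms so that the cancellations of \eqref{obvious} actually apply; the vanishing of the cross-brackets is what guarantees that no genuinely new terms, beyond those coming from $A$ and $B$ separately, can appear.
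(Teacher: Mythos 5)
Your proposal follows the paper's proof essentially verbatim: expand $\{A(z+\partial)B(z)\,_\lambda\,A(w+\partial)B(w)\}$ by sesquilinearity and the Leibniz rules, discard the cross terms using $\{A(z)_\lambda B(w)\}=0$ (and its skew-symmetric counterpart), substitute the Adler identity \eqref{eq:H} into the surviving $AA$- and $BB$-families, and recognize the result as $H^{(A\circ B)}(\lambda)(w,z)$ via $(A\circ B)^*(-z+\lambda)=B^*(-z+\lambda+\partial)A^*(-z+\lambda)$. One correction to your description of the final step: no appeal to the rational identity \eqref{obvious} is needed here (unlike in Lemma \ref{20130925:lem1}(b)) --- once the $\partial$-placements are tracked, the second term produced by the $AA$-family and the first term produced by the $BB$-family are literal negatives of each other and cancel outright, while the two remaining terms are already, with no further fusing, the two terms of the Adler expression for $A\circ B$.
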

\begin{proof}
By \eqref{eq:mult_symbol} we need to show that
$$
\{A(z+\partial)B(z)_{\lambda}A(w+\partial)B(w)\}
=H^{(A\circ B)}(\lambda)(w,z)\,.
$$
By the sesquilinearity and Leibniz rules, and by the assumption that $A$ and $B$
are of Adler type, and that $\{A(z)_\lambda B(w)\}=0$,
we have
$$
\begin{array}{l}
\displaystyle{
\{A(z+\partial)B(z)_{\lambda}A(w+\partial)B(w)\}
}\\
\vphantom{\Big)}
\displaystyle{
=\{A(z+x)_{\lambda+x} A(w+y)\}
\Big(\Big|_{x=\partial }B(z)\Big)
\Big(\Big|_{y=\partial}B(w)\Big)
}\\
\vphantom{\Big)}
\displaystyle{
+A(w+\lambda+\partial)
\{B(z)_{\lambda+\partial}B(w)\}_\rightarrow
A^*(-z+\lambda)
}\\
\vphantom{\Big)}
\displaystyle{
=\left(A(z+\partial)B(z)\right)
i_z(z-w-\lambda-\partial)^{-1}
A(w+\partial)B(w)
}\\
\vphantom{\Big)}
\displaystyle{
-A(w+\lambda+\partial)B(w+\lambda+\partial)
i_z(z-w-\lambda-\partial)^{-1}
B^*(-z+\lambda+\partial)A^*(-z+\lambda)\,,
}
\end{array}
$$
which is the same as $H^{(A\circ B)}(\lambda)(w,z)$.
\end{proof}
Let $M$ and $N$ be positive integers.
Consider the AGD PVAs (for $c=0$)
$\mc V_M^\infty=\mb F[a_i^{(m)}\mid i\geq-M,m\in\mb Z_+]$
(resp. 
$\mc V_M=\mb F[a_i^{(m)}\mid -M\leq i\leq-1,m\in\mb Z_+]$),
$\mc V_N^\infty=\mb F[b_j^{(n)}\mid j\geq-N,n\in\mb Z_+]$
(resp. 
$\mc V_N=\mb F[b_j^{(n)}\mid -N\leq j\leq-1,n\in\mb Z_+]$),
and $\mc V_{M+N}^\infty=\mb F[u_i^{(n)}\mid i\geq-M-N,n\in\mb Z_+]$
(resp. 
$\mc V_{M+N}=\mb F[u_i^{(n)}\mid -M-N\leq j\leq-1,n\in\mb Z_+]$),
and the corresponding
generic Adler type  pseudodifferential (resp. differential) operators
$$
\begin{array}{l}
\displaystyle{
A(\partial)=\partial^M+\sum_{i=-M}^\infty a_i\partial^{-i-1}
\,\,\,\,
\Big(\text{resp.}\,\,
A(\partial)=\partial^M+\sum_{i=-M}^{-1} a_i\partial^{-i-1}
\Big)\,,
} \\
\displaystyle{
B(\partial)=\partial^N+\sum_{j=-N}^\infty b_j\partial^{-j-1}
\,\,\,\,
\Big(\text{resp.}\,\,
B(\partial)=\partial^N+\sum_{j=-N}^{-1} b_j\partial^{-j-1}
\Big)\,,
} \\
\displaystyle{
L(\partial)=\partial^{M+N}+\sum_{i=-M-N}^\infty u_i\partial^{-i-1}
\,\,\,\,
\Big(\text{resp.}\,\,
L(\partial)=\partial^{M+N}+\sum_{i=-M-N}^{-1} u_i\partial^{-i-1}
\Big)
\,.}
\end{array}
$$
\begin{proposition}\label{prop:halloween3}
We have a PVA structure on the algebra of differential polynomials
$$
\begin{array}{l}
\vphantom{\Big)}
\displaystyle{
\mc V_M^\infty\otimes\mc V_N^\infty=\mb F[a_i^{(n)},b_j^{(n)}\mid i\geq-M, j\geq-N,n\in\mb Z_+]
} \\
\displaystyle{
\,\,\,\,
\Big(\text{resp.}\,\,
\mc V_M\otimes\mc V_N
=\mb F[a_i^{(n)},b_j^{(n)}\mid -M\leq i\leq -1,-N\leq j\leq -1,n\in\mb Z_+]
\Big)
\,,}
\end{array}
$$
given on generators by
\begin{equation}\label{halloween:eq1}
\begin{array}{l}
\vphantom{\Big)}
\displaystyle{
\{A(z)_\lambda A(w)\}=H^{(A)}(\lambda)(w,z)
\,\,,\,\,\,\,
\{B(z)_\lambda B(w)\}=H^{(B)}(\lambda)(w,z)
\,,} \\
\displaystyle{
\{A(z)_\lambda B(w)\}=\{B(z)_\lambda A(w)\}=0
\,,}
\end{array}
\end{equation}
and a PVA homomorphism
$$
\mu_{A,B}:\,
\mc V_{M+N}^\infty\to\mc V_M^\infty\otimes\mc V_N^\infty
\,\,\Big(\text{resp. }
\mu_{A,B}:\,
\mc V_{M+N}\to\mc V_M\otimes\mc V_N
\Big)
\,,
$$
such that $\mu_{A,B}(u_i)$
is the coefficient of $\partial^{-i-1}$ in $A(\partial)\circ B(\partial)$.
In terms of generating series,
$$
\mu_{A,B}(L(z))=A(z+\partial)B(z)\,.
$$
\end{proposition}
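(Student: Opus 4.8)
The plan is to deduce the statement from the three preceding results: Theorem \ref{master} (to build the tensor product PVA), Proposition \ref{prop:halloween2} (to recognize $A\circ B$ as an Adler type operator), and Proposition \ref{prop:halloween1} (to produce the homomorphism). Essentially the whole proposition is a composition of these, so the only genuinely new work is constructing and verifying the PVA structure on the tensor product.

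First I would build that structure. Viewing $\mc V_M^\infty\otimes\mc V_N^\infty$ as the algebra of differential polynomials in the combined set of variables $\{a_i\}_{i\geq-M}\cup\{b_j\}_{j\geq-N}$, I define a $\lambda$-bracket on generators by declaring $\{a_i{}_\lambda a_{i'}\}$ and $\{b_j{}_\lambda b_{j'}\}$ to be the Adler brackets prescribed by \eqref{halloween:eq1} (i.e. the $c=0$ AGD brackets of each factor, so the defining matrix $H$ is block diagonal with blocks $H^{(A)}$ and $H^{(B)}$), and all mixed brackets $\{a_i{}_\lambda b_j\}=\{b_j{}_\lambda a_i\}=0$. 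By Theorem \ref{master}(a) this extends uniquely via the Master Formula, and by parts (b)--(c) it defines a PVA provided skew-symmetry and the Jacobi identity hold on triples of generators.

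Next I would check those two conditions. Skew-symmetry among the $a_i$'s (resp. $b_j$'s) holds because $A$ (resp. $B$) is of Adler type, by Lemma \ref{20131027:prop1}(a); for a mixed pair both sides of the skew-symmetry relation vanish, so it is automatic. For the Jacobi identity, the pure triples (all three generators among the $a_i$, or all among the $b_j$) follow from Lemma \ref{20131027:prop1}(b) applied to $A$ and to $B$ respectively. Every mixed triple collapses to $0=0$: since the cross brackets vanish and the Leibniz rules turn a bracket against a polynomial into a sum of brackets against its generators, any nested bracket that contains at least one mixed pairing is identically zero. This is the only new verification and it is routine; the essential point is that the vanishing of $\{A(z)_\lambda B(w)\}$ propagates through the Leibniz rules. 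With the PVA structure in place, $A$ and $B$ remain of Adler type over $\mc V_M^\infty\otimes\mc V_N^\infty$ (their self-brackets are unchanged), and $\{A(z)_\lambda B(w)\}=0$ by construction.

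Finally I would assemble the homomorphism. By Proposition \ref{prop:halloween2}, $A(\partial)\circ B(\partial)$ is of Adler type over $\mc V_M^\infty\otimes\mc V_N^\infty$; since $A=\partial^M+\dots$ and $B=\partial^N+\dots$ are monic, $A\circ B=\partial^{M+N}+\dots$ is monic of order $M+N$, matching the shape of the generic operator $L$ on $\mc V_{M+N}^\infty$. Applying Proposition \ref{prop:halloween1} to the Adler type operator $A\circ B$ then yields a PVA homomorphism $\mc V_{M+N}^\infty\to\mc V_M^\infty\otimes\mc V_N^\infty$ sending $u_i$ to the coefficient of $\partial^{-i-1}$ in $A\circ B$; this is $\mu_{A,B}$, and on symbols $\mu_{A,B}(L(z))=(A\circ B)(z)=A(z+\partial)B(z)$ by \eqref{eq:mult_symbol}. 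The differential case is identical, replacing $\mc V^\infty$ by $\mc V$ throughout and using that $A\circ B$ is again a differential operator of order $M+N$ when $A$ and $B$ are. The main obstacle, such as it is, is organizing the case analysis for the Jacobi identity on mixed triples, but each case terminates immediately once the Leibniz rules are applied.
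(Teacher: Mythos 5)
Your proposal is correct and follows essentially the same route as the paper: establish that \eqref{halloween:eq1} defines the tensor-product PVA structure (the paper calls this ``immediate to check''; you spell it out via Theorem \ref{master} and the vanishing of mixed brackets), then apply Proposition \ref{prop:halloween2} to see that $A\circ B$ is of Adler type, and Proposition \ref{prop:halloween1} to obtain $\mu_{A,B}=\varphi_{A\circ B}$. The only difference is that you make explicit the generator-level verification of skew-symmetry and the Jacobi identity, which the paper leaves to the reader.
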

\begin{proof}
It is immediate to check that formulas \eqref{halloween:eq1}
define a structure of PVA on $\mc V_M^\infty\otimes\mc V_N^\infty$
(it is the tensor product of the PVAs $\mc V_M^\infty$ and $\mc V_N^\infty$).
By construction, $A(\partial)$ and $B(\partial)$
are of Adler type in this PVA.
Hence, by Proposition \ref{prop:halloween2}
$A(\partial)\circ B(\partial)$ is of Adler type as well.
Therefore, by Proposition \ref{prop:halloween1}
we get the corresponding PVA homomorphism,
which is exactly $\varphi_{A\circ B}=\mu_{A,B}$.
\end{proof}
\begin{definition}\label{gener-miura}
We call $\mu_{A,B}$ the \emph{generalized Miura map} of type $(M,N)$.
The same argument as in the proof of Proposition \ref{prop:halloween3}
can be applied to any number of factors,
so we can talk about the \emph{generalized Miura map}
$\mu_{A_1,\dots,A_s}$ of type $(N_1,\dots,N_s)$.
\end{definition}

\subsection{The classical \texorpdfstring{$\mc W$}{W}-algebra
\texorpdfstring{$\mc W_N$}{W_N}}\label{sub:sln_red}

\begin{lemma}\label{20130926:lem1}
In the AGD bi-PVA $\mc V_N^\infty$ (resp. $\mc V_N$) we have:
\begin{enumerate}[(a)]
\item
$\{u_{-N}{}_\lambda L(w)\}_H=L(w)-L(w+\lambda)$;
\item
$\{L(z)_\lambda u_{-N}\}_H=L^*(-z+\lambda)-L(z)$;
\item
$\{u_{-N}{}_{\lambda}u_{-N}\}_H=-N\lambda$;
\item
$\{u_{-N}{}_\lambda L(w)\}_K=\{L(z){}_\lambda u_{-N}\}_K=0$.
\end{enumerate}
\end{lemma}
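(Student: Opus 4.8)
The plan is to obtain all four brackets as single coefficients of the explicit generating-series formulas \eqref{eq:H} and \eqref{eq:K}. The key observation is that $u_{-N}$ is the coefficient of $z^{N-1}$ in $L(z)$ (and of $w^{N-1}$ in $L(w)$); since the monomials $z^{-a-1}$ and $w^{-b-1}$ are scalars, $\mb F$-bilinearity of the $\lambda$-bracket gives $\{u_{-N}{}_\lambda L(w)\} = [z^{N-1}]\{L(z)_\lambda L(w)\}$ and $\{L(z)_\lambda u_{-N}\} = [w^{N-1}]\{L(z)_\lambda L(w)\}$, where $[z^{k}]$ denotes extraction of the coefficient of $z^{k}$. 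Thus everything reduces to reading off single coefficients.

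First I would prove (a) by extracting $[z^{N-1}]$ from \eqref{eq:H}. The crucial simplification is that both $L(z)$ (in the first term) and $L^*(-z+\lambda)$ (in the second term) are monic of order $N$ in $z$ with top coefficient $1$, while the geometric factor $i_z(z-w-\lambda-\partial)^{-1}=\sum_{k\geq0}z^{-k-1}(w+\lambda+\partial)^k$ can only lower the $z$-degree. Hence at order $z^{N-1}$ only the top symbols survive, and only together with $k=0$: the first term contributes $L(w)$, and the second contributes $-L(w+\lambda+\partial)\cdot 1=-L(w+\lambda)$, using $(w+\lambda+\partial)^p[1]=(w+\lambda)^p$. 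This yields $\{u_{-N}{}_\lambda L(w)\}_H=L(w)-L(w+\lambda)$.

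Part (c) then follows by extracting $[w^{N-1}]$ from (a): since $[w^{N-1}]L(w)=u_{-N}$ and $[w^{N-1}]L(w+\lambda)=N\lambda+u_{-N}$, one gets $-N\lambda$. For (b) I would invoke the generating-series skew-symmetry \eqref{skew-symseries}: extracting $[w^{N-1}]$ from $\{L(z)_\lambda L(w)\}=-\{L(w)_{-\lambda-\partial}L(z)\}$ gives $\{L(z)_\lambda u_{-N}\}_H=-\{u_{-N}{}_{-\lambda-\partial}L(z)\}_H$, and substituting (a) together with the identity $L(z+\mu)|_{\mu\to-\lambda-\partial}=L^*(-z+\lambda)$ produces $L^*(-z+\lambda)-L(z)$. (Alternatively one can extract $[w^{N-1}]$ from \eqref{eq:H} directly, though the bookkeeping is heavier since $w$ then appears inside the operator $(w+\lambda+\partial)^k$.) Finally (d) follows by extracting $[z^{N-1}]$ from \eqref{eq:K}: the summand containing $L(z)-L(w+\lambda)$ contributes $+1$ at order $z^{N-1}$ (from the top symbol of $L(z)$ with $k=0$), while the summand containing $-L^*(-z+\lambda)$ contributes $-1$ (again via its monic top symbol with $k=0$), and these cancel; the symmetric statement $\{L(z)_\lambda u_{-N}\}_K=0$ follows either by the same extraction in $w$ or again via skew-symmetry.

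I expect the only step needing genuine care — rather than conceptual difficulty — to be the bookkeeping in the non-leading pieces: tracking how the $\partial$ inside $L(w+\lambda+\partial)$ and inside the geometric factors acts on the rightmost argument, verifying that $L^*(-z+\lambda)$ has top $z$-coefficient $1$, and justifying that the skew-symmetric substitution in (b) reproduces exactly $L^*(-z+\lambda)$ (this last being the content of \eqref{20131022:eq5}). Once one observes that monicity of $L$ and of $L^*(-z+\lambda)$ confines every surviving contribution to the single index $k=0$ and to the top symbols, the relevant coefficients are immediate and no infinite sums ever need to be evaluated.
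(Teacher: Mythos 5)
Your proposal is correct and follows essentially the same route as the paper: both extract the coefficient of $z^{N-1}$ (equivalently $\res_z(\cdots)z^{-N}$) from \eqref{eq:H} and \eqref{eq:K}, use that $L(z)$ and $L^*(-z+\lambda)$ are monic of order $N$ in $z$ so only the leading symbol with the $k=0$ term of the geometric series survives, deduce (b) from (a) by skew-symmetry via \eqref{20131022:eq5}, and obtain (c) from (a) by a further coefficient extraction.
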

\begin{proof}
By equation \eqref{eq:H}, we have
$$
\begin{array}{l}
\vphantom{\Big)}
\{u_{-N}{}_{\lambda}L(w)\}_H
=\res_zL(z)i_{z}(z-w-\lambda-\partial)^{-1}z^{-N}L(w)
\\
\vphantom{\Big)}
-L(w+\lambda+\partial)
\res_z i_z(z-w-\lambda-\partial)^{-1}
L^*(-z+\lambda)z^{-N}
\,.
\end{array}
$$
Note that
$L(z)i_{z}(z-w-\lambda-\partial)^{-1}$ and
$i_z(z-w-\lambda-\partial)^{-1}L^*(-z+\lambda)$
have order $N-1$ in $z$ and their leading
coefficient is $u_{-N-1}=1$.
Hence,
$$
\res_z L(z)i_{z}(z-w-\lambda-\partial)^{-1}z^{-N}
=\res_z i_z(z-w-\lambda-\partial)^{-1}L^*(-z+\lambda)z^{-N}
=1
\,.
$$
This proves part (a). Part (b) follows by the skew-symmetry (Lemma \ref{20131027:prop1}(a)).
Part (c) follows from (a).
Finally, for part (d), we have by \eqref{eq:K}
$$
\begin{array}{l}
\vphantom{\Big(}
\{u_{-N}{}_\lambda L(w)\}_K
=\res_zi_z(z-w-\lambda)^{-1}\left(L(z)-L(w+\lambda)\right)z^{-N}
\\
\vphantom{\Big(}
+\res_zi_z(z-w-\lambda-\partial)^{-1}\left(L(w)-L^*(-z+\lambda)\right)z^{-N}
\\
\vphantom{\Big(}
=\res_zi_z(z-w-\lambda)^{-1}L(z)z^{-N}
-\res_zi_z(z-w-\lambda-\partial)^{-1}L^*(-z+\lambda)z^{-N}
\,,
\end{array}
$$
which is zero by the same argument as before.
\end{proof}
Since, by Lemma \ref{20130926:lem1}(c) and (d), $\{{u_{-N}}_\lambda u_{-N}\}_H$ is not zero
and $u_{-N}$ is central with respect to the Poisson structure $K$,
by Theorems \ref{prop:dirac} and \ref{20130516:thm1} we can perform the Dirac reduction
to get a bi-Poisson structure $(H^D,K)$ on 
$\quot{\mc V_N^\infty}{\langle u_{-N}\rangle}\simeq\mb F[u_i^{(n)}\mid-N\neq i\in I, n\in\mb Z_+]
=:\mc W^\infty_N$
(resp. 
$\quot{\mc V_N}{\langle u_{-N}\rangle}\simeq\mb F[u_i^{(n)}\mid-N\neq i\in I_-, n\in\mb Z_+]=:\mc W_N$).
\begin{proposition}\label{20130925:prop1}
We have a local bi-Poisson structure $(H^D,K)$ on 
$\mc W^\infty_N$ (resp. $\mc W_N$),
where $H^D$ is defined, in terms of the generating series of the $\lambda$-brackets on generators,
by
\begin{equation}\label{eq:H_dirac}
\begin{array}{l}
\vphantom{\Big(}
\{L(z)_{\lambda}L(w)\}_{H^D}
=
L(z)i_z(z-w-\lambda-\partial)^{-1}L(w)
\\
\vphantom{\Big(}
-L(w+\lambda+\partial)i_z(z-w-\lambda-\partial)^{-1}L^*(-z+\lambda)
\\
\vphantom{\Big(}
-\frac1N\left(L(w+\lambda+\partial)-L(w)\right)
(\lambda+\partial)^{-1}
\left(L^*(-z+\lambda)-L(z)\right)\,,
\end{array}
\end{equation}
and $K$ is given by \eqref{eq:K}.
\end{proposition}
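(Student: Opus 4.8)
The plan is to apply the Dirac reduction machinery of Theorem \ref{prop:dirac} and Theorem \ref{20130516:thm1} with the single constraint $\theta=u_{-N}$, and to compute the resulting Dirac-modified $\lambda$-bracket $\{\cdot\,_\lambda\,\cdot\}_{H^D}$ explicitly in terms of the generating series of $L$. The hypotheses of these theorems have already been verified: by Lemma \ref{20130926:lem1}(c), $\{{u_{-N}}_\lambda u_{-N}\}_H=-N\lambda$, so the $1\times1$ matrix $C(\partial)=-N\partial$ is invertible with $C^{-1}(\partial)=-\frac1N\partial^{-1}$; and by Lemma \ref{20130926:lem1}(d), $u_{-N}$ is central for the $K$-bracket. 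Hence Theorem \ref{20130516:thm1} guarantees that $\{\cdot\,_\lambda\,\cdot\}_0=\{\cdot\,_\lambda\,\cdot\}_K$ and $\{\cdot\,_\lambda\,\cdot\}_1^D=\{\cdot\,_\lambda\,\cdot\}_{H^D}$ are compatible PVA $\lambda$-brackets, and that $\langle u_{-N}\rangle$ is a PVA ideal for both, so that the quotient $\mc W_N^\infty$ (resp.\ $\mc W_N$) inherits a bi-PVA structure $(H^D,K)$. The only substantive task remaining is therefore to derive formula \eqref{eq:H_dirac}.

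To obtain \eqref{eq:H_dirac}, I would apply the Dirac modification formula \eqref{eq:dirac} with $m=1$, $\theta_1=u_{-N}$, to the pair of generating series $a=L(z)$, $b=L(w)$:
\begin{equation*}
\{L(z)_\lambda L(w)\}_{H^D}
=\{L(z)_\lambda L(w)\}_H
-\{{u_{-N}}_{\lambda+\partial}L(w)\}_{H,\to}(C^{-1})(\lambda+\partial)\{L(z)_\lambda u_{-N}\}_H\,.
\end{equation*}
The first term on the right is just \eqref{eq:H}, giving the first two lines of \eqref{eq:H_dirac}. For the correction term I would substitute the three ingredients already computed: $\{{u_{-N}}_{\lambda+\partial}L(w)\}_{H,\to}=L(w)-L(w+\lambda+\partial)$ from Lemma \ref{20130926:lem1}(a) with $\lambda\mapsto\lambda+\partial$ (acting to the right), the factor $(C^{-1})(\lambda+\partial)=-\frac1N(\lambda+\partial)^{-1}$, and $\{L(z)_\lambda u_{-N}\}_H=L^*(-z+\lambda)-L(z)$ from Lemma \ref{20130926:lem1}(b). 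Multiplying these three factors and negating yields precisely the third line of \eqref{eq:H_dirac}, namely $-\frac1N\big(L(w+\lambda+\partial)-L(w)\big)(\lambda+\partial)^{-1}\big(L^*(-z+\lambda)-L(z)\big)$, where the overall sign from $-\frac1N$ combines with the sign of the first factor to match the stated formula.

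The main point requiring care—and the step I expect to be the chief source of bookkeeping errors—is the correct placement and expansion of the operator $(\lambda+\partial)^{-1}$ and the arrow-convention substitution $\lambda\mapsto\lambda+\partial$ in the left factor $\{{u_{-N}}_{\lambda+\partial}L(w)\}_{H,\to}$. One must track that $\partial$ here acts on everything to its right, including both the $(\lambda+\partial)^{-1}$ factor and the $\{L(z)_\lambda u_{-N}\}_H$ factor, consistently with the conventions fixed after Definition \ref{def:lambda} and in Definition \ref{20130514:def}. Once the three factors are assembled in the right order with these conventions, the locality of the result (i.e.\ that $H^D$ is a genuine matrix \emph{differential} operator, not merely pseudodifferential) is what justifies calling this a \emph{local} bi-Poisson structure, as asserted in the statement; I would verify this by observing that $L(w+\lambda+\partial)-L(w)$ and $L^*(-z+\lambda)-L(z)$ each vanish at $\lambda=0$ to first order in $\partial$, so that the $(\lambda+\partial)^{-1}$ in between is cancelled and no genuinely negative powers of $\partial$ survive.
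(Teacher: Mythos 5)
Your proposal is correct and follows essentially the same route as the paper: the bi-PVA structure comes from Theorems \ref{prop:dirac} and \ref{20130516:thm1} once Lemma \ref{20130926:lem1}(c),(d) are invoked, formula \eqref{eq:H_dirac} is obtained by substituting Lemma \ref{20130926:lem1}(a),(b) into \eqref{eq:dirac}, and locality is established exactly as in the paper by noting that $L^*(-z+\lambda)-L(z)=\big(\big|_{x=\lambda+\partial}L(z-x)\big)-L(z)$ is divisible by $\lambda+\partial$, so the factor $(\lambda+\partial)^{-1}$ is cancelled.
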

\begin{proof}
Formula \eqref{eq:H_dirac} follows from \eqref{eq:dirac}  using
Lemma \ref{20130926:lem1}.
Since $L^*(-z+\lambda)=\big(\big|_{x=\lambda+\partial}L(z-x)\big)$,
the term $(\lambda+\partial)^{-1}(L^*(-z+\lambda)-L(z))$ does not contain
negative powers of $\lambda+\partial$, thus proving that the $\lambda$-bracket 
given in \eqref{eq:H_dirac} is local.
\end{proof}
Expanding the equation \eqref{eq:H_dirac} in powers of $z$ and $w$ the matrix elements of
$H^D$ are (cf. equation \eqref{HKij}):
\begin{equation}\label{HKijD}
H^D_{ji}(\lambda)
=
H_{ji}(\lambda)
-\frac1N\sum_{\alpha,\beta\geq1}(-1)^{\alpha}
\binom{j}{\alpha}\binom{i}{\beta}u_{j-\alpha}
(\lambda+\partial)^{\alpha+\beta-1}u_{i-\beta}
\,,
\end{equation}
and $K_{ji}(\lambda)$ is the same as in equation \eqref{HKij}.
Letting $T=u_{-N+1}$, we have
\begin{equation}\label{virfield}
\{T{}_{\lambda}T\}_{H^D}
= H^D_{-N+1,-N+1}(\lambda)
= (2\lambda+\partial)T
+\frac{N^3-N}{12}\lambda^3\,,
\end{equation}
namely, $T$ is a Virasoro
element with central charge $\frac{N^3-N}{12}$ (cf. Definition \ref{def:CFTtype}).
Furthermore, we have
\begin{equation}\label{eq:eigenfields_series}
\{T{}_\lambda u_j\}_{H^D}
=\big(\partial+(N+j+1)\lambda\big)u_j+O(\lambda^2)\,,
\end{equation}
namely, $u_j$ is a $T$-eigenvector of conformal weight $N+j+1$, for every $j\neq-N$.
\begin{definition}\label{wn}
The \emph{classical} $\mc W$-\emph{algebra} $\mc W_N$ is given by the 
bi-Poisson structure $(H^D,K)$.
\end{definition}
\begin{remark}\label{rem:wn}
The classical $\mc W$-algebras $\mc W_N$ 
can be obtained by performing Drinfeld-Sokolov Hamiltonian reduction
for the Lie algebra $\mf{sl}_N$ and its principal nilpotent element, \cite{DS85}.
\end{remark}
\begin{example}\label{w2}
For $N=2$ we have $\mc W_2=\mb F[u^{(n)}\mid n\in\mb Z_+]$,
where $u=u_{-1}$,
and the PVA structure $\{u\,_\lambda\,u\}_c=H^D(\lambda)-cK(\lambda)$, $c\in\mb F$,
given by \eqref{eq:H_dirac} and \eqref{eq:K}, becomes, in this case,
\begin{equation}\label{eq:w2}
\{u{}_{\lambda}u\}_c
=(2\lambda+\partial)u+\frac{1}{2}\lambda^3-2c\lambda\,.
\end{equation}
This is the Virasoro-Magri PVA, which is the classical $\mc W$-algebra 
for the Lie algebra $\mf{sl}_2$ and its principal nilpotent element.
\end{example}
\begin{example}\label{w3}
For $N=3$ we have $\mc W_3=\mb F[u^{(n)},v^{(n)}\mid n\in\mb Z_+]$,
where $u=u_{-2}$ and $v=u_{-1}$,
and the PVA structure is
\begin{equation}\label{eq:w3}
\begin{array}{l}
\vphantom{\Big(}
\displaystyle{
\{u_\lambda u\}_c
=(2\lambda+\partial)u+2\lambda^3 
\,,}\\
\vphantom{\Big(}
\displaystyle{
\{u_\lambda v\}_c
=(3\lambda+\partial)v+u\lambda^2+\lambda^4-3c\lambda 
\,}\\
\vphantom{\Big(}
\displaystyle{
\{v_\lambda v\}_c
=(2\lambda+\partial)\left(\partial v-\frac12\partial^2u-\frac13u^2\right)
-\frac16(2\lambda+\partial)^3u-\frac23\lambda^5 
\,.}
\end{array}
\end{equation}
This is known as the Zamolodchikov PVA, \cite{Zam85},
which is the classical $\mc W$-algebra
for $\mf{sl}_3$ and its principal nilpotent element.
Note that $T=u$ is a Virasoro element with central charge $2$,
and it is not hard to check that $w_3=v-\frac12\partial u$ is a primary element of conformal weight $3$.
In particular, $\mc W_3$ is a PVA of CFT type (cf. Definition \ref{def:CFTtype}).
\end{example}
For arbitrary $N\geq1$, we let $T=u_{-N+1}$, and we have
$$
\begin{array}{l}
\displaystyle{
\{T_{\lambda}T\}_{H^D}
=(2\lambda+\partial)T
+\frac{N^3-N}{12}
\lambda^3
\,,} \\
\displaystyle{
\{T_{\lambda} u_k\}_{H^D}
=((N+1+k)\lambda+\partial) u_k+O(\lambda^2)
\,,\,\,
k=-N+2,\dots,-1
\,.}
\end{array}
$$
Hence, 
$T$ is  a Virasoro element, and $u_k$, $k=-N+2,\dots,-1$, are
$T$-eigenvectors.
It was proved by \cite{BFOFW90} and \cite{DFIZ91} that, in fact,
$\mc W_N$ is a PVA of CFT type.

\subsection{The Kupershmidt-Wilson theorem and the Miura map}
\label{sec:kw}

Consider the generalized GFZ PVA
$R_N=\mb F[v_i^{(n)}\mid i=1,\dots,N,n\in\mb Z_+]$
from Example \ref{gfzN}, 
with $\lambda$-bracket $\{{v_i}_\lambda{v_j}\}=-\delta_{ij}\lambda$
(we are taking $S=-\mbb1_N$),
and its Dirac reduction $\quot{R_N}{\langle v_1+\dots+v_N\rangle}$ 
as in Example \ref{gfz_reduced}.
In this case, the Dirac reduced $\lambda$-bracket among generators \eqref{formula}
reads ($i,j=1,\dots,N-1$):
\begin{equation}\label{formula2}
\{{v_i}_\lambda v_j\}^D=\big(\frac1N-\delta_{ij}\big)\lambda\,.
\end{equation}
Recall from Sections \ref{sub:diff_case} and \ref{sub:sln_red}
the definitions of the AGD PVA $\mc V_N$,
and of the classical $\mc W$-algebra $\mc W_N$, respectively.
In this section we want to give another proof of the following theorem due to \cite{KW81},
that we restate according to our formalism.
\begin{theorem}[Kupershmidt-Wilson Theorem]\label{kw_thm}
\begin{enumerate}[(a)]
\item
We have an injective PVA homomorphism
$\mu:\,\mc V_N\hookrightarrow R_N$, 
given by
\begin{align}\label{miuraop}
\mu(L(\partial))=(\partial+v_N)(\partial+v_{N-1})\cdots(\partial+v_1)
\in R_N[\partial]\,,
\end{align}
where $L(\partial)\in\mc V_N[\partial]$ 
is as in \eqref{lcap2}.
\item
The map $\mu$ in part (a)
induces an injective PVA homomorphism
$$
\mu:\,\mc W_N\hookrightarrow\quot{R_N}{\langle v_1+\dots+v_N \rangle}\,.
$$
\end{enumerate}
\end{theorem}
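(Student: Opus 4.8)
The plan is to get the homomorphism of part (a) for free from the Adler-type machinery and to concentrate the real work on injectivity. First I would note that, in the generalized GFZ PVA $R_N$ with $\{{v_i}_\lambda v_j\}=-\delta_{ij}\lambda$, each first-order operator $\partial+v_i$ is of Adler type: this is exactly the $N=1$ computation of Example \ref{20131003:exa1} applied to the differential subalgebra $\mb F[v_i^{(n)}\mid n\in\mb Z_+]$, where $\{{v_i}_\lambda v_i\}=-\lambda$ is the $H$-bracket of $\mc V_1$. Since $\{{v_i}_\lambda v_j\}=0$ for $i\neq j$, the factors $\partial+v_N,\dots,\partial+v_1$ are pairwise $\lambda$-commuting Adler-type operators, so the generalized Miura map of Proposition \ref{prop:halloween3} and Definition \ref{gener-miura} directly yields the PVA homomorphism $\mu=\mu_{\partial+v_N,\dots,\partial+v_1}:\mc V_N\to R_N$ with $\mu(L(\partial))$ the product in \eqref{miuraop}. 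Thus the entire substance of part (a) is injectivity.

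Because $\mc V_N$ is freely generated as a differential algebra by $u_{-N},\dots,u_{-1}$, the map $\mu$ is injective exactly when its images are differentially algebraically independent. To see this I would put on $R_N$ the filtration by polynomial degree ($\deg v_i^{(n)}=1$), which is preserved by $\partial$. Collecting the coefficient of $\partial^{N-k}$ in $(\partial+v_N)\cdots(\partial+v_1)$, the top-degree contribution (those in which no derivative is applied to a surviving $v_i$) is precisely the elementary symmetric polynomial $e_k(v_1,\dots,v_N)$, while every commutator correction involves strictly fewer factors and hence has strictly lower degree; so $\mu(u_{-N+k-1})=e_k+(\text{lower degree})$. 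Grading $\mc V_N$ by conformal weight, with $u_{-N+k-1}^{(n)}$ of weight $k$, the map $\mu$ is filtered and its associated graded is the differential algebra homomorphism $\mu_0:u_{-N+k-1}\mapsto e_k$. Injectivity of $\mu$ then reduces to injectivity of $\mu_0$, i.e.\ to the differential algebraic independence of $e_1,\dots,e_N$ in $R_N$.

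That independence is a transcendence-degree count: each $v_i$ is a root of $X^N-e_1X^{N-1}+\dots+(-1)^Ne_N$, hence is algebraic, a fortiori differentially algebraic, over the differential subfield generated by $e_1,\dots,e_N$, so the whole differential fraction field of $R_N$ is differentially algebraic over it. Since $R_N$ has differential transcendence degree $N$ over $\mb F$, the subfield generated by $e_1,\dots,e_N$ has differential transcendence degree $N$ as well, and with exactly $N$ generators they must be differentially independent. Hence $\mu_0$, and therefore $\mu$, is injective, completing part (a).

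For part (b) the key point is that $\mu(u_{-N})$, the coefficient of $\partial^{N-1}$ in the product, equals $v_1+\dots+v_N=\theta$. By Lemma \ref{20130926:lem1}(c) the constraint matrix for the second structure is $C(\partial)=\{{u_{-N}}_\lambda u_{-N}\}_H=-N\lambda$, which is invertible (and likewise $\{\theta_\lambda\theta\}=-N\lambda$ in $R_N$), so Lemma \ref{20131002:lem1} applied to $\mu$ and the single constraint $u_{-N}$ yields an induced PVA homomorphism $\bar\mu:\mc W_N=\quot{\mc V_N}{\langle u_{-N}\rangle}\to\quot{R_N}{\langle v_1+\dots+v_N\rangle}$, the target identified with $R_{N-1}$ as in Example \ref{gfz_reduced}. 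Its injectivity follows by repeating the leading-term argument in $R_{N-1}$: the generators $u_{-N+1},\dots,u_{-1}$ correspond to $k=2,\dots,N$, the reduction $v_N\mapsto-(v_1+\dots+v_{N-1})$ is degree-preserving, and $\bar\mu(u_{-N+k-1})$ has top-degree part $\bar e_k=e_k(v_1,\dots,v_{N-1},-(v_1+\dots+v_{N-1}))$, with $\bar e_1=0$ recording the killed generator; since the $v_i$ are roots of $X^N+\bar e_2X^{N-2}-\dots$, they are algebraic over $\mb F(\bar e_2,\dots,\bar e_N)$, whence $\bar e_2,\dots,\bar e_N$ are differentially independent in $R_{N-1}$ and $\bar\mu$ is injective. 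I expect the main obstacle throughout to be injectivity: carefully identifying the top-degree coefficients of the product with the elementary symmetric functions and reducing to the clean differential-transcendence statement, the PVA-homomorphism assertions in both parts being essentially automatic from the Adler-type and Dirac-reduction results already in hand.
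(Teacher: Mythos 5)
Your construction of the homomorphism coincides with the paper's: both obtain $\mu$ as the generalized Miura map $\mu_{\partial+v_N,\dots,\partial+v_1}$ of type $(1,\dots,1)$ via Example \ref{20131003:exa1} and Propositions \ref{prop:halloween2}--\ref{prop:halloween3}, and both deduce part (b) from Lemma \ref{20131002:lem1}. Where you genuinely diverge is on injectivity: the paper does not prove it at all, but simply refers to \cite{KW81} both for $\mu$ and for the induced map on Dirac reductions, whereas you supply a self-contained argument. Your argument is sound: the coefficient of $\partial^{N-k}$ in the normal-ordered product is $e_k(v_1,\dots,v_N)$ plus terms of strictly lower polynomial degree (each derivative consumes a power of $\partial$ without contributing a $v$-factor), so with the weight grading $\deg u_{-N+k-1}^{(n)}=k$ the map $\mu$ is filtered with associated graded $u_{-N+k-1}\mapsto e_k$, and injectivity reduces to the differential algebraic independence of $e_1,\dots,e_N$, which follows from the standard differential-transcendence-degree count (each $v_i$ is algebraic, hence differentially algebraic, over $\mb F\langle e_1,\dots,e_N\rangle$, forcing that subfield to have differential transcendence degree $N$). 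The same scheme correctly handles the reduced map, using $\mu(u_{-N})=e_1=\theta$ and the fact that the identification $R_N/\langle\theta\rangle\simeq R_{N-1}$ is degree-preserving. What your route buys is a complete proof where the paper has only a citation; what it costs is reliance on standard but unquoted facts of Kolchin-style differential algebra (invariance of differential transcendence degree under differentially algebraic extensions), which you may want to reference explicitly. One cosmetic caveat: the PVA statements hold for the $c=0$ (second) structure on $\mc V_N$, resp.\ $(H^D)$ on $\mc W_N$, matching the single GFZ structure on $R_N$; this is implicit in Proposition \ref{prop:halloween1} and worth stating.
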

The map $\mu$ is called the Miura map (\cite{Miu68}).
It allows us to express each differential variable $u_i$
as a differential polynomial in $R_N$.
The original proof was given by \cite{KW81}. 
A shorter proof can be found in the work of \cite{Dic82}.
\begin{proof}
Recall from Example \ref{20131003:exa1} 
that $\mc V_1\simeq R_1=\mb F[v^{(n)}\mid n\in\mb Z_+]$,
with $\{v_\lambda v\}=-\lambda$.
Hence, $R_N=\mc V_1\otimes\dots\otimes\mc V_1$ ($N$ times).
By Proposition \ref{prop:halloween3} and Definition \ref{gener-miura}
we then have the corresponding generalized Miura map of type $(1,1,\dots,1)$,
defined by \eqref{miuraop}.
For the injectiveness of this map (and the induced map on the Dirac reductions) 
we refer to \cite{KW81}.
Part (b) follows immediately from Lemma \ref{20131002:lem1}.
\end{proof}
\begin{remark}\label{20131104:rem}
The aim of the Kupershmidt-Wilson Theorem was to prove that the matrix
differential operator $H^{(L)}(\partial)$, attached to a generic 
differential operator $L$, is a Poisson structure.
Indeed, it was well known that the operator $-\mbb1_N\partial$
is a Poisson structure (cf. Example \ref{gfzN}) on $R_N$.
The Kupershmidt-Wilson Theorem shows that
$(\mc V_N,H^{(L)})\subset(R_N,-\partial\mbb1_N)$ is a PVA subalgebra.
In particular, $H^{(L)}$ must be a Poisson structure.
However, one cannot apply the same argument in the case of a generic pseudodifferential operator,
since one does not have a factorization analogue to \eqref{miuraop}.
\end{remark}
\begin{example}\label{miura2}
Recall from Example \ref{w2} that
$\mc W_2=\mb F[u^{(n)}\mid n\in\mb Z_+]$,
with $\lambda$-bracket \eqref{eq:w2} (with $c=0$).
By Theorem \ref{kw_thm}(b) we have a PVA inclusion 
$\mc W_2\hookrightarrow\quot{R_2}{\langle v_1+v_2\rangle_{R_2}}
\simeq\mb F[v^{(n)}\mid n\in\mb Z_+]$
(where $v=v_1$),
with $\lambda$-bracket $\{v_{\lambda}v\}=-\frac{1}{2}\lambda$.
The Miura map is given by $u=v^\prime-v^2$, \cite{Miu68}.
\end{example}

\section{Gelfand-Dickey integrable hierarchies}\label{sec:hierarchies}
In this section we want to show how to apply the Lenard-Magri scheme
of integrability (see Section \ref{sub:lenard_scheme})
in order to obtain integrable hierarchies for the
the bi-PVAs we constructed in Section \ref{sec:AGD}.

\subsection{Integrable hierarchies for the AGD bi-PVAs $\mc V_N^\infty$ and $\mc V_N$}
\label{sub:hierarchies1}

Recall from Section \ref{sec:2.2b} the definition of the AGD bi-PVA 
$\mc V_N^\infty=\mb F[u_i^{(n)}\mid i\in I,n\in\mb Z_+]$
(as before, $I=\{-N,-N+1,-N+2,\dots\}$),
associated to the generic pseudodifferential operator $L(\partial)$
as in \eqref{lcap},
and recall from Section \ref{sub:diff_case} the AGD bi-PVA 
$\mc V_N=\mb F[u_i^{(n)}\mid i\in I_-,n\in\mb Z_+]$
(where $I_-=\{-N,-N+1,\dots,-1\}$),
associated to the generic differential operator $L(\partial)$
as in \eqref{lcap2}.
Unless otherwise specified, 
throughout this section we let $\mc V=\mc V_N^\infty$ or $\mc V_N$,
with its bi-Poisson structure $(H,K)$,
and we let $L(\partial)$ as in \eqref{lcap} or \eqref{lcap2}.
Let, for $k\geq1$,
\begin{equation}\label{hk}
h_k=\frac Nk\res_z L^{\frac kN}(z)\in\mc V\,,
\end{equation}
where $L^\frac1N(\partial)\in\mc V((\partial^{-1}))$
is uniquely defined by Proposition \ref{prop:roots}.
\begin{theorem}\label{prop:lenard_works}
We have an integrable hierarchy of bi-Hamiltonian equations in $\mc V$:
$$
\frac{du}{dt_k}=\{\tint h_k,u\}_H=\{\tint h_{k+N},u\}_K
\,\,,\,\,\,\,
k\geq1
\,.
$$
\end{theorem}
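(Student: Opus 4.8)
The plan is to verify the hypotheses of the Lenard--Magri scheme of Section~\ref{sub:lenard_scheme}, taking the \emph{first} bracket to be $\{\cdot\,_\lambda\,\cdot\}_0=\{\cdot\,_\lambda\,\cdot\}_K$ and the \emph{second} to be $\{\cdot\,_\lambda\,\cdot\}_1=\{\cdot\,_\lambda\,\cdot\}_H$. Two things must be checked: (i) the recursion $\{\tint h_k,u\}_H=\{\tint h_{k+N},u\}_K$ for all $k\geq1$, $u\in\mc V$, which realizes \eqref{lenardrecursion} along each arithmetic progression of step $N$; and (ii) that the $\tint h_k$ span an infinite-dimensional subspace of $\quot{\mc V}{\partial\mc V}$. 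The key idea for (i) is that both sides equal one and the same Lax-type flow $\big[(L^{\frac kN})_+,L\big]$ on $L$, so the recursion and the Lax form of the hierarchy are obtained simultaneously.

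The core computation (i) proceeds in three steps. First I would identify the variational derivative of the densities. Encoding the family $\big(\frac{\delta h_k}{\delta u_j}\big)_{j}$ as a single pseudodifferential operator $\frac{\delta h_k}{\delta L}$ via the pairing $\tint\res_\partial(P\,\delta L)$, where $\delta L=\sum_j\delta u_j\,\partial^{-j-1}$, and using the cyclicity of $\tint\res_\partial$ (a form of integration by parts, cf.\ Lemma~\ref{lem:residui}) together with the chain rule for the fractional power $L^{\frac kN}$ (well defined by Proposition~\ref{prop:roots}), I would show
\[
\frac{\delta h_k}{\delta L}=L^{\frac{k}{N}-1}(\partial)\,;
\]
the normalization $\frac Nk$ in \eqref{hk} is what makes this clean. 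Second I would translate the Hamiltonian flows into the Adler map. Since $L$ and $L-c$ are of Adler type, the generating-series identity \eqref{generating} and the definition \eqref{20131022:eq1} of $H^{(L)}$ in terms of $A^{(L)}$ (see \eqref{adlermap}) yield, for any functional $\tint h$ with gradient $P=\frac{\delta h}{\delta L}$,
\[
\{\tint h,L\}_H=A^{(L)}(P)=(LP)_+L-L(PL)_+\,,
\]
and, differentiating the relation \eqref{HK} in $c$ and using that $A^{(L-c)}$ is linear in $c$,
\[
\{\tint h,L\}_K=(P)_+L+(LP)_+-(PL)_+-L(P)_+\,.
\]

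Third, I would substitute the gradients. With $P=L^{\frac{k}{N}-1}$ for $\tint h_k$ one gets $\{\tint h_k,L\}_H=(L^{\frac kN})_+L-L(L^{\frac kN})_+=\big[(L^{\frac kN})_+,L\big]$, while with $P=L^{\frac kN}$ for $\tint h_{k+N}$ the two middle terms $(L^{\frac kN+1})_+$ cancel and $\{\tint h_{k+N},L\}_K=\big[(L^{\frac kN})_+,L\big]$; this is (i), and it also shows the flow has order $\leq N-1$, so it preserves $\mc V_N$ in the differential case. For (ii), in the grading $\deg\partial=1$, $\deg u_i=N+i+1$ (cf.\ \eqref{eq:eigenfields_series}), each $\res_z L^{\frac kN}(z)$ is homogeneous of a degree strictly increasing in $k$ and nonzero modulo $\partial\mc V$, hence the $\tint h_k$ are linearly independent. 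Finally I would invoke the scheme along the $N$ chains $\{\tint h_{r+nN}\}_{n\geq0}$, $r=1,\dots,N$: each bottom $\tint h_r$ is a Casimir of $K$, since $\{\tint h_r,L\}_K=\big[(L^{\frac rN-1})_+,L\big]=0$ because $L^{\frac rN-1}$ has non-positive order (so $(L^{\frac rN-1})_+=0$ for $r<N$, and $=1$ for $r=N$). By the compatibility criterion at the end of Section~\ref{sub:lenard_scheme} all chains are then mutually in involution, giving $\{\tint h_m,\tint h_n\}_\delta=0$ for all $m,n\geq1$, $\delta=0,1$, and hence the integrable hierarchy \eqref{eq:hierarchy}.

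The step I expect to be the main obstacle is the second one: passing rigorously from the $\lambda$-bracket identity \eqref{generating} to the operator identity $\{\tint h,L\}_H=A^{(L)}\big(\frac{\delta h}{\delta L}\big)$. This requires carefully matching the variance forced by the asymmetric definition \eqref{20131022:eq1}, tracking the expansion conventions for $i_z(z-w-\partial)^{-1}$ in the explicit kernel \eqref{h}, and fixing signs; an error in expansion or adjoint convention is most likely to surface here. The fractional-power variational derivative of the first step is standard but likewise needs the cyclicity of $\tint\res_\partial$ applied with care to non-commuting powers of $L$.
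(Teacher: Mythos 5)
Your proposal is correct in substance and rests on the same three pillars as the paper's proof: the identification of the gradient $\frac{\delta h_k}{\delta L}=L^{\frac kN-1}$ (the paper's Lemmas \ref{lem:15032013} and \ref{prop:var_der}), the Lax form $[(L^{\frac kN})_+,L]$ of both the $H$- and $K$-flows (Lemma \ref{cor:17032013}), the vanishing of the $K$-flows of $\tint h_1,\dots,\tint h_N$ (Lemma \ref{lem:lenard_start}), and the Lenard--Magri scheme. Where you genuinely diverge is the middle step. The paper does not pass through the operator identity $\{\tint h,L\}_H=A^{(L)}(\delta h/\delta L)$; it substitutes the chain-rule identity of Lemma \ref{lem:15032013} into the explicit symbols \eqref{eq:H} and \eqref{eq:K} and evaluates residues using \eqref{20130927:cor1} and Lemma \ref{lem:residui}(b). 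Your route --- reading the flow off the definition \eqref{20131022:eq1} of $H^{(L)}$ as the matrix of the Adler map under the identifications \eqref{id1bis}--\eqref{id2bis}, and obtaining the $K$-flow by differentiating $A^{(L-c)}$ in $c$ --- is closer to the classical AGD formalism and makes the cancellation $(L^{\frac kN+1})_+-(L^{\frac kN+1})_+$ transparent. The step you flag as the main obstacle is in fact only the bookkeeping that the family $\big(\frac{\delta h_k}{\delta u_j}\big)_j$ corresponds to $\sum_j\partial^j\circ\frac{\delta h_k}{\delta u_j}\equiv L^{\frac kN-1}$ under \eqref{id1bis}, which follows from Lemma \ref{prop:var_der} since $\res_z(z+\partial)^{-j-1}P(z)$ extracts exactly the $j$-th right coefficient of $P$. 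What the paper's more pedestrian residue computation buys is that it transports verbatim to the matrix case of Section \ref{sec:hier_mat}, where the same formulas are reused.

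The one genuine gap is in your point (ii). The homogeneity argument (with $\deg u_i=N+i+1$, $\deg\partial=1$, so that $h_k$ is homogeneous of degree $k+1$) correctly reduces linear independence to the nonvanishing of the individual $\tint h_k$ in $\quot{\mc V}{\partial\mc V}$, but you assert that nonvanishing without proof, and as a blanket statement it is false: for $\mc V=\mc V_N$ and $k\in N\mb Z_+$ the operator $L^{\frac kN}$ is a differential operator, so $\tint h_k=0$. What is needed, and what the paper proves in Lemma \ref{20131107:lem}, is that infinitely many of them (those with $k\notin N\mb Z_+$) are nonzero; this is done by evaluating at $u_i^{(n)}=0$ for $(i,n)\neq(-N,0)$ and computing $\overline{\delta h_k/\delta u_{-N}}=\binom{k/N-1}{k}(u_{-N})^k$, which is nonzero precisely when $k\notin N\mb Z_+$. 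This verification is short but cannot be omitted.
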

The remainder of this section will be the proof of Theorem \ref{prop:lenard_works}.

\begin{lemma}\label{lem:15032013}
Let $\mc V$ be an arbitrary differential algebra endowed with a $\lambda$-bracket
$\{\cdot\,_{\lambda}\,\cdot\}$.
Let $L(\partial)\in\mc V((\partial^{-1}))$ be a monic pseudodifferential operator
of order $N>0$.
Then, for all $k\geq1$, the following identity holds in $\mc V((w^{-1}))$:
\begin{equation}\label{eq:lenard1}
\res_z
\left.\{L^{\frac kN}(z)_{\lambda}L(w)\}\right|_{\lambda=0}
=\frac kN
\res_z\{L(z+x)_x L(w)\} \big(\big|_{x=\partial}L^{\frac kN-1}(z)\big)\,.
\end{equation}
\end{lemma}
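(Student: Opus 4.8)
The plan is to reduce the whole statement to a computation with integer powers of a single operator, followed by a cyclicity-of-residue argument. Set $M(\partial)=L^{\frac1N}(\partial)$, which exists and is monic of order $1$ by Proposition \ref{prop:roots}(a); then $L^{\frac kN}=M^k$ and $L=M^N$, so both sides of \eqref{eq:lenard1} become assertions about integer powers of $M$. The key point is that $\res_z\{(\cdot)_\lambda L(w)\}$ should differentiate powers of $M$ in the expected way, $\tfrac{d}{dM}M^p=pM^{p-1}$, and the factor $\frac kN$ in the statement is exactly the ratio produced when the exponent $N$ of $L=M^N$ is traded against the factor $p$.

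Concretely, I would first establish the master identity
\[
\res_z\{M^p(z+x)_{x}L(w)\}\big(\big|_{x=\partial}M^q(z)\big)
=p\,\res_z\{M(z+x)_{x}L(w)\}\big(\big|_{x=\partial}M^{p+q-1}(z)\big),
\]
valid for all $p\geq1$ and $q\in\mb Z$, where the substitution $\big|_{x=\partial}$ is understood as in \eqref{20131024:eq1}. Granting this, the lemma is immediate. The left-hand side of \eqref{eq:lenard1} is the case $p=k,\,q=0$: since $\big|_{x=\partial}$ applied to $M^0=1$ kills all positive powers of $x$, this forces $x=0$, i.e. $\lambda=0$, recovering $\res_z\{L^{\frac kN}(z)_\lambda L(w)\}|_{\lambda=0}$. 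The right-hand side of \eqref{eq:lenard1}, after writing $L=M^N$ and $L^{\frac kN-1}=M^{k-N}$, is $\frac kN$ times the case $p=N,\,q=k-N$. In both cases $p+q-1=k-1$, so the case $p=N$ contributes a factor $N$ cancelling the $\frac1N$, and both sides reduce to $k\,\res_z\{M(z+x)_{x}L(w)\}\big(\big|_{x=\partial}M^{k-1}(z)\big)$.

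To prove the master identity I would induct on $p$, using $M^p=M\circ M^{p-1}$ together with the symbol-product formula $(A\circ B)(z)=A(z+\partial)B(z)$ of \eqref{eq:mult_symbol}. By sesquilinearity and the Leibniz rules of Definition \ref{def:lambda}, the bracket splits into a term where it hits the leading factor $M$ and a term where it hits $M^{p-1}$; the inductive hypothesis turns the latter into $p-1$ copies of the canonical term, while the former yields the remaining copy once the spectator $M^{p-1}$ is transported to the right of the bracket. This transport across the residue, and the identification of the internal shift $z\mapsto z+x$ with the spectral parameter, is carried out using the residue-shift identities of Lemma \ref{lem:residui}, which make the required cyclicity an \emph{exact} equality rather than one valid only modulo $\partial\mc V$.

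The main obstacle I anticipate is precisely this bookkeeping of $\partial$-shifts: one must track simultaneously the shift $z+\partial$ from the symbol product, the shift $\lambda+\partial$ from sesquilinearity, and the spectral shift encoded by $\big|_{x=\partial}$, and verify that after applying $\res_z$ these conspire so that all $p$ terms become \emph{literally} identical and the spectators assemble into the single factor $M^{p+q-1}$. Lemma \ref{lem:residui} is the tool that converts each rearrangement into an exact equality, so once the inductive scheme is set up the remaining verification is careful but routine; the conceptual content is simply that taking the residue of the $\lambda$-bracket against $L(w)$ behaves like a derivation on powers of $M$.
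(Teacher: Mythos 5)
Your proposal is correct and follows essentially the same route as the paper: the paper likewise writes everything in terms of $L^{\frac1N}$, expands $\{L^{\frac kN}(z)_\lambda L(w)\}$ (and, separately, $\{L(z+x)_{\lambda+x}L(w)\}$ with the spectator $L^{\frac kN-1}(z)$) into $k$ (resp.\ $N$) terms via sesquilinearity and the Leibniz rules, and uses Lemma \ref{lem:residui}(b) under $\res_z$ to collapse all terms at $\lambda=0$ into multiples of the single canonical expression $\res_z\{L^{\frac1N}(z+x)_x L(w)\}\big(\big|_{x=\partial}L^{\frac{k-1}{N}}(z)\big)$. Your ``master identity'' is just a uniform packaging of the two instances ($p=k,q=0$ and $p=N,q=k-N$) that the paper proves directly rather than by induction on $p$.
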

\begin{proof}
Since, by \eqref{eq:mult_symbol},
$L^{\frac kN}(z)=L^{\frac 1N}(z+\partial)L^{\frac 1N}(z+\partial)\dots L^{\frac 1N}(z)$ 
($k$ times), we have, 
by sesquilinearity and the right Leibniz rule,
\begin{equation}\label{eq:15032013_1}
\{L^{\frac kN}(z)_{\lambda}L(w)\}
=\sum_{l=1}^k\{L^{\frac1N}(z+x)_{\lambda+x+y}L(w)\}
\big(\big|_{x=\partial}L^{\frac{k-l}{N}}(z)\big)
\big(\big|_{y=\partial}(L^*)^{\frac{l-1}{N}}(-z+\lambda)\big)
\,.
\end{equation}
Taking the residue of both sides of equation \eqref{eq:15032013_1} and
using Lemma \ref{lem:residui}(b), we get
$$
\begin{array}{l}
\displaystyle{
\res_z\{L^{\frac kN}(z)_{\lambda}L(w)\}
}
\\
\displaystyle{
=\res_z\sum_{l=1}^k
\{L^{\frac1N}(z+\lambda+x+y)_{\lambda+x+y}L(w)\}
\big(\big|_{x=\partial}L^{\frac{k-l}{N}}(z+\lambda+y)\big)
\big(\big|_{y=\partial}L^{\frac{l-1}{N}}(z)\big),
}
\end{array}
$$
and setting $\lambda=0$ we get
\begin{equation}\label{eq:15032013_1e}
\res_z\{L^{\frac kN}(z)_{\lambda}L(w)\}\Big|_{\lambda=0}
=
k\res_z\{L^{\frac1N}(z+x)_{x}L(w)\}
\big(\big|_{x=\partial}L^{\frac{k-1}{N}}(z))\,.
\end{equation}
On the other hand, letting $k=N$ in \eqref{eq:15032013_1}, we have
\begin{equation}\label{eq:15032013_1b}
\{L(z)_{\lambda}L(w)\}
=\sum_{l=1}^N\{L^{\frac1N}(z+x)_{\lambda+x+y}L(w)\}
\big(\big|_{x=\partial}L^{\frac{N-l}{N}}(z)\big)
\big(\big|_{y=\partial}(L^*)^{\frac{l-1}{N}}(-z+\lambda)\big)
\,.
\end{equation}
If we replace, in equation \eqref{eq:15032013_1b},
$z$ by $z+\partial$ and $\lambda$ by $\lambda+\partial$
acting on $L^{\frac kN-1}(z)$,
we get
\begin{equation}\label{eq:15032013_1c2}
\begin{array}{l}
\displaystyle{
\{L(z+x)_{\lambda+x}L(w)\}\big(\big|_{x=\partial}L^{\frac kN-1}(z)\big)
} \\
\displaystyle{
=\sum_{l=1}^N\{L^{\frac1N}(z+x)_{\lambda+x+y}L(w)\}
\big(\big|_{x=\partial}L^{\frac{k-l}{N}}(z)\big)
\big(\big|_{y=\partial}(L^*)^{\frac{l-1}{N}}(-z+\lambda)\big)
\,.
}
\end{array}
\end{equation}
Taking residues of both sides of equation \eqref{eq:15032013_1c2}
and using Lemma \ref{lem:residui}(b), we get
\begin{equation}\label{eq:15032013_1c}
\begin{array}{l}
\displaystyle{
\res_z \{L(z+x)_{\lambda+x}L(w)\}\big(\big|_{x=\partial}L^{\frac kN-1}(z)\big)
} \\
\displaystyle{
=\res_z\sum_{l=1}^N\{L^{\frac1N}(z+\lambda+x+y)_{\lambda+x+y}L(w)\}
\big(\big|_{x=\partial}L^{\frac{k-l}{N}}(z+\lambda+y)\big)
\big(\big|_{y=\partial}L^{\frac{l-1}{N}}(z)\big)
.
}
\end{array}
\end{equation}
In the second equality we used Lemma \ref{lem:residui}(b).
Setting $\lambda=0$ in both sides of equation \eqref{eq:15032013_1c},
we get
\begin{equation}\label{eq:15032013_1d}
\res_z\{L(z+x)_{x}L(w)\}\big(\big|_{x=\partial}L^{\frac kN-1}(z)\big)
=N\res_z\{L^{\frac1N}(z+x)_{x}L(w)\}
\big(\big|_{x=\partial}L^{\frac{k-1}{N}}(z)\big)
\,.
\end{equation}
Equation \eqref{eq:lenard1} follows from equations \eqref{eq:15032013_1e} and \eqref{eq:15032013_1d}.
\end{proof}
\begin{lemma}\label{prop:var_der}
For every $i\in I$ (resp. $I_-$) and $k\geq1$, we have in $\mc V=\mc V_N^\infty$ (resp. $\mc V_N$),
\begin{equation}\label{eq:varder}
\frac{\delta h_k}{\delta u_i}
=\res_z(z+\partial)^{-i-1}L^{\frac kN-1}(z)\,.
\end{equation}
\end{lemma}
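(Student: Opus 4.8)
The plan is to compute the variational derivative $\frac{\delta h_k}{\delta u_i}$ directly from its definition as the coefficient extracted by a suitable $\lambda$-bracket, exploiting the Adler-type structure. The key observation is that the variational derivative can be recovered from the $\lambda$-bracket at $\lambda=0$. Specifically, recall that for a Poisson structure the Hamiltonian flow reads $\{ \tint h, u_i\} = \sum_j H_{ij}(\partial) \frac{\delta h}{\delta u_j}$, but more fundamentally one has the identity $\{h_\lambda u_i\}|_{\lambda=0}$ expressed through variational derivatives. The cleanest route is to use the general fact that, in $R_I$, the $\lambda$-bracket with the generator $u_i$ isolates variational derivatives: from the Master Formula \eqref{masterformula} with $g=u_i$ one reads off $\frac{\delta f}{\delta u_j}$ as essentially $\{f_\lambda u_j\}$-type data. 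I would set this up so that the left-hand side of \eqref{eq:varder} is packaged into a generating series over $i \in I$, namely $\sum_{i} \frac{\delta h_k}{\delta u_i} z^{-i-1}$, and then identify it with $\res_z$-expressions coming from the already-established bracket formulas.

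First I would translate the generating series $\sum_{i\in I}\frac{\delta h_k}{\delta u_i}\,(\text{something in }z)$ into a residue pairing against $L(w)$. Concretely, I expect the crucial link to be the formula from Lemma \ref{lem:15032013}: setting $\lambda=0$, we have
\begin{equation}\label{planA}
\res_z\{L^{\frac kN}(z)_\lambda L(w)\}\big|_{\lambda=0}
=\frac kN\res_z\{L(z+x)_x L(w)\}\big(\big|_{x=\partial}L^{\frac kN-1}(z)\big).
\end{equation}
The left-hand side is, up to the factor $\frac Nk$, exactly $\res_z\{h_k{}_\lambda L(w)\}|_{\lambda=0}$ viewed as a generating series in $w$, since $h_k=\frac Nk\res_z L^{\frac kN}(z)$ and the residue commutes with taking the $\lambda$-bracket in the first slot. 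So the plan is to recognize the left side of \eqref{planA} as the Hamiltonian vector field $\{\tint h_k, L(w)\}_H$ written via the matrix $H$, hence as $\sum_{i}H_{\cdot i}(\partial)\frac{\delta h_k}{\delta u_i}$ packaged in generating-series form, while the right side of \eqref{planA}, after using the explicit Adler bracket \eqref{eq:H} for $\{L(z+x)_x L(w)\}$, becomes a residue involving $L^{\frac kN-1}(z)$ paired against $L(w)$.

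The decisive step is to strip off the $L(w)$-factor and the operator $H$ to isolate $\frac{\delta h_k}{\delta u_i}$ alone. For this I would substitute the explicit generating-series bracket \eqref{eq:H} into the right-hand side of \eqref{planA}, and observe that the structure $L(z+\partial)\,i_z(z-w-\lambda-\partial)^{-1}\,L(w)$ (and its adjoint companion) produces, after taking $\res_z$ and using Lemma \ref{lem:residui}, precisely a convolution of $L^{\frac kN-1}(z)$ against the operator $H$ applied to a delta-type kernel. Comparing the coefficient of each $w^{-j-1}$ on both sides, and then using the nondegeneracy encoded in the leading-symbol normalization of $H$ (the same normalization that made the residues in Lemma \ref{20130926:lem1} equal to $1$), one reads off that $\frac{\delta h_k}{\delta u_i}=\res_z(z+\partial)^{-i-1}L^{\frac kN-1}(z)$. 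The identification of $(z+\partial)^{-i-1}$ as the generating kernel comes from the identifications \eqref{id1bis}--\eqref{id2bis} that define the matrix $H^{(L)}$ from the Adler map on $\partial^i$.

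\emph{The main obstacle} I anticipate is the bookkeeping of the various $i_z$-expansions and the placement of $\partial$ relative to the coefficients of $L^{\frac kN-1}$, together with correctly peeling off the operator $H$ rather than $H$ acting on $L(w)$: one must argue that the equality of Hamiltonian vector fields $\sum_i H_{ji}(\partial)\frac{\delta h_k}{\delta u_i}$ forces the equality of the variational derivatives themselves. This is not automatic for a general matrix differential operator, but here it follows because the generating series involves $L(w)$ as a free parameter, so matching coefficients of all powers of $w$ gives the stronger identity before any application of $H$. Once this reduction is made, the residue computation using Lemma \ref{lem:residui}(b) and the factor $\frac kN$ in \eqref{planA} cancelling against $\frac Nk$ in the definition \eqref{hk} of $h_k$ yields \eqref{eq:varder} directly.
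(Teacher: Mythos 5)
Your overall skeleton matches the paper's: both start from Lemma \ref{lem:15032013} at $\lambda=0$, identify the left-hand side with $\tfrac kN\{h_k{}_\lambda L(w)\}|_{\lambda=0}$, expand $\{L(z+x)_xL(w)\}=\sum_i\{u_i{}_xL(w)\}(z+x)^{-i-1}$ on the right, and compare with the Master Formula expression $\{h_k{}_\lambda u_j\}|_{\lambda=0}=\sum_iA_{ji}(\partial)\frac{\delta h_k}{\delta u_i}$. The divergence --- and the gap --- is in your ``decisive step'' of peeling off the operator. You specialize to the Adler bracket $H$ of \eqref{eq:H} and then claim that the resulting equality $\sum_iH_{ji}(\partial)\frac{\delta h_k}{\delta u_i}=\sum_iH_{ji}(\partial)\res_z(z+\partial)^{-i-1}L^{\frac kN-1}(z)$ can be stripped of $H$, first ``because the generating series involves $L(w)$ as a free parameter'' and then by ``the nondegeneracy encoded in the leading-symbol normalization of $H$''. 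Neither justification works. Matching coefficients of $w^{-j-1}$ only recovers the componentwise identity with $H_{ji}(\partial)$ still applied ($L(w)$ is not a free parameter but the generating series of the generators), and $H$ is genuinely degenerate: already for $N=1$ one has $H(\partial)=-\partial$ (Example \ref{20131003:exa1}), whose kernel contains the constants, and for general $N$ you have no control over the kernel of the matrix differential operator $H$. So from $\sum_iH_{ji}(\partial)(X_i-Y_i)=0$ you cannot conclude $X_i=Y_i$.

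The paper sidesteps this entirely by exploiting that Lemma \ref{lem:15032013} is stated and proved for an \emph{arbitrary} $\lambda$-bracket on an arbitrary differential algebra (its proof uses only sesquilinearity, the Leibniz rules, and Lemma \ref{lem:residui}). Hence the identity $\sum_iA_{ji}(\partial)\frac{\delta h_k}{\delta u_i}=\sum_iA_{ji}(\partial)\res_z(z+\partial)^{-i-1}L^{\frac kN-1}(z)$ holds for an \emph{arbitrary} matrix differential operator $A=\big(\{u_j{}_\lambda u_i\}\big)$, and choosing $A$ with a single nonzero entry isolates each component. If you replace your appeal to the nondegeneracy of $H$ by this arbitrariness of the bracket, your argument closes; as written, it does not.
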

\begin{proof}
Let $\{\cdot\,_\lambda\,\cdot\}$ be any $\lambda$-bracket on $\mc V$,
and let $A_{ij}(\partial)=\{{u_j}_\lambda u_i\}$, $i,j\in I$ or $I_-$,
be the associated matrix differential operator.
Taking the coefficient of $w^{-j-1}$ in both sides of equation \eqref{eq:lenard1} we have,
by the definition \eqref{hk} of $h_k$,
\begin{equation}\label{20131107:eq1}
\begin{array}{l}
\vphantom{\Big(}
\displaystyle{
\{{h_k}_{\lambda}u_j\}\big|_{\lambda=0}
=
\res_z\{L(z+x)_x u_j\} \big(\big|_{x=\partial}L^{\frac kN-1}(z)\big)
} \\
\vphantom{\Big(}
\displaystyle{
=
\sum_{i}\res_z\{{u_i}_x u_j\} (z+x)^{-i-1} \big(\big|_{x=\partial}L^{\frac kN-1}(z)\big)
} \\
\vphantom{\Big(}
\displaystyle{
=
\sum_{i}A_{ji}(\partial)
\res_z(z+\partial)^{-i-1} L^{\frac kN-1}(z)
\,.}
\end{array}
\end{equation}
On the other hand, by the Master Formula \eqref{masterformula} 
and the definition \eqref{eq:def_varder} of the variational derivative, we have
\begin{equation}\label{20131107:eq2}
\{{h_k}_{\lambda}u_j\}\big|_{\lambda=0}
=
\sum_{i}A_{ji}(\partial)\frac{\delta h_k}{\delta u_i}
\,.
\end{equation}
Equation \eqref{eq:varder} follows from \eqref{20131107:eq1} and \eqref{20131107:eq2},
since the matrix differential operator $A(\partial)$ is arbitrary.
\end{proof}
\begin{lemma}\label{20131107:lem}
The local functionals $\tint h_k\in\quot{\mc V}{\partial\mc V},\, k\in\mb Z_+\backslash N\mb Z_+$, 
are all linearly independent.
\end{lemma}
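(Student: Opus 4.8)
The plan is to reduce the statement to the non-vanishing of each individual functional $\tint h_k$ by means of a grading argument, and then to detect this non-vanishing through the variational derivatives computed in Lemma \ref{prop:var_der}.

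First I would put a grading on $\mc V$ (for both $\mc V_N^\infty$ and $\mc V_N$), defined on generators by $\deg u_i^{(n)}=N+i+1+n$ and $\deg\partial=1$. Assigning $\deg z=1$, the symbol $L(z)=\sum_{n\leq N}u_{-n-1}z^n$ (with $u_{-N-1}=1$) is homogeneous of degree $N$; hence $L^{\frac kN}(z)$ is homogeneous of degree $k$, and therefore $h_k=\frac Nk\res_z L^{\frac kN}(z)$, being the coefficient of $z^{-1}$, is homogeneous of degree $k+1$. Since $\partial$ raises the degree by exactly $1$, the subspace $\partial\mc V$ is graded, the grading descends to $\mc V/\partial\mc V$, and $\tint h_k$ lies in its degree-$(k+1)$ component. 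As $k\mapsto k+1$ is injective on $\mb Z_+\backslash N\mb Z_+$, the functionals $\tint h_k$ sit in pairwise distinct homogeneous components, so any linear relation among them splits componentwise. Thus it suffices to prove $\tint h_k\neq0$, i.e. $h_k\notin\partial\mc V$, for every $k\in\mb Z_+\backslash N\mb Z_+$.

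To prove $h_k\notin\partial\mc V$ I would use that $\frac{\delta}{\delta u_i}$ annihilates $\partial\mc V$: it is enough to exhibit one index $i$ with $\frac{\delta h_k}{\delta u_i}\neq0$, since $\ker\big(\mc V\to\mc V/\partial\mc V\big)=\partial\mc V$. By Lemma \ref{prop:var_der}, $\frac{\delta h_k}{\delta u_i}=\res_z(z+\partial)^{-i-1}L^{\frac kN-1}(z)$, where $L^{\frac kN-1}(z)$ is a monic symbol of order $k-N$ in $z$. For $\mc V_N^\infty$ the choice $i=k-N\in I$ works for every $k\geq1$: in $\res_z(z+\partial)^{N-k-1}L^{\frac kN-1}(z)$ only the top term $z^{k-N}$ of $L^{\frac kN-1}(z)$ can contribute to the coefficient of $z^{-1}$, and only through the $\partial^0$-part of $(z+\partial)^{N-k-1}$, giving $\frac{\delta h_k}{\delta u_{k-N}}=1\neq0$. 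The same computation settles $\mc V_N$ in the range $1\leq k\leq N-1$, since then $k-N\in I_-$.

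The hard part is the finite algebra $\mc V_N$ for $k>N$, where the variable $u_{k-N}$ no longer exists. Here I would instead use $u_{-1}$, which is available in both algebras: taking $i=-1$ in Lemma \ref{prop:var_der} gives $\frac{\delta h_k}{\delta u_{-1}}=\res_z L^{\frac kN-1}(z)=\frac{k-N}{N}h_{k-N}$ by \eqref{hk}. I would then argue by induction on $k$ within a fixed residue class $r=k\bmod N$, $1\leq r\leq N-1$. The base cases $1\leq k\leq N-1$ give $h_k\neq0$ as an element of $\mc V$ by the previous paragraph; for $k>N$ the inductive hypothesis $h_{k-N}\neq0$ together with $\frac{k-N}{N}\neq0$ yields $\frac{\delta h_k}{\delta u_{-1}}\neq0$, whence $h_k\notin\partial\mc V$, and in particular $h_k\neq0$ as an element, which feeds the induction. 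This simultaneously establishes $h_k\notin\partial\mc V$ (hence $\tint h_k\neq0$) for all $k\in\mb Z_+\backslash N\mb Z_+$. The one point requiring care is that the recursion $\frac{\delta h_k}{\delta u_{-1}}=\frac{k-N}{N}h_{k-N}$ detects only non-vanishing \emph{as an element of $\mc V$}, so the induction must be set up to propagate precisely that stronger statement rather than mere non-vanishing modulo $\partial\mc V$.
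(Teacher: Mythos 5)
Your proof is correct, but it takes a genuinely different route from the paper's. The paper argues in one stroke: it applies the evaluation homomorphism $\mc V\twoheadrightarrow\mb F[u_{-N}]$ (setting $u_i^{(n)}=0$ for $(i,n)\neq(-N,0)$) to the single variational derivative $\frac{\delta h_k}{\delta u_{-N}}$ from Lemma \ref{prop:var_der}, obtaining $\binom{\frac kN-1}{k}(u_{-N})^k$; for $k\notin N\mb Z_+$ the binomial coefficient is nonzero, and since the monomials $(u_{-N})^k$ have distinct degrees, any linear relation among the $\tint h_k$ collapses immediately. This works uniformly for all $k$ and for both $\mc V_N^\infty$ and $\mc V_N$ (the variable $u_{-N}$ exists in both), with no induction and no case split. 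You instead first separate the functionals by the conformal-weight grading $\deg u_i^{(n)}=N+i+1+n$ (cf.\ \eqref{eq:eigenfields_series}), reducing to non-vanishing of each $\tint h_k$, and then detect non-vanishing via $\frac{\delta h_k}{\delta u_{k-N}}=1$ where available and, for $\mc V_N$ with $k>N$, via the identity $\frac{\delta h_k}{\delta u_{-1}}=\frac{k-N}{N}\,h_{k-N}$ and an induction along residue classes mod $N$ — an induction you correctly set up to propagate $h_{k-N}\neq0$ in $\mc V$ rather than merely modulo $\partial\mc V$. Both routes rest on Lemma \ref{prop:var_der} and on $\frac{\delta}{\delta u_i}\circ\partial=0$. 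Your version yields extra information (homogeneity of the $\tint h_k$ in distinct weights, and the pleasant recursion relating $\frac{\delta h_k}{\delta u_{-1}}$ to $h_{k-N}$), at the cost of one claim you should justify in a line: that $L^{\frac kN}(z)$ is homogeneous of degree $k$. This follows from uniqueness of the $N$-th root in Proposition \ref{prop:roots}(a), since the rescaling $u_i^{(n)}\mapsto t^{N+i+1+n}u_i^{(n)}$, $\partial^n\mapsto t^n\partial^n$ is an automorphism of $\mc V((\partial^{-1}))$ sending $L$ to $t^NL$, hence $L^{\frac1N}$ to $t\,L^{\frac1N}$. The paper's argument is shorter and more uniform; yours is more structural.
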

\begin{proof}
By definition, $\mc V$ is a polynomial algebra in the infinitely many variables $u_i^{(n)}$,
for $i\in I$ or $I_-$ and $n\in\mb Z_+$.
Let $\mc V\twoheadrightarrow\mb F[u_{-N}]$, $f\mapsto\bar f$,
be the evaluation homomorphism at $u_i^{(n)}=0$,
where $(i,n)\neq(-N,0)$.
We have
$\overline{L(z)}=z^N+u_{-N}z^{N-1}$, and
$$
\overline{L^{\frac kN-1}(z)}=\sum_{h\in\mb Z_+}\binom{\frac kN-1}{h}(u_{-N})^h z^{k-N-h}
\,.
$$
Hence, by equation \eqref{eq:varder}, we get
$$
\overline{\frac{\delta h_k}{\delta u_{-N}}}
=
\binom{\frac kN-1}{k}(u_{-N})^{k}
\,,
$$
for all $k\geq1$. The claim follows.
\end{proof}
\begin{remark}
In fact, for $\mc V=\mc V_N^\infty$, 
it is not difficult to prove, using equation \eqref{eq:varder},
that $\frac{\delta h_k}{\delta u_i}=0$ for all $i\geq k-N+1$,
and $\frac{\delta h_k}{\delta u_{k-N}}=1$.
Therefore the elements $\frac{\delta h_k}{\delta u}\in{\mc V_N^\infty}^{\oplus I},\, k\geq1$,
and so the elements $\tint h_k\in\quot{\mc V_N^\infty}{\partial\mc V_N^\infty}$, are all linearly independent.
On the other hand, for $\mc V=\mc V_N$,
$L(\partial)$ is a differential operator,
and so $L^{\frac kN}(z)$ has non negative powers of $z$ for all $k\in N\mb Z_+$.
Hence, $\tint h_k=0$ for all $k\in N\mb Z_+$.
\end{remark}
\begin{lemma}\label{cor:17032013}
For the AGD bi-PVA $\mc V=\mc V_N^\infty$ or $\mc V_N$
with bi-Poisson structure $(H,K)$, we have
\begin{enumerate}[(a)]
\item
$\left.\{h_k{}_\lambda L(w)\}_H\right|_{\lambda=0}
=L^{\frac kN}(w+\partial)_+L(w)-L(w+\partial)L^{\frac kN}(w)_+$;
\item
$\left.\{h_k{}_\lambda L(w)\}_K\right|_{\lambda=0}
=L^{\frac{k}{N}-1}(w+\partial)_+L(w)-L(w+\partial)L^{\frac{k}{N}-1}(w)_+$.
\end{enumerate}
\end{lemma}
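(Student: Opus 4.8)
The plan is to reduce both identities to a single residue computation and then insert the explicit Adler formulas \eqref{eq:H} and \eqref{eq:K}. First I would note that, since the $\lambda$-bracket is $\mb F$-linear in its first argument and commutes with $\res_z$, the definition \eqref{hk} of $h_k$ gives $\{h_k{}_\lambda L(w)\}|_{\lambda=0}=\frac Nk\res_z\{L^{\frac kN}(z)_\lambda L(w)\}|_{\lambda=0}$. Applying Lemma \ref{lem:15032013}, the factor $\frac Nk$ cancels the $\frac kN$ appearing there, leaving
$$\{h_k{}_\lambda L(w)\}|_{\lambda=0}=\res_z\{L(z+x)_x L(w)\}\big(\big|_{x=\partial}L^{\frac kN-1}(z)\big).$$
This reduction holds verbatim for both the $H$- and the $K$-brackets, so the two parts differ only in which explicit formula is substituted for $\{L(z+x)_x L(w)\}$.

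For part (a) I would substitute the Adler formula \eqref{eq:H}, performing the replacements $\lambda\to x$, $z\to z+x$; the crucial simplification is that $x$ cancels inside every factor $i_z(z-w-\lambda-\partial)^{-1}$, leaving two terms. In the first term the factor $L(z+x)$ together with $\big(\big|_{x=\partial}L^{\frac kN-1}(z)\big)$ recombines, by the symbol-multiplication formula \eqref{eq:mult_symbol} and Proposition \ref{prop:roots}, into $L(z+\partial)L^{\frac kN-1}(z)=L^{\frac kN}(z)$; taking $\res_z$ against $i_z(z-w-\partial)^{-1}$ and using the residue identity \eqref{20130927:cor1} then yields $L^{\frac kN}(w+\partial)_+L(w)$. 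The second term carries the adjoint $L^*(-z)$, which I would convert back to $L$ by Lemma \ref{lem:residui}(b); after the same recombination this produces $-L(w+\partial)L^{\frac kN}(w)_+$. Their sum is exactly the asserted right-hand side, which is the symbol of the classical Gelfand-Dickey commutator $[(L^{\frac kN})_+,L]=A^{(L)}(L^{\frac kN-1})$.

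For part (b) the argument is identical except that I substitute \eqref{eq:K} in place of \eqref{eq:H}. Now four residue terms appear: two of them (coming from the $L(w)$ and the $L(w+x)$ summands in \eqref{eq:K}) recombine via \eqref{eq:mult_symbol} and \eqref{20130927:cor1} into $L^{\frac kN-1}(w+\partial)_+L(w)-L(w+\partial)L^{\frac kN-1}(w)_+$, which is precisely the claimed answer; the remaining two terms — one built from $L(z+x)$ and hence of ``weight'' $\frac kN$, the other carrying $L^*(-z)$ — must cancel against each other after applying Lemma \ref{lem:residui}(b). I expect the main obstacle to be purely bookkeeping: one must keep rigorously separate the derivation $\partial$ introduced by the rule $x=\partial$ acting on $L^{\frac kN-1}(z)$ from the internal $\partial$'s of \eqref{eq:H} and \eqref{eq:K} that act on $L(w)$ or on $L^*(-z)$, and the adjoint terms must be processed with Lemma \ref{lem:residui}(b) in exactly the right order so that the spurious $L^{\frac kN}$-type terms cancel in (b) and, in (a), assemble into the correct second commutator term.
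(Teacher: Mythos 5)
Your proposal is correct and follows essentially the same route as the paper: reduce via Lemma \ref{lem:15032013} to $\res_z\{L(z+x)_xL(w)\}\big(\big|_{x=\partial}L^{\frac kN-1}(z)\big)$, substitute \eqref{eq:H} (resp. \eqref{eq:K}), recombine symbols by \eqref{eq:mult_symbol}, and evaluate the residues with \eqref{20130927:cor1} and Lemma \ref{lem:residui}(b), with the two $L^{\frac kN}(w)_+$ terms cancelling in part (b) exactly as you predict. No gaps.
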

\begin{proof}
By Lemma \ref{lem:15032013} and equation \eqref{eq:H} we have
\begin{equation}\label{20131106:eq1}
\begin{array}{l}
\displaystyle{
\left.\{h_k{}_\lambda L(w)\}_H\right|_{\lambda=0}
=\res_z
L^{\frac kN}(z)i_z(z-w-\partial)^{-1}
L(w)
}
\\
\displaystyle{
-L(w+\partial)
\res_z L^{\frac{k}{N}-1}(z)i_z(z-w-\partial)^{-1}L^*(-z)
\,.}
\end{array}
\end{equation}
By equation \eqref{20130927:cor1}, we have
\begin{equation}\label{20131106:eq2}
\res_z
L^{\frac kN}(z)i_z(z-w-\partial)^{-1}
=L^{\frac kN}(w+\partial)_+
\,,
\end{equation}
while, by Lemma \ref{lem:residui}(b) 
and equation \eqref{20130927:cor1},
we have
\begin{equation}\label{20131106:eq3}
\begin{array}{l}
\vphantom{\Big(}
\displaystyle{
\res_z L^{\frac{k}{N}-1}(z)i_z(z-w-\partial)^{-1}L^*(-z)
=
\res_z L^{\frac{k}{N}-1}(z+\partial)i_z(z-w)^{-1}L(z)
} \\
\vphantom{\Big(}
\displaystyle{
=
\res_z L^{\frac{k}{N}}(z)i_z(z-w)^{-1}
=
L^{\frac{k}{N}}(w)_+
\,.}
\end{array}
\end{equation}
Combining equations \eqref{20131106:eq1}, \eqref{20131106:eq2} and \eqref{20131106:eq3},
we get part (a).
Similarly, for part (b), 
we use Lemma \ref{lem:15032013} and equation \eqref{eq:K} to get
\begin{equation}\label{20131106:eq4}
\begin{array}{l}
\vphantom{\Big(}
\displaystyle{
\left.\{h_k{}_\lambda L(w)\}_K\right|_{\lambda=0}
=
\res_zi_z(z-w)^{-1}
\big(L(z+\partial)-L(w+\partial)\big)L^{\frac kN-1}(z)
} \\
\vphantom{\Big(}
\displaystyle{
+
\res_z
L^{\frac kN-1}(z)
i_z(z-w-\partial)^{-1}
\big(L(w)-L^*(-z)\big)
\,.}
\end{array}
\end{equation}
By equation \eqref{20130927:cor1} and Lemma \ref{lem:residui}(b) we have
\begin{equation}\label{20131106:eq5}
\begin{array}{l}
\displaystyle{
\res_zi_z(z-w)^{-1}
L(z+\partial)L^{\frac kN-1}(z)
=
L^{\frac kN}(w)_+
} \\
\displaystyle{
=
\res_z
L^{\frac kN-1}(z)
i_z(z-w-\partial)^{-1}
L^*(-z)
\,.}
\end{array}
\end{equation}
Moreover, by equation \eqref{20130927:cor1} we also have
\begin{equation}\label{20131106:eq6}
\res_zi_z(z-w)^{-1}
L(w+\partial)
L^{\frac kN-1}(z)
=
L(w+\partial)L^{\frac kN-1}(w)_+\,,
\end{equation}
and
\begin{equation}\label{20131106:eq7}
\res_z
L^{\frac kN-1}(z)
i_z(z-w-\partial)^{-1}L(w)
=
L^{\frac kN-1}(w+\partial)_+
L(w)
\,.
\end{equation}
Combining equations \eqref{20131106:eq4}, 
\eqref{20131106:eq5}, \eqref{20131106:eq6}, and \eqref{20131106:eq7}, 
we get the claim.
\end{proof}
\begin{lemma}\label{lem:lenard_works}
For every $k\geq1$, 
we have the Lenard-Magri recursion
\begin{equation}\label{leneq}
\{h_k{}_\lambda u)\}_H\big|_{\lambda=0}
=\left.\{h_{k+N}{}_\lambda u\}_K\right|_{\lambda=0}
\,,\,\,
\text{ for all } u\in\mc V
\,.
\end{equation}
\end{lemma}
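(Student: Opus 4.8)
The plan is to reduce the identity \eqref{leneq}, which is required for all $u\in\mc V$, to an identity on the generators $u_i$, and then to read it off directly from Lemma \ref{cor:17032013}. By Proposition \ref{pvahamop}, for each fixed $k$ the maps $u\mapsto\{h_k{}_\lambda u\}_H|_{\lambda=0}$ and $u\mapsto\{h_{k+N}{}_\lambda u\}_K|_{\lambda=0}$ are derivations of the commutative associative product on $\mc V$ that commute with $\partial$. Two derivations commuting with $\partial$ which agree on the generators $u_i$ automatically agree on all $u_i^{(n)}=\partial^n u_i$, and hence on the whole differential polynomial algebra $\mc V$. Thus it suffices to verify \eqref{leneq} for $u=u_i$, $i\in I$ (resp.\ $I_-$), or equivalently, packaging all the generators into the generating series, for $u=L(w)$.

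First I would write the $u=L(w)$ form of both sides. By Lemma \ref{cor:17032013}(a), the left-hand side is
\[
\{h_k{}_\lambda L(w)\}_H|_{\lambda=0}
=L^{\frac kN}(w+\partial)_+L(w)-L(w+\partial)L^{\frac kN}(w)_+\,.
\]
For the right-hand side, I would apply Lemma \ref{cor:17032013}(b) with $k$ replaced by $k+N$, giving
\[
\{h_{k+N}{}_\lambda L(w)\}_K|_{\lambda=0}
=L^{\frac{k+N}{N}-1}(w+\partial)_+L(w)-L(w+\partial)L^{\frac{k+N}{N}-1}(w)_+\,.
\]
The only remaining point is the arithmetic identity $\frac{k+N}{N}-1=\frac kN$, which makes the two expressions coincide term by term. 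Taking the coefficient of $w^{-i-1}$ then yields $\{h_k{}_\lambda u_i\}_H|_{\lambda=0}=\{h_{k+N}{}_\lambda u_i\}_K|_{\lambda=0}$ for every generator $u_i$, and the reduction of the first paragraph upgrades this to all $u\in\mc V$.

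In this scheme there is essentially no genuine obstacle left: all of the analytic work---computing the Hamiltonian action of $h_k$ on $L(w)$ for both Poisson structures in closed ``dressed'' form---has already been carried out in Lemma \ref{cor:17032013}, which itself rests on Lemma \ref{lem:15032013} and the residue manipulations of Lemma \ref{lem:residui}. The two mildly delicate points to keep straight are (i) the justification that it is enough to test on generators, which relies on the derivation property and $\partial$-commutation from Proposition \ref{pvahamop} together with the fact that $\mc V$ is freely generated as a differential algebra by the $u_i$; and (ii) the bookkeeping of the exponents $\frac kN$ versus $\frac{k+N}{N}-1$, which is precisely the place where the shift by $N$ built into the definition of $h_k$ engineers the Lenard-Magri step.
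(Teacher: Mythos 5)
Your proposal is correct and follows essentially the same route as the paper: the paper's proof also reduces \eqref{leneq} to the generating series $u=L(w)$ via Lemma \ref{cor:17032013} (with the exponent identity $\frac{k+N}{N}-1=\frac kN$ implicit) and then extends to all of $\mc V$ by the Leibniz rule. Your write-up merely makes the derivation-property justification and the exponent bookkeeping explicit.
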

\begin{proof}
By Lemma \ref{cor:17032013},
the recursion \eqref{leneq} holds for $u=L(w)$,
the generating series of the generators of $\mc V$.
Hence, \eqref{leneq} holds for all $u\in\mc V$  by the Leibniz rule.
\end{proof}
\begin{lemma}\label{lem:lenard_start}
For every $\varepsilon\in\{1,\dots,N\}$, 
we have 
\begin{equation}\label{leneq2}
\{{h_{\varepsilon}}_\lambda u\}_K\big|_{\lambda=0}=0
\,,\,\,
\text{ for all } u\in\mc V
\,.
\end{equation}
\end{lemma}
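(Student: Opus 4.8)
The plan is to reduce the identity to the single generating series $u=L(w)$ and then to read off the statement from Lemma \ref{cor:17032013}(b) by a short order count. First I would observe that, by Proposition \ref{pvahamop}, the map $u\mapsto\{{h_\varepsilon}_\lambda u\}_K|_{\lambda=0}=\{\tint h_\varepsilon,u\}_K$ is a derivation of the commutative product on $\mc V$ that commutes with $\partial$. Consequently, to prove \eqref{leneq2} for all $u$ it suffices to verify it on the differential generators $u_i$ (and their $\partial$-derivatives follow automatically); equivalently, it is enough to show that the generating series $\{{h_\varepsilon}_\lambda L(w)\}_K|_{\lambda=0}$ vanishes identically in $w$.

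By Lemma \ref{cor:17032013}(b) this generating series equals
\[
L^{\frac{\varepsilon}{N}-1}(w+\partial)_+L(w)-L(w+\partial)L^{\frac{\varepsilon}{N}-1}(w)_+\,,
\]
so the whole computation is governed by the order of the pseudodifferential operator $L^{\frac{\varepsilon}{N}-1}$. Since $L$ is monic of order $N$, Proposition \ref{prop:roots} tells us that $L^{\frac1N}$ is monic of order $1$, whence $L^{\frac{\varepsilon}{N}-1}$ has order $\varepsilon-N$.

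For $\varepsilon\in\{1,\dots,N-1\}$ the order $\varepsilon-N$ is strictly negative, so $L^{\frac{\varepsilon}{N}-1}$ is an integral operator; its differential (positive) part is zero, and therefore both $L^{\frac{\varepsilon}{N}-1}(w+\partial)_+$ and $L^{\frac{\varepsilon}{N}-1}(w)_+$ vanish. Hence both summands in the displayed expression are zero and the claim is immediate.

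The one case requiring care, and the step I expect to be the main (if mild) obstacle, is $\varepsilon=N$, where $L^{\frac{\varepsilon}{N}-1}=L^0=\mbb1$ and the naive order argument no longer forces term-by-term vanishing. Here both positive parts are equal to $1$, so the expression collapses to $L(w)-L(w+\partial)\cdot 1$. The key observation is that in $L(w+\partial)\cdot1$ the symbol $L(w+\partial)=\sum_{n\le N}u_{-n-1}(w+\partial)^n$ is expanded in non-negative powers of $\partial$ acting to the right on the constant $1$; since $\partial^j(1)=0$ for $j\ge1$, only the $\partial^0$-contributions survive, giving $L(w+\partial)\cdot1=\sum_{n\le N}u_{-n-1}w^n=L(w)$. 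Thus $L(w)-L(w+\partial)\cdot1=0$, completing the verification on generators and hence, by the derivation property, the proof of the lemma.
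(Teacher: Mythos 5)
Your proof is correct and follows essentially the same route as the paper: reduce to the generating series via Lemma \ref{cor:17032013}(b), note that $L^{\frac{\varepsilon}{N}-1}$ has negative order for $\varepsilon<N$ so both positive parts vanish, and for $\varepsilon=N$ observe that the expression collapses to $L(w)-L(w+\partial)\cdot 1=0$. The paper's proof is just a terser version of the same argument.
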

\begin{proof}
For $1\leq\varepsilon<N$, we have $L^{\frac{\varepsilon}{N}-1}(w)_+=0$,
and therefore equation \eqref{leneq2} holds by Lemma Lemma \ref{cor:17032013}(b).
Moreover, 
$\{h_N{}_\lambda L(w)\}_K\big|_{\lambda=0}
=L(w)-L(w+\partial)\cdot1=0$.
\end{proof}
\begin{proof}[Proof of Theorem \ref{prop:lenard_works}]
According to the Lenard-Magri scheme of integrability (see Section \ref{sub:lenard_scheme}),
by Lemmas \ref{lem:lenard_works} and \ref{lem:lenard_start}
we have that $\tint h_k,\,k\geq1$, are integrals of motion in involution:
$\{\tint h_m,\tint h_n\}_{H,K}=0$ for all $m,n\geq1$.
By Lemma \ref{20131107:lem} they span an infinite dimensional space, as required.
\end{proof}
\begin{remark}\label{20132507:rem1}
It follows from Lemma \ref{cor:17032013} and equation \eqref{eq:mult_symbol}
that the Hamiltonian equation corresponding to the Hamiltonian functional
$\tint h_k$, $k\geq1$, can be written as (in terms of generating series)
\begin{equation}\label{laxpair}
\frac{dL(w)}{dt_k}=[(L^{\frac kN})_+,L](w)\,,
\end{equation}
where on the RHS we have to take the symbol of the usual commutator
of pseudodifferential operators.
This equation is the symbol of the usual \emph{Lax pair} representation
of the AGD hierarchies of Hamiltonian equations.
\end{remark}

\subsection{Integrable hierarchies for the \texorpdfstring{$\mc W$}{W}-algebra 
\texorpdfstring{$\mc W^\infty_N$}{W_N^infty} and \texorpdfstring{$\mc W_N$}{W_N}.}
\label{sub:hierarchies2}

As in the previous section, let $\mc V=\mc V_N^\infty$ or $\mc V_N$.
Let also $(H,K)$ be the AGD bi-Poisson structure on $\mc V$,
and let $H^D$ be the Dirac modification of $H$ by the constraint $\theta=u_{-N}$.
The corresponding $\lambda$-bracket is given, in terms of generating series,
by equation \eqref{eq:H_dirac}.
Recall by Theorem \ref{20130516:thm1} that $(H^D,K)$ is also a bi-Poisson
structure on $\mc V$.
\begin{lemma}\label{prop:17032013}
For any $k\geq1$, we have, in $\mc V$,
$$
\{h_k{}_\lambda L(w)\}_{H}\big|_{\lambda=0}
=\{h_k{}_\lambda L(w)\}_{H^D}\big|_{\lambda=0}\,.
$$
\end{lemma}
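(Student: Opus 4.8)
The plan is to start from the explicit Dirac modification formula \eqref{eq:dirac} for the single constraint $\theta=u_{-N}$. By Lemma \ref{20130926:lem1}(c) the associated operator is $C(\lambda)=\{u_{-N}{}_\lambda u_{-N}\}_H=-N\lambda$, whence $C^{-1}(\partial)=-\frac1N\partial^{-1}$, and \eqref{eq:dirac} applied with $a=h_k$, $b=L(w)$ reads
\[
\{h_k{}_\lambda L(w)\}_{H^D}
=\{h_k{}_\lambda L(w)\}_H
+\frac1N\{u_{-N}{}_{\lambda+\partial}L(w)\}_{H,\rightarrow}(\lambda+\partial)^{-1}\{h_k{}_\lambda u_{-N}\}_H .
\]
Thus the lemma is equivalent to the vanishing at $\lambda=0$ of the correction term, and everything reduces to analyzing the two factors $\{u_{-N}{}_{\lambda+\partial}L(w)\}_H$ and $\{h_k{}_\lambda u_{-N}\}_H$.

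First I would record that, by Lemma \ref{20130926:lem1}(a), $\{u_{-N}{}_\mu L(w)\}_H=L(w)-L(w+\mu)$, which vanishes at $\mu=0$ and is therefore divisible by $\mu$; write $\{u_{-N}{}_\mu L(w)\}_H=\sum_{p\geq1}\ell_p(w)\mu^p$. Consequently $\{u_{-N}{}_{\lambda+\partial}L(w)\}_{H,\rightarrow}(\lambda+\partial)^{-1}$ carries no negative powers of $\lambda+\partial$ (this is exactly the locality already observed in Proposition \ref{20130925:prop1}), so the correction equals $\frac1N\sum_{p\geq1}\ell_p(w)(\lambda+\partial)^{p-1}\{h_k{}_\lambda u_{-N}\}_H$. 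Expanding $\{h_k{}_\lambda u_{-N}\}_H=\sum_m\psi_m\lambda^m$ and collecting the $\lambda^0$ term, one checks that the contributions of $\psi_m$ with $m\geq1$ all carry positive powers of $\lambda$, so that only $\psi_0$ survives and the correction at $\lambda=0$ equals $\frac1N\sum_{p\geq1}\ell_p(w)\,\partial^{p-1}\psi_0$, with $\psi_0=\{h_k{}_\lambda u_{-N}\}_H\big|_{\lambda=0}$. Hence it suffices to prove $\psi_0=0$.

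The heart of the argument is therefore the identity $\{h_k{}_\lambda u_{-N}\}_H\big|_{\lambda=0}=0$. By Lemma \ref{cor:17032013}(a) together with \eqref{eq:mult_symbol}, $\{h_k{}_\lambda L(w)\}_H\big|_{\lambda=0}$ is the symbol of $[(L^{\frac kN})_+,L]$; extracting the coefficient of $w^{N-1}$, which is precisely the coefficient $u_{-N}$ in $L(w)$, identifies $\psi_0$ with the coefficient of $\partial^{N-1}$ in $[(L^{\frac kN})_+,L]$. Since $L^{\frac kN}$ commutes with $L$, I may replace $(L^{\frac kN})_+$ by $-(L^{\frac kN})_-$, so $[(L^{\frac kN})_+,L]=-[Q,L]$ with $Q=(L^{\frac kN})_-$ of order $\leq-1$. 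Writing $Q=q_{-1}\partial^{-1}+\dots$, a short computation shows that both $QL$ and $LQ$ have $\partial^{N-1}$-coefficient equal to $q_{-1}$, so their difference has vanishing $\partial^{N-1}$-coefficient. This gives $\psi_0=0$ and completes the proof.

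I expect the only genuine subtlety to lie in the second paragraph: making precise that, once the factor $\mu$ of $\{u_{-N}{}_\mu L(w)\}_H$ cancels the $(\lambda+\partial)^{-1}$, the evaluation at $\lambda=0$ really isolates $\partial$-derivatives of $\psi_0$ alone, with no derivatives of the higher coefficients $\psi_m$ creeping in through the non-commutativity of $\lambda$ and $\partial$. By contrast, the commutator identity establishing $\psi_0=0$ is a routine order count once the reduction is in place.
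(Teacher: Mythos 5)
Your proof is correct, and it follows the same overall skeleton as the paper's (reduce to the vanishing, at $\lambda=0$, of the Dirac correction term), but it establishes the crucial vanishing by a different mechanism. The paper plugs the extra term of \eqref{eq:H_dirac} into the machinery of Lemma \ref{lem:15032013}, obtaining the correction in the closed form $-\frac1N\big(L(w+\partial)-L(w)\big)\partial^{-1}\res_z\big(L^*(-z)-L(z+\partial)\big)L^{\frac kN-1}(z)$, and then kills the residue factor by the single identity $\res_zL^*(-z)L^{\frac kN-1}(z)=\res_zL^{\frac kN}(z)=\res_zL(z+\partial)L^{\frac kN-1}(z)$ from Lemma \ref{lem:residui}(b); note that this residue factor is precisely your $\psi_0=\{h_k{}_\lambda u_{-N}\}_H\big|_{\lambda=0}$, so the two arguments are proving the same underlying fact. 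You instead identify $\psi_0$ with the coefficient of $\partial^{N-1}$ in $[(L^{\frac kN})_+,L]$ via Lemma \ref{cor:17032013}(a) and kill it by the order count $\mathrm{ord}\,[(L^{\frac kN})_-,L]\le N-2$ — the classical statement that the Lax flows preserve the subprincipal coefficient $u_{-N}$. The paper's route is shorter and purely computational; yours is more conceptual and makes transparent why the reduction by $u_{-N}$ is compatible with the hierarchy. Your handling of the $\lambda=0$ evaluation (divisibility of $\{u_{-N}{}_\mu L(w)\}_H$ by $\mu$ cancelling $(\lambda+\partial)^{-1}$, and the observation that $(\lambda+\partial)^{p-1}(\psi_m\lambda^m)$ carries an overall factor $\lambda^m$ since $\lambda$ is a central formal variable, so only $\psi_0$ survives) is also sound; this is the same locality observation made in the proof of Proposition \ref{20130925:prop1}.
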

\begin{proof}
By Lemma \ref{lem:15032013} and equation \eqref{eq:H_dirac} we have
$$
\begin{array}{l}
\vphantom{\Big(}
\displaystyle{
\{h_k{}_\lambda L(w)\}_{H^D}\big|_{\lambda=0}
-\{h_k{}_\lambda L(w)\}_{H}\big|_{\lambda=0}
} \\
\vphantom{\Big(}
\displaystyle{
=
-\frac1N\left(L(w+\partial)-L(w)\right)\partial^{-1}
\res_z\left(L^*(-z)-L(z+\partial)\right)L^{\frac kN-1}(z)
\,.}
\end{array}
$$
This is zero since, by Lemma \ref{lem:residui}(b), we have
$$
\res_zL^*(-z)L^{\frac kN-1}(z)
=
\res_zL^{\frac kN}(z)
=
\res_zL(z+\partial)L^{\frac kN-1}(z)
\,.
$$
\end{proof}
Recall from Section \ref{sub:sln_red} the definition of the classical $\mc W$-algebras
$\mc W^\infty_N=\mb F[u_i^{(n)}i\in I\setminus\{-N\},n\in\mb Z_+]=\quot{\mc V^\infty_N}{\langle u_{-N}\rangle}$ 
and $\mc W_N=\mb F[u_i^{(n)}\mid i\in I_-\setminus\{-N\},n\in\mb Z_+]=\quot{\mc V_N}{\langle u_{-N}\rangle}$,
obtained from the AGD bi-PVAs $\mc V_N^\infty$ and $\mc V_N$ respectively, 
via Dirac reduction.
We shall denote $\mc W=\mc W_N^\infty$ or $\mc W_N$, with its bi-Poisson structure $(H^D,K)$.

With an abuse of notation, we denote $h_k\in\mc W$, for $k\geq1$,
the image of \eqref{hk} in the quotient space $\mc W=\quot{\mc V}{\langle u_{-N}\rangle}$.
By Lemmas \ref{lem:lenard_works}, \ref{lem:lenard_start}, and \ref{prop:17032013},
we have the Lenard-Magri recursions ($u\in\mc W$):
$$
\begin{array}{l}
\vphantom{\Big(}
\displaystyle{
\{{h_k}_\lambda u\}_K\big|_{\lambda=0}=0
\,\,\text{ for all } k=1,\dots,N
\,,} \\
\vphantom{\Big(}
\displaystyle{
\{h_k{}_\lambda u\}_{H^D}\big|_{\lambda=0}
=\{h_{k+N}{}_\lambda u\}_K\big|_{\lambda=0}
\,\,\text{ for all } k\geq1
\,.}
\end{array}
$$
Furthermore, with the same argument as in the proof of Lemma \ref{20131107:lem},
we get
$$
\begin{array}{l}
\displaystyle{
\overline{\frac{\delta h_k}{\delta u_{-N+1}}}
=
\binom{\frac kN-1}{\frac{k-1}2}(u_{-N+1})^{\frac{k-1}2}\,\text{ if } k \text{ is odd, and } 0 \text{ otherwise }
\,,} \\
\displaystyle{
\overline{\frac{\delta h_k}{\delta u_{-N+2}}}
=
\binom{\frac kN-1}{\frac{k}2-1}(u_{-N+1})^{\frac{k}2-1}\,\text{ if } k
\text{ is even, and } 0 \text{ otherwise }
\,,} 
\end{array}
$$
where this time $f\mapsto\bar f$
denotes the evaluation map $\mc W\twoheadrightarrow\mb F[u_{-N+1}]$
at $u_i^{(n)}=0$ for $(i,n)\neq(-N+1,0)$.
It follows, in particular, that the local functionals $\tint h_k\in\quot{\mc W}{\partial\mc W}$,
for $k\in\mb Z_+\backslash N\mb Z_+$
are linearly independent.
In conclusion, according to the Lenard-Magri scheme of integrability, we get the following
\begin{theorem}\label{prop:lenard_works_W}
We have an integrable hierarchy of Hamiltonian equations in $\mc W$:
$$
\frac{du}{dt_k}=\{\tint h_k,u\}_{H^D}=\{\tint h_{k+N},u\}_K
\,\,,\,\,\,\,
k\geq1
\,.
$$
\end{theorem}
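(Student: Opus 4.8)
The plan is to obtain the statement as a direct application of the Lenard-Magri scheme of Section \ref{sub:lenard_scheme}, taking the compatible pair $\{\cdot\,_\lambda\,\cdot\}_0=K$ and $\{\cdot\,_\lambda\,\cdot\}_1=H^D$ on $\mc W$. That this really is a compatible pair on the quotient is guaranteed by Theorem \ref{20130516:thm1}: the constraint $u_{-N}$ is a $K$-Casimir by Lemma \ref{20130926:lem1}(d), while $\{{u_{-N}}_\lambda u_{-N}\}_H=-N\lambda$ is invertible by Lemma \ref{20130926:lem1}(c), so $(H^D,K)$ descends to a bi-Poisson structure on $\mc W=\quot{\mc V}{\langle u_{-N}\rangle}$. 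The two inputs needed for the scheme are already in hand from the discussion preceding the statement: the $N$ initial conditions $\{{h_k}_\lambda u\}_K|_{\lambda=0}=0$ for $k=1,\dots,N$, and the recursion $\{{h_k}_\lambda u\}_{H^D}|_{\lambda=0}=\{{h_{k+N}}_\lambda u\}_K|_{\lambda=0}$ for all $k\geq1$ and $u\in\mc W$.

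The next step is the involutivity of the $\tint h_k$ for both brackets. Reading the recursion as an identity of Hamiltonian derivations, it asserts that the $H^D$-Hamiltonian vector field of $h_k$ equals the $K$-Hamiltonian vector field of $h_{k+N}$; hence the family $\{\tint h_k\}_{k\geq1}$ decomposes into the $N$ interleaved subsequences $\{\tint h_{r+jN}\}_{j\geq0}$, $r=1,\dots,N$, each obeying a one-step recursion started from a $K$-Casimir seed $\tint h_r$, and I would feed these into the single-sequence and two-sequence involutivity results recalled in Section \ref{sub:lenard_scheme}. Concretely, setting $c_{m,n}=\{\tint h_m,\tint h_n\}_K$, the recursion gives $\{\tint h_m,\tint h_n\}_{H^D}=c_{m+N,n}$, and skew-symmetry of both reduced brackets on $\quot{\mc W}{\partial\mc W}$ forces the descent relation $c_{m+N,n}=c_{m,n+N}$. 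Since $c_{k,n}=0$ for $1\leq k\leq N$ by the initial conditions, iterating $c_{m,n}=c_{m-N,n+N}$ down to a first index in $\{1,\dots,N\}$ yields $c_{m,n}=0$ for all $m,n\geq1$, whence also $\{\tint h_m,\tint h_n\}_{H^D}=c_{m+N,n}=0$.

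Finally, the infinite-dimensionality of $\Span\{\tint h_k\}$ is supplied by the evaluation-homomorphism computation $\mc W\twoheadrightarrow\mb F[u_{-N+1}]$ of $\overline{\delta h_k/\delta u_{-N+1}}$ and $\overline{\delta h_k/\delta u_{-N+2}}$ carried out just above the statement, which shows the $\tint h_k$ with $k\in\mb Z_+\setminus N\mb Z_+$ to be linearly independent; together with involutivity this is precisely the data demanded by Definition \ref{hamsys}, giving the asserted hierarchy. I expect no real analytic difficulty, since the substantive work is already absorbed into the preceding lemmas and into the Dirac-reduction Theorem \ref{20130516:thm1}; the \emph{main obstacle} is purely combinatorial, namely checking that the step-$N$ recursion together with the $N$ Casimir seeds still closes the involutivity argument, which is exactly what the descent relation $c_{m+N,n}=c_{m,n+N}$ secures.
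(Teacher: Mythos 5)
Your proposal is correct and follows essentially the same route as the paper: the paper's proof of Theorem \ref{prop:lenard_works_W} is exactly the combination of the displayed Lenard--Magri recursion and the $N$ Casimir seeds (obtained from Lemmas \ref{lem:lenard_works}, \ref{lem:lenard_start} and \ref{prop:17032013}) with the linear-independence computation via the evaluation map onto $\mb F[u_{-N+1}]$, followed by an appeal to the Lenard--Magri scheme of Section \ref{sub:lenard_scheme}. The only difference is that you spell out explicitly the involutivity descent $c_{m+N,n}=c_{m,n+N}$ adapted to the step-$N$ recursion, which the paper delegates to the general scheme and to \cite[Sec.2.1]{BDSK09}.
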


\begin{example}[$\mc W^\infty_1$: the KP hierarchy]\label{exa:kp}
On $\mc W^\infty_1=\mb F[u_i^{(n)}\mid i,n\in\mb Z_+]$,
we have
$L(\partial)=\partial+\sum_{i\in\mb Z_+}u_i\partial^{-i-1}$.
It is not difficult to compute
the first few integrals of motion $\tint h_k$, $k\geq1$, directly from the definition \eqref{hk}:
$$
\tint h_1=\tint u_0
\,,\,\,
\tint h_2=\tint u_1
\,,\,\,
\tint h_3=\tint u_2+u_0^2
\,,\,\,
\tint h_4=\tint u_3+3u_0u_1
\,,\dots
$$
To find the corresponding bi-Hamiltonian equations,
we use Lemma \ref{cor:17032013}.
We have
$L(w)_+=w$,
$L^2(w)_+=w^2+2u_0$,
$L^3(w)_+=w^3+3u_0w+3(u_1+u_0')$.
Hence, 
\begin{equation}\label{KP}
\begin{array}{l}
\displaystyle{
\frac{dL(w)}{dt_1}=\partial L(w)
\,\,,\,\,\,\,
\frac{dL(w)}{dt_2}
=\partial^2L(w)+2w\partial L(w)+2(L(w)-L(w+\partial))u_0
\,,} \\
\displaystyle{
\frac{dL(w)}{dt_3}=
\partial^3L(w)+3w\partial^2L(w)+3w^2\partial L(w)+3u_0\partial L(w)
\,,} \\
\displaystyle{
\,\,\,\,\,\,\,\,\,\,\,\,\,\,\,\,\,\,\,\,\,\,\,\,\,\,\,\,\,\,\,\,\,\,\,\,\,\,\,\,\,\,\,\,\,
+3(L(w)-L(w+\partial))((w+\partial)u_0+u_1)
\,\dots}
\end{array}
\end{equation}
\begin{remark}\label{rem:kp}
Consider the first two equations in the second system of the hierarchy \eqref{KP},
and the first equation in the third system of \eqref{KP}.
After eliminating the variables $u_1$ and $u_2$ 
and relabeling $t_1=y$, $t_2=t$ and $u=2u_0$, we get
\begin{equation}\label{eq:kp}
3u_{yy}=(4u_t-u'''-6uu')'\,,
\end{equation}
which is known as the Kadomtsev-Petviashvili (KP) equation.
\end{remark}
\end{example}

\begin{remark}
In fact, we have infinitely many
bi-Poisson structures for the KP equation,
corresponding to the biPVA's $\mc W_N^\infty=\mb F[u_i^{(n)}\mid i\geq-N+1,n\in\mb Z_+]$, 
for $N\geq1$, \cite{Rad87}.
An explicit differential algebra isomorphism
$\varphi_N:\,\mc W_1^\infty\to\mc W_N^\infty$
is defined by the equation
$$
\varphi_N(L(z))
=
L_N^{\frac1N}(z)
\,,
$$
where $L(\partial)$ is as in Example \ref{exa:kp},
and $L_N(\partial)=\partial^N+\sum_{i\geq-N+1}u_i\partial^{-i-1}$.
This is not a PVA isomorohism (since $L_N^{\frac1N}(\partial)$ is not of Adler type).
On the other hand, one can check that
the integrable hierarchy in $\mc W^{\infty}_N$, given by Theorem \ref{prop:lenard_works_W}, 
is the same for every choice of the positive integer $N$.
Namely, we have, for every $N\geq1$,
$$
\varphi_{N}\bigg(
\frac{d L^N(w)}{d t_{k}}
\bigg)
=\frac{d L_N(w)}{d t_{N,k}}
=[(L_N^{\frac kN})_+,L_N](w)
\,.
$$
\end{remark}

\begin{example}[$\mc W_2$: the KdV hierarchy]
Recall from Example \ref{w2} that $\mc W_2=\mb F[u^{(n)}\mid n\in\mb Z_+]$,
with the bi-PVA structure as in \eqref{eq:w2}.
The first few fractional powers of 
$L(\partial)=\partial^2+u\in\mc W_2[\partial]$ are
$$
\begin{array}{l}
\displaystyle{
L^{\frac12}(\partial)
=\partial+\frac12u\partial^{-1}
-\frac14u'\partial^{-2}+\frac18(u''-u^2)\partial^{-3}
-\frac{1}{16}(u'''-6uu')\partial^{-4}+\dots
\,,} \\
\displaystyle{
L^{\frac32}(\partial)
=\partial^3+\frac32u\partial+\frac34u'
+\frac18(3u^2-u'')\partial^{-1}+\dots
\,,}
\end{array}
$$
from which we get 
$L^{\frac12}(w)_+=w$,
$L^{\frac32}(w)_+=w^3+\frac{3}{4}(2w+\partial)u$,
and 
$\tint h_1=\tint u$, $\tint h_3=\tint\frac14u^2$.
By Lemma \ref{cor:17032013} we get
the corresponding Hamiltonian equations \eqref{eq:hierarchy}:
$\frac{du}{dt_1}=u'$ and the Korteweg-de Vries equation
$$
\frac{du}{dt_3}=\frac14(u'''+6uu')\,.
$$
\end{example}

\begin{example}[$\mc W_3$: the Boussinesq hierarchy]
Recall from Example \ref{w3} that $\mc W_3=\mb F[u^{(n)},v^{(n)}\mid n\in\mb Z_+]$,
with the bi-PVA structure as in \eqref{eq:w3}.
The first few fractional powers of 
$L(\partial)=\partial^3+u\partial+v\in\mc W_3[\partial]$ are
$$
\begin{array}{l}
\displaystyle{
L^{\frac13}(\partial)
=\partial+\frac13u\partial^{-1}
-\frac13(u'-v)\partial^{-2}
+\frac19(2u''-3v'-u^2)\partial^{-3}+\dots
\,,} \\
\displaystyle{
L^{\frac23}(\partial)
=\partial^2+\frac23u\partial+\frac13(2v-u')\partial^{-1}
+\dots\,.}
\end{array}
$$
Hence, 
$L^{\frac13}(w)_+=w$,
$L^{\frac23}(w)_+=w^2+\frac{2}{3}u$,
and $\tint h_1=\tint u$, $\tint h_2=\tint v$.
The corresponding Hamiltonian equations are
$\frac{du}{dt_1}=u'$, $\frac{dv}{dt_1}=v'$, and
$$
\frac{du}{dt_2}=-u''+2v'
\,\,,\,\,\,\,
\frac{dv}{dt_2}=v''-\frac23u'''-\frac23 uu'
\,.
$$
After eliminating $v$ from this system, we get the Boussinesq equation:
$$
u_{tt}=-\frac13\big(u^{(4)}-4(uu')'\big)\,.
$$
\end{example}

\section{Generalization to the matrix case}\label{sec:matrixAGD}

\subsection{Adler type matrix pseudodifferential operators}\label{subsec:adler_matrix}

Let $\mc V$ be a differential algebra,
and let $\mc M=\Mat_{m\times m}\mc V$.
Let $L=\left(L_{ab}(\partial)\right)_{a,b=1}^m\in\mc M((\partial^{-1}))$ 
be a matrix pseudodifferential operator of order $\ord(L)=N\in\mb Z$.
As in the scalar case,
we can define the corresponding Adler map 
$A^{(L)}:\,\mc M((\partial^{-1}))\to\mc M((\partial^{-1}))$ given by \eqref{adlermap},
and with identifications analogous to \eqref{id1bis} and \eqref{id2bis},
we get the corresponding map
$H^{(L)}:\,\mc M^{\oplus I}\to \mc M^I$ (cf. \eqref{def:H}),
where, as before, $I=\{-N,-N+1,\dots\}$.
This map is represented by a tensor
$H^{(L)}=\big(H^{(L)}_{ij;abcd}(\partial)\big)_{i,j\in I;\,a,b,c,d\in\{1,\dots,m\}}$,
where $H^{(L)}_{ij;abcd}(\partial)\in\mc V[\partial]$.
As in Lemma \ref{hseries},
we can write an explicit formula for $H^{(L)}$,
in terms of the generating series
$H^{(L)}_{abcd}(\partial)(z,w)=\sum_{i,j\in I}H_{ij;abcd}^{(L)}(\partial)z^{-i-1}w^{-j-1}$.
We have (cf. equation \eqref{h}):
\begin{equation}\label{h_mat}
\begin{array}{c}
\displaystyle{
H_{abcd}^{(L)}(\partial)(z,w)
=L_{ad}(w)i_w(w-z-\partial)^{-1}\circ L_{cb}(z)
}\\
\displaystyle{
-L_{ad}(z+\partial)i_w(w-z-\partial)^{-1}\circ L_{cb}^*(-w+\partial)
\,.
}
\end{array}
\end{equation}

Let $\mc V$ be a differential algebra endowed with a $\lambda$-bracket $\{\cdot\,_\lambda\,\cdot\}$.
As in the scalar case, 
we say that a matrix pseudodifferential operator $L(\partial)\in\mc M((\partial^{-1}))$
is of \emph{Adler type} (for the $\lambda$-bracket $\{\cdot\,_\lambda\,\cdot\}$)
if the following identity holds in $\mc V[\lambda]((z^{-1},w^{-1}))$:
\begin{equation}\label{generating_mat}
\{L_{ab}(z)_\lambda L_{cd}(w)\}=H_{cdab}^{(L)}(\lambda)(w,z)\,,
\end{equation}
for all $a,b,c,d=1,\dots,m$.
The analogue of Lemma \ref{20130925:lem1}
still holds in the matrix case.
As a consequence, we get (cf. Lemma \ref{20131027:prop1}):
\begin{lemma}\label{20131027:prop1_mat}
Let $\mc V$ be a differential algebra,
let $\{\cdot\,_\lambda\,\cdot\}$ be a $\lambda$-bracket on $\mc V$,
and let $L(\partial)\in\mc M((\partial^{-1}))$ be an Adler type matrix pseudodifferential operator.
Then:
\begin{enumerate}[(a)]
\item The following identity holds in $\mc V[\lambda]((z^{-1},w^{-1}))$:
\begin{equation}\label{skew-symseries_mat}
\{L_{ab}(z)_{\lambda}L_{cd}(w)\}=-\{L_{cd}(w)_{-\lambda-\partial}L_{ab}(z)\}\,,
\end{equation}
for all $a,b,c,d=1,\dots,m$.
\item The following identity holds in
$\mc V[\lambda,\mu]((z_1^{-1},z_2^{-1},z_3^{-1}))$:
\begin{equation}\label{jacobiseries_mat}
\begin{array}{c}
\displaystyle{
\{L_{ab}(z_1)_{\lambda}\{L_{cd}(z_2)_{\mu}L_{ef}(z_3)\}\}
-\{L_{cd}(z_2)_{\mu}\{L_{ab}(z_1)_{\lambda}L_{ef}(z_3)\}\}
}\\
\displaystyle{
=\{\{L_{ab}(z_1)_{\lambda}L_{cd}(z_2)\}_{\lambda+\mu}L_{ef}(z_3)\}\,.
}
\end{array}
\end{equation}
for all $a,b,c,d,e,f=1,\dots,m$.
\end{enumerate}
\end{lemma}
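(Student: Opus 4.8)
The plan is to follow, essentially verbatim, the scalar argument of Lemma~\ref{20131027:prop1}, the only genuinely new work being the bookkeeping of the matrix indices appearing in the generating series \eqref{h_mat}. First I would establish the matrix analogue of Lemma~\ref{20130925:lem1}. For its part~(a), I would take the adjoint in $\partial$ of \eqref{h_mat} and use the analogue of \eqref{20131022:eq5}, exactly as in the scalar derivation of \eqref{h2}; combining the result with \eqref{h_mat} itself and invoking the $\delta$-function property \eqref{deltaprop}, as in the passage leading to \eqref{20131108:eq2}, yields the skew-adjointness
\[
H^{(L)}_{abcd}(\partial)(z,w)=-\big(H^{(L)}_{cdab}\big)^*(\partial)(w,z)\,,
\]
in which the transposition of the outer index pairs $ab\leftrightarrow cd$ precisely matches the contraction pattern $L_{ad}(w)\cdots L_{cb}(z)$ visible in \eqref{h_mat}. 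No cancellation in that computation is sensitive to the index placement, so the scalar calculation carries over once the indices are tracked.

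Given this, part~(a) of the lemma is immediate. Using the Adler type condition \eqref{generating_mat} to rewrite both sides of \eqref{skew-symseries_mat} in terms of $H^{(L)}$, the claimed skew-symmetry reduces exactly to the matrix skew-adjointness just proved; this is the verbatim analogue of the deduction of Lemma~\ref{20131027:prop1}(a) from Lemma~\ref{20130925:lem1}(a).

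For part~(b), I would mirror the proof of Lemma~\ref{20131027:prop1}(b). Using sesquilinearity together with the left and right Leibniz rules of Definition~\ref{def:lambda}, and the Adler type condition \eqref{generating_mat}, I would expand each of the three terms of \eqref{jacobiseries_mat} into the matrix analogues of \eqref{eq:a}--\eqref{eq:d}, \eqref{eq:a'}--\eqref{eq:d'} and \eqref{eq:a''}--\eqref{eq:d''}; each scalar factor there acquires matrix indices dictated by \eqref{h_mat} and by the internal index summed over in the Leibniz rule. The engine of the argument is the matrix analogue of the master identity \eqref{20130925:eq_jacobi}, established by the same regrouping of six terms that uses the elementary identity \eqref{obvious} and the $\delta$-function identities. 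With that identity in hand, the first grouping $\eqref{eq:a}+\eqref{eq:c}-\eqref{eq:b'}-\eqref{eq:d'}=\eqref{eq:a''}+\eqref{eq:c''}$ is exactly the master identity, and the remaining relation $\eqref{eq:b}+\eqref{eq:d}-\eqref{eq:a'}-\eqref{eq:c'}=\eqref{eq:b''}+\eqref{eq:d''}$ follows from the master identity with $\lambda,\mu$ and $z_1,z_2$ exchanged, collapsing to a $\delta$-function expression that vanishes by the properties of the $\delta$-function.

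The one genuinely new point, and the step I expect to be the main obstacle, is verifying that the matrix indices contract consistently throughout this expansion. In the scalar case the products $L(w)\cdots L(z)$ are unambiguous, whereas in \eqref{h_mat} the factors carry shuffled outer indices $L_{ad}$, $L_{cb}$, and the Leibniz rules introduce a summation over an internal matrix index. The task is to check that, after expansion, these internal contractions line up so that the matrix version of \eqref{20130925:eq_jacobi} applies term by term with the correct outer indices $a,b,c,d,e,f$; once this index compatibility is confirmed, every cancellation of the scalar proof goes through without change.
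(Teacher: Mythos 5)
Your proposal is correct and follows exactly the route the paper takes: the paper itself gives no details here, simply asserting that the matrix analogue of Lemma~\ref{20130925:lem1} holds and deducing the lemma as in the scalar case (Lemma~\ref{20131027:prop1}), which is precisely your plan, including the correct form of the matrix skew-adjointness $H^{(L)}_{abcd}(\partial)(z,w)=-\big(H^{(L)}_{cdab}\big)^*(\partial)(w,z)$ dictated by the index pattern of \eqref{h_mat}. The index bookkeeping you flag as the remaining obstacle does go through term by term, so there is no gap beyond what the paper itself leaves implicit.
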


\subsection{The generic matrix pseudodifferential operator of order \texorpdfstring{$N$}{N}
and the corresponding AGD bi-PVA}
\label{sec:2.2b_mat}

For $N,m\geq1$,
let $\mc V_{N,m}^{\infty}=\mb F[u_{i,ab}^{(n)}\mid i\in I,a,b=1,\ldots,m,\,n\in\mb Z_+]$,
and $\mc M_{N,m}^{\infty}=\Mat_{m\times m}\mc V_{N,m}^{\infty}$,
and let $\mc V_{N,m}=\mb F[u_{i,ab}^{(n)}\mid i\in I_-,a,b=1,\ldots,m,\,n\in\mb Z_+]$,
and $\mc M_{N,m}=\Mat_{m\times m}\mc V_{N,m}$.
The \emph{generic} matrix pseudodifferential operator on $\mc V_{N,m}^{\infty}$ 
(resp. $\mc V_{N,m}$) is
\begin{equation}\label{lcap_mat}
\begin{array}{l}
\vphantom{\Big(}
\displaystyle{
L(\partial)
=\partial^N\mbb1_m+U_{-N}\partial^{N-1}+U_{-N+1}\partial^{-N-2}+\ldots
\in\mc M_{N,m}^{\infty}((\partial^{-1}))
\,, } \\
\vphantom{\Big(}
\displaystyle{
\Big(\,\text{ resp. } 
L(\partial)
=\partial^N\mbb1_m+U_{-N}\partial^{N-1}+\ldots+U_{-1}
\in\mc M_{N,m}[\partial]
\Big)
\,.}
\end{array}
\end{equation}
where $U_{i}=(u_{i,ab})_{a,b=1}^m\in\mc M_{N,m}^{\infty}$ for all $i\in I$
(resp. $U_{i}\in\mc M_{N,m}$ for all $i\in I_-$).
There is a unique $\lambda$-bracket $\{\cdot\,_\lambda\,\cdot\}_c$ 
on $\mc V_{N,m}^{\infty}$ (resp. $\mc V_{N,m}$)
such that $L(\partial)-c\mbb1_m$ is of Adler type:
\begin{equation}\label{generating2_mat}
\{L_{ab}(z)_\lambda L_{cd}(w)\}_c=H_{cdab}^{(L-c\mbb1_m)}(\lambda)(w,z)\,,
\end{equation}
for all $a,b,c,d=1,\dots,m$.
Letting, as in Section \ref{sec:2.2b},
$H^{(L-c\mbb1_m)}(\partial)=H(\partial)-cK(\partial)$,
we have, as a consequence of Lemma \ref{20131027:prop1_mat},
two compatible Poisson structures $K$ and $H$ 
on $\mc V_{N,m}^{\infty}$ (resp. $\mc V_{N,m}$),
which we call, respectively, the
\emph{first} and the \emph{second matrix AGD Poisson structures}
(cf. Definition \ref{agd}).
Using equation \eqref{h_mat} we get the following explicit formulas 
for the $\lambda$-brackets associated to $H$ and $K$ ($a,b,c,d=1,\dots,m$):
\begin{equation}\label{eq:H_mat}
\begin{array}{c}
\vphantom{\Big(}
\{L_{ab}(z)_\lambda L_{cd}(w)\}_H
=L_{cb}(z)i_z(z-w-\lambda-\partial)^{-1}L_{ad}(w)
\\
\vphantom{\Big(}
-L_{cb}(w+\lambda+\partial)i_z(z-w-\lambda-\partial)^{-1}L_{ad}^*(-z+\lambda)
\end{array}
\end{equation}
and
\begin{equation}\label{eq:K_mat}
\begin{array}{c}
\displaystyle{
\{L_{ab}(z)_\lambda L_{cd}(w)\}_K
=\delta_{ad}i_z(z-w-\lambda)^{-1}\left(L_{cb}(z)-L_{cb}(w+\lambda)\right)
}\\
\displaystyle{
+\delta_{cb}i_z(z-w-\lambda-\partial)^{-1}\left(L_{ad}(w)-L_{ad}^*(\lambda-z)\right)\,.
}
\end{array}
\end{equation}
Expanding equations \eqref{eq:H} and \eqref{eq:K}
in powers of $z$ and $w$, we get ($i,j\in I$):
\begin{equation}\label{HKij_mat}
\begin{array}{l}
\displaystyle{
H_{ji;cdab}(\lambda)
=
\sum_{k,\alpha\in\mb Z_+}\binom{k}{\alpha}u_{i-k-1,cb}
(\lambda+\partial)^\alpha u_{j+k-\alpha,ad}
}\\
\displaystyle{
-\sum_{k,\alpha,\beta\in\mb Z_+}(-1)^\alpha\binom{j}{\alpha}
\binom{i-k-1}{\beta}u_{j+k-\alpha,cb}
(\lambda+\partial)^{\alpha+\beta}u_{i-\beta-k-1,ad}
\,,}
\\
\displaystyle{
K_{ji;cdab}(\lambda)
=
\epsilon_{ij}
\sum_{k\in\mb Z_+}\left(
\binom{i}{k}\delta_{cb}(\lambda+\partial)^ku_{i+j-k,ad}
-\binom{j}{k}\delta_{ad}(-\lambda)^ku_{i+j-k,cb}\right)
\,,}
\end{array}
\end{equation}
where, as in \eqref{HKij}, $\epsilon_{ij}=+1$ if $i,j\in\mb Z_+$, $\epsilon_{ij}=-1$ if $i,j<0$,
and $\epsilon_{ij}=0$ otherwise.
\begin{remark}\label{gcN}
As in Remark \ref{gc1},
the $\mb F[\partial]$-submodule $R_+\subset \mc V^\infty_{N,m}$
generated by $u_{i,ab}$, $i\in I,a,b\in\{1,\dots,m\}$
is closed with respect to the $K$-$\lambda$-bracket,
and it is a Lie conformal algebra isomorphic to $\mf{gc}_N$
(see \cite{Kac96}).
\end{remark}
\begin{example}\label{20130722:exa1}
For $N=1$, we have $\mc V_{1,m}=\mb F[u_{ab}^{(n)}\mid a,b=1,\dots,m,n\in\mb Z_+]$
and $L(\partial)=\partial\mbb1_m+U\in\mc M_{1,m}[\partial]$.
In this case we have
$$
\{u_{ab}{}_{\lambda}u_{cd}\}_{H}
=\delta_{bc}u_{ad}-\delta_{da}u_{cb}-\delta_{ad}\delta_{cb}\lambda\,,
$$
for any $a,b,c,d=1,\dots,m$ and $K=0$.
This is the affine PVA $S(\mb F[\partial]\mf{gl}_m)$
associated to the Lie algebra $\mf{gl}_m$ and its trace form.
\end{example}

\subsection{The classical matrix \texorpdfstring{$\mc W$}{W}-algebras
\texorpdfstring{$\mc W_{N,m}$}{W_Nm}}\label{sub:V-algebras}

\begin{lemma}\label{20130926:lem1_mat}
In the AGD bi-PVA $\mc V_{N,m}^{\infty}$ (resp. $\mc V_{N,m}$) we have, for all $a,b,c,d=1,\dots,m$:
\begin{enumerate}[(a)]
\item
$\{u_{-N,ab}{}_\lambda L_{cd}(w)\}_H
=\delta_{cb}L_{ad}(w)-\delta_{ad}L_{cb}(w+\lambda)$;
\item
$\{L_{ab}(z)_\lambda u_{-N,cd}\}_H
=\delta_{cb}L^*_{ad}(-z+\lambda)-\delta_{ad}L_{cb}(z)$;
\item
$\{u_{-N,ab}{}_{\lambda}u_{-N,cd}\}_H
=\delta_{cb}u_{-N,ad}-\delta_{ad}u_{-N,cb}-\delta_{ad}\delta_{cb}N\lambda$;
\item
$\{u_{-N,ab}{}_\lambda L_{cd}(w)\}_K=\{L_{ab}(z){}_\lambda u_{-N,cd}\}_K=0$.
\end{enumerate}
\end{lemma}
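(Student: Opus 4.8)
The plan is to reduce everything to the residue computations that prove the scalar Lemma~\ref{20130926:lem1}, now keeping track of the matrix indices. The starting observation is that the generator $u_{-N,ab}$ is the coefficient of $z^{N-1}$ in the symbol $L_{ab}(z)$, that is $u_{-N,ab}=\res_z z^{-N}L_{ab}(z)$, since the leading term of $L(\partial)$ is $\partial^N\mbb1_m$. Hence, by $\mb F$-bilinearity of the $\lambda$-bracket, each of the brackets in (a)--(d) is obtained by applying $\res_z z^{-N}$ (or $\res_w w^{-N}$) to the generating-series bracket $\{L_{ab}(z)_\lambda L_{cd}(w)\}$, for which we have the explicit formulas \eqref{eq:H_mat} and \eqref{eq:K_mat}.

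For part (a) I would apply $\res_z z^{-N}$ to \eqref{eq:H_mat}. The crucial point, exactly as in the scalar case, is that the leading coefficient of $L(\partial)$ is $\mbb1_m$: therefore both $L_{cb}(z)\,i_z(z-w-\lambda-\partial)^{-1}$ and $i_z(z-w-\lambda-\partial)^{-1}L^*_{ad}(-z+\lambda)$ are Laurent series in $z$ of order $N-1$ whose leading coefficient is the scalar $\delta_{cb}$, respectively $\delta_{ad}$. Taking $\res_z z^{-N}$ extracts precisely these leading coefficients, so the first term of \eqref{eq:H_mat} contributes $\delta_{cb}L_{ad}(w)$, while the second contributes $-\delta_{ad}$ times $L_{cb}(w+\lambda+\partial)$ applied to the constant $1$, which equals $-\delta_{ad}L_{cb}(w+\lambda)$ because the powers of $\partial$ annihilate $1$. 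This yields the identity in (a).

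Part (b) then follows from (a) by the matrix skew-symmetry of Lemma~\ref{20131027:prop1_mat}(a): writing $\{L_{ab}(z)_\lambda u_{-N,cd}\}_H=-\{u_{-N,cd}{}_{-\lambda-\partial}L_{ab}(z)\}_H$ and substituting the formula from (a), the term $L_{ad}(z-\lambda-\partial)$ produced by the sesquilinear substitution is rewritten as $L^*_{ad}(-z+\lambda)$ via the identity \eqref{20131022:eq5}. Part (c) follows from (a) by applying $\res_w w^{-N}$, using $u_{-N,cd}=\res_w w^{-N}L_{cd}(w)$: here $\res_w w^{-N}L_{ad}(w)=u_{-N,ad}$, while expanding $L_{cb}(w+\lambda)=\delta_{cb}(w+\lambda)^N+u_{-N,cb}(w+\lambda)^{N-1}+\dots$ and taking the coefficient of $w^{N-1}$ gives $\delta_{cb}N\lambda+u_{-N,cb}$, whence the central term $-\delta_{ad}\delta_{cb}N\lambda$. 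For part (d) I would run the same residue argument on \eqref{eq:K_mat}: the terms containing $L_{cb}(w+\lambda)$ and $L_{ad}(w)$ have $z$-order at most $-N-1<-1$ after multiplication by $z^{-N}$ and so have vanishing residue, whereas the two surviving terms contribute $\delta_{ad}\delta_{cb}$ and $-\delta_{cb}\delta_{ad}$, which cancel; the statement $\{L_{ab}(z)_\lambda u_{-N,cd}\}_K=0$ then follows either by the same computation or by skew-symmetry.

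The computations are entirely parallel to the scalar Lemma~\ref{20130926:lem1}, so I expect no genuine obstacle; the only new bookkeeping is the appearance of the Kronecker deltas $\delta_{cb},\delta_{ad}$, which are forced by the identity-matrix leading coefficient of $L$. The one point requiring mild care is the correct interpretation of the noncommuting symbol $L_{ad}^*(-z+\lambda)$ and of the operator $\partial$ inside $L_{cb}(w+\lambda+\partial)$ acting to the right, but this is handled verbatim as in the scalar proof through \eqref{20131022:eq5} and the residue identity \eqref{20130927:cor1}.
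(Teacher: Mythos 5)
Your proposal is correct and follows essentially the same route as the paper: the paper's proof of this lemma is literally ``same as the proof of Lemma \ref{20130926:lem1}'', i.e.\ apply $\res_z z^{-N}$ (resp.\ $\res_w w^{-N}$) to \eqref{eq:H_mat} and \eqref{eq:K_mat}, use that the relevant series have $z$-order $N-1$ with leading coefficient $\delta_{cb}$ or $\delta_{ad}$ because $L$ is monic with leading coefficient $\mbb1_m$, and deduce (b) by skew-symmetry and (c) by a further residue in $w$. Your index bookkeeping and the cancellation $\delta_{ad}\delta_{cb}-\delta_{cb}\delta_{ad}=0$ in part (d) are exactly what the scalar argument yields in the matrix setting.
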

\begin{proof}
Same as the proof of Lemma \ref{20130926:lem1}.
\end{proof}
Let $C_{abcd}(\lambda)=\{u_{-N,cd}{}_{\lambda}u_{-N,ab}\}_H$, $a,b,c,d\in\{1,\dots,m\}$.
By Lemma \ref{20130926:lem1_mat}(c)
and Proposition \ref{prop:roots}(b), 
the corresponding matrix differential operator
\begin{equation}\label{20140107:eq1}
C(\partial)=(C_{abcd}(\partial))_{abcd=1}^m\in\Mat_{m^2\times m^2}\mc V_{N,m}
\,,
\end{equation}
is invertible. 
Furthermore, by \ref{20130926:lem1_mat}(d), the elements $u_{-N,ab}$ are central
with respect to the Poisson structure $K$, for all $a,b=1,\dots,m$.
Therefore,
by Theorems \ref{prop:dirac} and \ref{20130516:thm1} we can perform the Dirac reduction
to get a bi-Poisson structure $(H^D,K)$ on 
$$
\begin{array}{l}
\displaystyle{
\quot{\mc V_{N,m}^\infty}{\langle u_{-N,ab}\rangle_{a,b=1}^m}
\simeq\mb F\Big[u_{i,ab}^{(n)}\Big| \substack{-N\neq i\in I, n\in\mb Z_+\\ a,b\in\{1,\dots,m\}}\Big]
=:{\mc W}_{N,m}^\infty
} \\
\displaystyle{
\Big(\text{ resp. } 
\quot{\mc V_{N,m}}{\langle u_{-N,ab}\rangle_{a,b=1}^m}
\simeq\mb F\Big[u_{i,ab}^{(n)}\,\Big| \substack{-N\neq i\in I_-, n\in\mb Z_+\\ a,b\in\{1,\dots,m\}}\Big]
=:\mc W_{N,m}
\Big)\,.}
\end{array}
$$
We can compute the non-local $\lambda$-brackets corresponding to $H^D$
using Lemma \ref{20130926:lem1_mat}.
In terms of the generating series, 
we have ($a,b,c,d=1,\dots,m$)
\begin{equation}\label{eq:H_dirac_mat}
\begin{array}{l}
\vphantom{\Big(}
\displaystyle{
\{L_{ab}(z)_{\lambda}L_{cd}(w)\}_{H^D}
=
L_{cb}(z)i_z(z-w-\lambda-\partial)^{-1}L_{ad}(w)
} \\
\vphantom{\Big(}
\displaystyle{
-L_{cb}(w+\lambda+\partial)i_z(z-w-\lambda-\partial)^{-1}L_{ad}^*(-z+\lambda)
} \\
\vphantom{\Big(}
\displaystyle{
-\frac1NL_{cb}(w+\lambda+\partial)(\lambda+\partial)^{-1}L^*_{ad}(-z+\lambda)
-\frac1NL_{ad}(w)(\lambda+\partial)^{-1}L_{cb}(z)
} \\
\vphantom{\Big(}
\displaystyle{
+\frac{1}{N}
\sum_{k=1}^m
\delta_{ad}L_{ck}(w+\lambda+\partial)(\lambda+\partial)^{-1}L_{kb}(z)
} \\
\vphantom{\Big(}
\displaystyle{
+\frac{1}{N}
\sum_{k=1}^m
\delta_{cb}L_{kd}(w)(\lambda+\partial)^{-1}L^*_{ak}(-z+\lambda)
\,.}
\end{array}
\end{equation}

It is not hard to show that $T=\tr\res_{z}L(z)z^{-N+1}$
is a Virasoro element with central charge $\frac{m(N^3-N)}{12}$ 
(cf. Definition \ref{def:CFTtype}).
Moreover, we have, for all $a,b=1,\dots,m$
$$
\{T_{\lambda}u_{j,ab}(w)\}_{H^D}
=\big(\partial+(N+j+1)\lambda\big)u_{j,ab}+O(\lambda^2)\,,
$$
i.e. $u_{j,ab}$ is a $T$-eigenvector of conformal weight $N+j+1$.
It was proved by \cite{Bil95} that $\mc W_{N,m}$ has a 
differential basis given by $T$ and primary elements,
i.e. it is a non-local PVA of CFT type.
\begin{example}\label{exa:matrixkdv_poisson}
For $N=2$ we have $\mc W_{2,m}=\mb F[u_{ab}^{(n)}\mid a,b\in\{1,\dots,m\},n\in\mb Z_+]$.
The formula of the $\lambda$-brackets \eqref{eq:H_dirac_mat} and \eqref{eq:K_mat}
is ($a,b,c,d=1,\dots,m$):
\begin{equation}\label{matrixkdv_lambda}
\begin{array}{l}
\displaystyle{
\{u_{ab}{}_{\lambda}u_{cd}\}_{H^D}
=\delta_{ad}\delta_{cb}\frac{\lambda^3}{2}
+\delta_{cb}(\lambda+\frac{\partial}{2})u_{ad}
+\delta_{ad}(\lambda+\frac{\partial}{2})u_{cb}
-\frac12u_{ad}(\lambda+\partial)^{-1}u_{cb}
}\\
\displaystyle{
-\frac12u_{cb}(\lambda+\partial)^{-1}u_{ad}
+\frac12\sum_{k=1}^m\left(
\delta_{ad}u_{ck}(\lambda+\partial)^{-1}u_{kb}
+\delta_{cb}u_{kd}(\lambda+\partial)^{-1}u_{ak}
\right)\,,
}\\
\displaystyle{
\{u_{ab}{}_{\lambda}u_{cd}\}_{K}
=2\delta_{ad}\delta_{bc}\lambda\,.
}
\end{array}
\end{equation}
This is the same bi-Poisson structure
considered by \cite{OS98} when studying the non-commutative KdV equation.
For $m=2$, let
$T=u_{11}+u_{22}$, $v=u_{11}-u_{22}$, $v_+=u_{12}$ and $v_-=u_{21}$.
Then (cf. \cite{Bil95})
$$
\begin{array}{l}
\vphantom{\Big(}
\displaystyle{
\{T_\lambda T\}_c
=(2\lambda+\partial)T+\lambda^3-4c\lambda
\,,\,\,
\{T_\lambda v\}_c
=(2\lambda+\partial)v
\,,\,\,
\{T_\lambda v_{\pm}\}_c
=(2\lambda+\partial)v_{\pm}
\,,} \\
\vphantom{\Big(}
\displaystyle{
\{v_\lambda v\}_c
=2v_+(\lambda+\partial)^{-1}v_-+2v_-(\lambda+\partial)^{-1}v_+
+(2\lambda+\partial)T+\lambda^3-4c\lambda
\,,} \\
\vphantom{\Big(}
\displaystyle{
\{v_\lambda v_{\pm}\}_c
=-v(\lambda+\partial)^{-1}v_{\pm}
\,,\,\,
\{v_{\pm}{}_\lambda v_{\pm}\}_c
=-v_{\pm}(\lambda+\partial)^{-1}v_{\pm}
\,,} \\
\vphantom{\Big(}
\displaystyle{
\{v_+{}_\lambda v_-\}_c
=\frac12 v(\lambda+\partial)^{-1}v+v_-(\lambda+\partial)^{-1}v_+
+\frac12(2\lambda+\partial)T+\frac12\lambda^3-2c\lambda
\,.}
\end{array}
$$
Note that $T$ is a Virasoro element with central charge $1$ 
and $v,v_\pm$ are primary elements of conformal weight $2$.
Hence $\mc W_{2,2}$ is a non-local PVA of CFT type.
\end{example}

\begin{remark}
The following matrix analogue of the Kupershmidt-Wilson Theorem \ref{kw_thm} holds.
Let $R_{N,m}$ be the tensor product of $N$-copies 
of the affine Poisson vertex algebra for $\mf{gl}_m$:
$R_{N,m}=\mb F[v_{i,ab}^{(n)}\mid i\in\{1,\dots,N\},a,b\in\{1,\dots,m\},n\in\mb Z_+]$,
with the $\lambda$-bracket
($i,j\in\{1,\dots,N\}$, $a,b,c,d\in\{1,\dots,m\}$)
$$
\{ {v_{i,ab}}_\lambda {v_{j,cd}} \} 
= \delta_{ij}\big(\delta_{bc}v_{ad}-\delta_{ad}v_{cb}+\delta_{ad}\delta_{cb}\lambda\big)
\,.
$$
We have a PVA homomorphism
$\mu:\,\mc V_{N,m}\to R_{N,m}$, 
given by
\begin{align}\label{miuraop_mat}
\mu(L(\partial))=(\partial+V_N)(\partial+V_{N-1})\cdots(\partial+V_1)
\in \Mat_{m\times m}R_{N,m}[\partial]\,,
\end{align}
where $L(\partial)\in\mc M_{N,m}[\partial]$ 
is as in \eqref{lcap_mat},
and $V_i=\big(v_{i,ab}\big)_{a,b=1}^m$.
Moreover, we can consider the Dirac reduction of $R_{N,m}$
by the constraints $\theta_{ab}=\sum_{k=1}^N v_{k,ab}$, for $a,b\in\{1,\dots,m\}$.
Then the map \eqref{miuraop_mat}
induces a PVA homomorphism on the Dirac reductions,
called the \emph{matrix Miura map}:
$$
\mu:\,\mc W_{N,m}\to\quot{R_{N,m}}{\langle \theta_{ab}\mid a,b=1,\dots,m\rangle}\,.
$$
\end{remark}

\subsection{Gelfand-Dickey Integrable hierarchies in the matrix case}\label{sec:hier_mat}

Throughout this section we let $\mc V=\mc V^\infty_{N,m}$ or $\mc V_{N,m}$,
endowed with its AGD bi-Poisson structure $(H,K)$,
and we let $\mc M=\Mat_{m\times m}\mc V$.
Let $L(\partial)$ as in \eqref{lcap_mat}.
In analogy with Theorem \ref{prop:lenard_works},
we shall prove that the sequence of local functionals $\{\tint h_k\}_{k\geq1}$,
where
$$
h_k=\frac Nk \tr \res_z L^{\frac kN}(z)\in\mc V
\,,
$$
satisfies the Lenard-Magri recursive condition \eqref{lenardrecursion},
and it spans an infinite dimensional subspace of $\quot{\mc V}{\partial\mc V}$.
Hence,
we get the corresponding integrable hierarchy of bi-Hamiltonian equations
$$
\frac{du_{i,ab}}{dt_k}
=
\{\tint h_k,u_{i,ab}\}_H
=
\{\tint h_{k+N},u_{i,ab}\}_K
\,.
$$
This is a consequence of the following results,
which are proved in the same way as in Section \ref{sub:hierarchies1},
for the scalar case.

\begin{lemma}\label{lem:15032013_mat}
Let $\mc V$ be a differential algebra endowed with a $\lambda$-bracket
$\{\cdot\,_{\lambda}\,\cdot\}$,
and let $\mc M=\Mat_{m\times m}\mc V$.
Let $L(\partial)\in\mc M((\partial^{-1}))$ be a monic matrix pseudodifferential operator
of order $N>0$.
Then, for all $k\geq1$ and $a,b\in\{1,\dots,m\}$, 
the following identity holds in $\mc V((w^{-1}))$:
\begin{equation}\label{eq:lenard1_mat}
\begin{array}{l}
\displaystyle{
\res_z
\{\tr L^{\frac kN}(z)_{\lambda}L_{ab}(w)\}\big|_{\lambda=0}
} \\
\displaystyle{
=\frac kN
\sum_{c,d=1}^m
\res_z\{L_{cd}(z+x)_x L_{ab}(w)\} \big(\big|_{x=\partial}(L^{\frac kN-1})_{dc}(z)\big)
\,.}
\end{array}
\end{equation}
\end{lemma}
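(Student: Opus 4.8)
The plan is to run the scalar argument of Lemma \ref{lem:15032013} essentially verbatim, the only new ingredient being the bookkeeping of the matrix indices, which is governed by the entrywise form of the multiplication rule for symbols \eqref{eq:mult_symbol}, $(A\circ B)_{ac}(z)=\sum_b A_{ab}(z+\partial)B_{bc}(z)$, and the fact that the sum over the ``bracketed slot'' now collapses by cyclicity of the trace. First I would set $M(\partial)=L^{\frac1N}(\partial)$, the monic matrix pseudodifferential operator of order $1$ furnished by Proposition \ref{prop:roots}(a), so that $L^{\frac kN}=M^k$ and $M^{k-1}=L^{\frac{k-1}{N}}$; the entrywise version of \eqref{eq:mult_symbol} then writes $\tr L^{\frac kN}(z)$ as the cyclic sum $\sum_{a_1,\dots,a_k}M_{a_1a_2}(z+\partial)\cdots M_{a_ka_1}(z)$, the derivations acting to the right in the usual way.

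Applying $\{\,\cdot\,_\lambda L_{ab}(w)\}$ and using sesquilinearity together with the Leibniz rules of Definition \ref{def:lambda}, I obtain a sum of $k$ terms, the $l$-th of which brackets the single factor $M_{cd}$ sitting in the $l$-th slot against $L_{ab}(w)$: exactly as in \eqref{eq:15032013_1}, the $k-l$ factors to its right assemble into a block of $M^{k-l}(z)$, while the $l-1$ factors to its left, being pushed rightward, assemble into an adjoint block $\big((L^*)^{\frac{l-1}N}\big)(-z+\lambda)$, and the intermediate $\lambda$-argument is shifted to $\lambda+x+y$ with $x,y$ collecting the derivations from the two blocks. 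Taking $\res_z$, I would invoke Lemma \ref{lem:residui}(b) to trade the adjoint block $(L^*)^{\frac{l-1}N}(-z+\lambda)$ for $L^{\frac{l-1}N}(z)$ under the residue, at the cost of shifting $z\mapsto z+\lambda+\partial$ in the remaining factor; setting $\lambda=0$ and relabelling the contracted indices $a_1,\dots,a_k$, cyclicity of the trace makes each of the $k$ summands identical, so the sum over $l$ produces an overall factor $k$. This yields the matrix analogue of \eqref{eq:15032013_1e},
$$
\res_z\{\tr L^{\frac kN}(z)_\lambda L_{ab}(w)\}\big|_{\lambda=0}
=k\sum_{c,d=1}^m\res_z\{M_{cd}(z+x)_xL_{ab}(w)\}\big(\big|_{x=\partial}(M^{k-1})_{dc}(z)\big)\,.
$$

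Specialising the same expansion to $k=N$ (so that $M^N=L$), then replacing $z$ by $z+\partial$ and $\lambda$ by $\lambda+\partial$ acting on $(L^{\frac kN-1})_{dc}(z)$, taking residues, again using Lemma \ref{lem:residui}(b), and setting $\lambda=0$ gives the matrix analogue of \eqref{eq:15032013_1d},
$$
\sum_{c,d=1}^m\res_z\{L_{cd}(z+x)_xL_{ab}(w)\}\big(\big|_{x=\partial}(L^{\frac kN-1})_{dc}(z)\big)
=N\sum_{c,d=1}^m\res_z\{M_{cd}(z+x)_xL_{ab}(w)\}\big(\big|_{x=\partial}(M^{k-1})_{dc}(z)\big)\,.
$$
Combining the two displayed identities yields \eqref{eq:lenard1_mat}. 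I expect the only genuine difficulty to lie in the middle step: verifying that, after applying Lemma \ref{lem:residui}(b) and setting $\lambda=0$, the $k$ terms produced by the Leibniz rule really do coincide once the cyclic trace indices are relabelled, so that they sum to $k$ times the single clean term. The scalar proof hides this collapse inside the passage from \eqref{eq:15032013_1} to \eqref{eq:15032013_1e}, where it is automatic; in the matrix case it is precisely the cyclicity of the trace $\tr$ that licenses the relabelling and hence the factor of $k$, and I would make that point explicit.
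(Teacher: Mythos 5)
Your proposal is correct and matches the paper's approach: the paper proves Lemma \ref{lem:15032013_mat} simply by declaring that it "is proved in the same way as in Section \ref{sub:hierarchies1} for the scalar case," i.e.\ by the very adaptation of the proof of Lemma \ref{lem:15032013} that you carry out. Your explicit remark that the collapse of the $k$ Leibniz terms into $k$ copies of a single term rests on the cyclic contraction of the trace indices (via $\sum_{e}(M^{k-l})_{de}(z+y)\big(\big|_{y=\partial}(M^{l-1})_{ec}(z)\big)=(M^{k-1})_{dc}(z)$ after Lemma \ref{lem:residui}(b) and setting $\lambda=0$) is exactly the point the paper leaves implicit, and it is verified correctly.
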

\begin{lemma}\label{prop:var_der_mat}
For every $i\in I$ (resp. $I_-$), $a,b=1,\dots,m$, and $k\geq1$, 
we have, in $\mc V=\mc V_{N,m}^\infty$ (resp. $\mc V_{N,m}$),
\begin{equation}\label{eq:varder_mat}
\frac{\delta h_k}{\delta u_{i,ab}}
=\res_z(z+\partial)^{-i-1}(L^{\frac kN-1})_{ba}(z)\,.
\end{equation}
\end{lemma}
\begin{lemma}\label{20131107:lem_mat}
The local functionals $\tint h_k\in\quot{\mc V}{\partial\mc V},\, k\in\mb Z_+\backslash N\mb Z_+$, 
are all linearly independent.
\end{lemma}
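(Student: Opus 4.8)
The plan is to mirror the proof of the scalar case, Lemma \ref{20131107:lem}, replacing scalar manipulations by their matrix counterparts and using the variational derivative formula \eqref{eq:varder_mat} from Lemma \ref{prop:var_der_mat}; the argument works uniformly for $\mc V=\mc V_{N,m}^\infty$ and $\mc V=\mc V_{N,m}$. First I would introduce the evaluation homomorphism $\mc V\twoheadrightarrow\mb F[u_{-N,ab}\mid a,b=1,\dots,m]$, $f\mapsto\bar f$, sending $u_{i,ab}^{(n)}\mapsto0$ for all $(i,n)\neq(-N,0)$. Under this map the generic operator becomes $\overline{L(z)}=z^N\mbb1_m+U_{-N}z^{N-1}=z^N(\mbb1_m+z^{-1}U_{-N})$, where now the entries of $U_{-N}$ are commuting constants, their $\partial$-derivatives having been set to zero.

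Next I would compute $\overline{L^{\frac kN-1}(z)}$. Since $f\mapsto\bar f$ is a differential algebra homomorphism, it extends to the algebras of matrix pseudodifferential operators, and by the uniqueness in Proposition \ref{prop:roots}(a) it commutes with taking the $N$-th root; hence $\overline{L^{\frac1N}}=\bar L^{\frac1N}$, and $\overline{L^{\frac kN-1}}$ is literally the $(\frac kN-1)$-th power of the series $\overline{L}$, computed with commuting constant matrix coefficients:
$$
\overline{L^{\frac kN-1}(z)}
=z^{k-N}(\mbb1_m+z^{-1}U_{-N})^{\frac kN-1}
=\sum_{h\in\mb Z_+}\binom{\frac kN-1}{h}U_{-N}^h\,z^{k-N-h}\,.
$$
Evaluating \eqref{eq:varder_mat} at $i=-N$, so that $(z+\partial)^{N-1}$ collapses to $z^{N-1}$ on the constants to its right, the residue picks out the term $h=k$ and yields
$$
\overline{\frac{\delta h_k}{\delta u_{-N,ab}}}
=\binom{\frac kN-1}{k}\big(U_{-N}^{k}\big)_{ba}\,.
$$

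Finally I would extract linear independence. If $\sum_k c_k\tint h_k=0$ in $\quot{\mc V}{\partial\mc V}$, then $\sum_k c_k\frac{\delta h_k}{\delta u_{-N,ab}}=0$ for every $a,b$, and applying the evaluation gives $\sum_k c_k\binom{\frac kN-1}{k}(U_{-N}^k)_{ba}=0$. Here $(U_{-N}^k)_{ba}$ is homogeneous of degree $k$ in the entries of $U_{-N}$, so the contributions for distinct $k$ lie in distinct graded components of the polynomial ring and must vanish separately. For $k\notin N\mb Z_+$ one has $\frac kN\notin\{1,\dots,k\}$, whence $\binom{\frac kN-1}{k}\neq0$; moreover $(U_{-N}^k)_{aa}$ contains the monomial $u_{-N,aa}^{k}$ with coefficient $1$ and is therefore a nonzero polynomial. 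Taking $a=b$ then forces each $c_k=0$. The one step needing care — the main obstacle — is precisely the justification that evaluation commutes with the $N$-th root and that the fractional power of the evaluated, constant-coefficient operator is the naive binomial series above; once this is granted, the remainder is a degree count identical in spirit to the scalar case.
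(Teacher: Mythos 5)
Your proof is correct and is exactly the argument the paper intends: the paper states that Lemma \ref{20131107:lem_mat} is proved in the same way as the scalar Lemma \ref{20131107:lem}, i.e.\ by the evaluation at $u_{i,ab}^{(n)}=0$ for $(i,n)\neq(-N,0)$, the binomial expansion of $\overline{L^{\frac kN-1}(z)}$, and formula \eqref{eq:varder_mat}. Your extra care about the evaluation commuting with the $N$-th root (via uniqueness in Proposition \ref{prop:roots}(a)) and the degree/nonvanishing argument for $(U_{-N}^k)_{aa}$ correctly supplies the details the paper leaves implicit.
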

\begin{lemma}\label{cor:17032013_mat}
For the AGD bi-PVA $\mc V=\mc V_{N,m}^\infty$ or $\mc V_{N,m}$,
with bi-Poisson structure $(H,K)$, we have ($a,b=1,\dots,m$, $k\geq1$)
\begin{enumerate}[(a)]
\item
$\displaystyle{
\{h_k{}_\lambda L_{ab}(w)\}_H\big|_{\lambda=0}
=\sum_{c=1}^m \Big(
(L^{\frac kN})_{ac}(w+\partial)_+L_{cb}(w)-L_{ac}(w+\partial)(L^{\frac kN})_{cb}(w)_+
\Big)}$;
\item
$\displaystyle{
\{h_k{}_\lambda L_{ab}\!(w)\}_K\big|_{\lambda=0}
\!\!
=\!\!\sum_{c=1}^m\! \Big(\!
(\!L^{\frac{k}{N}-1}\!)_{ac}(w+\partial)_+L_{cb}(w)-L_{ac}(w+\partial)(\!L^{\frac{k}{N}-1}\!)_{cb}(w)_+\!
\Big)}$.
\end{enumerate}
\end{lemma}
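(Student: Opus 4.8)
The plan is to carry over the proof of the scalar Lemma \ref{cor:17032013} essentially line by line, the single new ingredient being that wherever the scalar argument uses the operator identity $L^{\frac kN-1}(z+\partial)L(z)=L^{\frac kN}(z)$ one instead uses the matrix symbol-multiplication formula \eqref{eq:mult_symbol} entrywise, $\sum_d L_{ad}(z+\partial)(L^{\frac kN-1})_{dc}(z)=(L^{\frac kN})_{ac}(z)$ and $\sum_c (L^{\frac kN-1})_{dc}(z+\partial)L_{cb}(z)=(L^{\frac kN})_{db}(z)$, to reassemble the matrix products. First I would apply Lemma \ref{lem:15032013_mat}, which together with $h_k=\frac Nk\tr\res_z L^{\frac kN}(z)$ gives
$$
\{h_k{}_\lambda L_{ab}(w)\}\big|_{\lambda=0}
=\sum_{c,d=1}^m\res_z\{L_{cd}(z+x)_x L_{ab}(w)\}\big(\big|_{x=\partial}(L^{\frac kN-1})_{dc}(z)\big)
$$
for either of the two $\lambda$-brackets; everything then reduces to inserting the explicit formulas \eqref{eq:H_mat} and \eqref{eq:K_mat} and computing residues.

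For part (a) I would substitute the relabelled form of \eqref{eq:H_mat}, namely $\{L_{cd}(z)_\lambda L_{ab}(w)\}_H=L_{ad}(z)i_z(z-w-\lambda-\partial)^{-1}L_{cb}(w)-L_{ad}(w+\lambda+\partial)i_z(z-w-\lambda-\partial)^{-1}L_{cb}^*(-z+\lambda)$, and then repeat the scalar manipulations \eqref{20131106:eq1}--\eqref{20131106:eq3}. Concretely, in the first term the sum over $d$ collapses by \eqref{eq:mult_symbol} to $(L^{\frac kN})_{ac}(z)$, and \eqref{20130927:cor1} gives $\res_z(L^{\frac kN})_{ac}(z)i_z(z-w-\partial)^{-1}=(L^{\frac kN})_{ac}(w+\partial)_+$, producing $\sum_c(L^{\frac kN})_{ac}(w+\partial)_+L_{cb}(w)$. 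In the second term I would apply the scalar identity Lemma \ref{lem:residui}(b) entrywise to the pair $(L^{\frac kN-1})_{dc}$, $L_{cb}^*$ (only the scalar version is needed, since each entry $L_{ij}$ is an ordinary scalar pseudodifferential operator, so the transpose in the matrix adjoint never enters), convert $L_{cb}^*(-z)$ into $L_{cb}(z)$ with the compensating shifts, reassemble $\sum_c(L^{\frac kN-1})_{dc}(z+\partial)L_{cb}(z)=(L^{\frac kN})_{db}(z)$ by \eqref{eq:mult_symbol}, and finish with \eqref{20130927:cor1} to obtain $-\sum_c L_{ac}(w+\partial)(L^{\frac kN})_{cb}(w)_+$. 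Together these give part (a).

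Part (b) is completely parallel, using \eqref{eq:K_mat} in place of \eqref{eq:H_mat} and following the scalar chain \eqref{20131106:eq4}--\eqref{20131106:eq7}; the Kronecker deltas $\delta_{ad}$, $\delta_{cb}$ in \eqref{eq:K_mat} simply collapse one of the two index sums before the identical residue identities \eqref{20130927:cor1} and Lemma \ref{lem:residui}(b) are applied, and the net effect of having $L^{\frac kN-1}$ rather than $L^{\frac kN}$ in the bracket is to lower every fractional power by one, yielding the stated formula.

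I expect the only genuinely delicate point---and the step requiring the most care---to be the index bookkeeping: one must track how the contraction index carried by $(L^{\frac kN-1})_{dc}(z)$ in Lemma \ref{lem:15032013_mat} threads through the free indices $a,b$ of the target $L_{ab}(w)$, and verify that each entrywise application of Lemma \ref{lem:residui}(b) preserves the left-to-right order of the surviving matrix factors. Because the matrix entries do not commute, the products $(L^{\frac kN})_{ac}(w+\partial)_+L_{cb}(w)$ and $L_{ac}(w+\partial)(L^{\frac kN})_{cb}(w)_+$ must keep their order; but since that order is recorded entirely by the index contractions and not by operator composition, the scalar computation transfers unchanged, confirming the assertion that these lemmas ``are proved in the same way as in Section \ref{sub:hierarchies1}''.
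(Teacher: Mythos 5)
Your proposal is correct and follows exactly the route the paper intends: the paper gives no separate argument for Lemma \ref{cor:17032013_mat}, stating only that it is ``proved in the same way as in Section \ref{sub:hierarchies1}, for the scalar case,'' and your entrywise transcription of the chain \eqref{20131106:eq1}--\eqref{20131106:eq7} via Lemma \ref{lem:15032013_mat}, the matrix form of \eqref{eq:mult_symbol}, \eqref{20130927:cor1} and the scalar Lemma \ref{lem:residui}(b) is precisely that argument, with the index bookkeeping handled correctly. Your observation that only the scalar adjoint of each entry (not the matrix adjoint with its transpose) enters \eqref{eq:H_mat} is accurate and is the one point worth making explicit.
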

\begin{lemma}\label{lem:lenard_works_mat}
For every $k\geq1$, 
we have the Lenard-Magri recursion
\begin{equation}\label{leneq_mat}
\{h_k{}_\lambda u\}_H\big|_{\lambda=0}
=\left.\{h_{k+N}{}_\lambda u\}_K\right|_{\lambda=0}
\,,\,\,
\text{ for all } u\in\mc V
\,.
\end{equation}
\end{lemma}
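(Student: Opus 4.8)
The plan is to follow the scalar argument of Lemma \ref{lem:lenard_works} essentially verbatim, reducing the identity \eqref{leneq_mat} to the generating series of the generators and then propagating it to all of $\mc V$ by the derivation property. First I would observe that, for each fixed $k$, both sides of \eqref{leneq_mat} are the values at $u$ of Hamiltonian derivations: by Proposition \ref{pvahamop} the maps $u\mapsto\{\tint h_k,u\}_H=\{h_k{}_\lambda u\}_H|_{\lambda=0}$ and $u\mapsto\{\tint h_{k+N},u\}_K=\{h_{k+N}{}_\lambda u\}_K|_{\lambda=0}$ are derivations of the commutative associative product of $\mc V$ that commute with $\partial$. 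Since $\mc V=\mc V_{N,m}^\infty$ (resp. $\mc V_{N,m}$) is generated as a differential algebra by the matrix entries $u_{i,ab}$, equivalently by the coefficients of the generating series $L_{ab}(w)$, it suffices to verify \eqref{leneq_mat} on $u=L_{ab}(w)$ for all $a,b\in\{1,\dots,m\}$.

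To check the recursion on these generators I would simply invoke Lemma \ref{cor:17032013_mat}. Part (a), applied with the index $k$, gives
$$\{h_k{}_\lambda L_{ab}(w)\}_H\big|_{\lambda=0}=\sum_{c=1}^m\Big((L^{\frac kN})_{ac}(w+\partial)_+L_{cb}(w)-L_{ac}(w+\partial)(L^{\frac kN})_{cb}(w)_+\Big),$$
while part (b), applied with the index $k+N$, gives
$$\{h_{k+N}{}_\lambda L_{ab}(w)\}_K\big|_{\lambda=0}=\sum_{c=1}^m\Big((L^{\frac{k+N}{N}-1})_{ac}(w+\partial)_+L_{cb}(w)-L_{ac}(w+\partial)(L^{\frac{k+N}{N}-1})_{cb}(w)_+\Big).$$
The crux is the trivial exponent identity $\frac{k+N}{N}-1=\frac kN$, which makes the two right-hand sides coincide term by term. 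Hence \eqref{leneq_mat} holds on the generating series $L_{ab}(w)$, and therefore, by the derivation argument of the previous paragraph, on all of $\mc V$.

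I do not expect a genuine obstacle: the entire analytic content has been absorbed into Lemma \ref{cor:17032013_mat}, whose matrix proof runs parallel to the scalar Lemma \ref{cor:17032013} (using the matrix Poisson structures \eqref{eq:H_mat} and \eqref{eq:K_mat} together with the residue manipulations of Lemma \ref{lem:residui}(b) and equation \eqref{20130927:cor1}). The only mild care needed is bookkeeping of the noncommuting matrix entries and of the internal summation index $c$ produced by matrix multiplication; the structural mechanism — matching the $H$-bracket at fractional power $\frac kN$ against the $K$-bracket at power $\frac{k+N}{N}-1=\frac kN$ — is identical to the scalar case, so no new idea is required.
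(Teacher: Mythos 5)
Your proposal is correct and matches the paper's argument: the paper proves Lemma \ref{lem:lenard_works_mat} exactly as in the scalar case, reducing to the generators $L_{ab}(w)$ via the Leibniz rule and then comparing Lemma \ref{cor:17032013_mat}(a) at index $k$ with (b) at index $k+N$, where the identity $\frac{k+N}{N}-1=\frac kN$ makes the two expressions coincide.
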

\begin{lemma}\label{lem:lenard_start_mat}
For every $\varepsilon\in\{1,\dots,N\}$, 
we have 
\begin{equation}\label{leneq2_mat}
\{{h_{\varepsilon}}_\lambda u\}_K\big|_{\lambda=0}=0
\,,\,\,
\text{ for all } u\in\mc V
\,.
\end{equation}
\end{lemma}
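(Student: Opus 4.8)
The plan is to mirror the scalar argument of Lemma \ref{lem:lenard_start}, exploiting the explicit formula for $\{h_\varepsilon{}_\lambda L_{ab}(w)\}_K\big|_{\lambda=0}$ supplied by Lemma \ref{cor:17032013_mat}(b). Since $\{h_\varepsilon{}_\lambda\,\cdot\,\}_K\big|_{\lambda=0}$ acts as a derivation of the commutative product on $\mc V$ (Proposition \ref{pvahamop}), it suffices to verify \eqref{leneq2_mat} on the generators $u_{i,ab}$, i.e.\ on the generating series $L_{ab}(w)$; the general case $u\in\mc V$ then follows by the Leibniz rule.

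First I would dispose of the range $1\le\varepsilon<N$. Here the matrix pseudodifferential operator $L^{\frac{\varepsilon}{N}-1}(\partial)$ has order $\varepsilon-N<0$, so all of its entries are integral operators and the differential parts $(L^{\frac{\varepsilon}{N}-1})_{ac}(w+\partial)_+$ and $(L^{\frac{\varepsilon}{N}-1})_{cb}(w)_+$ vanish identically. Feeding this into the formula of Lemma \ref{cor:17032013_mat}(b) kills every summand, giving $\{h_\varepsilon{}_\lambda L_{ab}(w)\}_K\big|_{\lambda=0}=0$ for all $a,b$.

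The remaining value is $\varepsilon=N$, where $L^{\frac{N}{N}-1}=L^0=\mbb1_m$. Then $(L^0)_{ac}(w+\partial)_+=\delta_{ac}$ and $(L^0)_{cb}(w)_+=\delta_{cb}$ are constants, and Lemma \ref{cor:17032013_mat}(b) collapses to
\[
\{h_N{}_\lambda L_{ab}(w)\}_K\big|_{\lambda=0}
=\sum_{c=1}^m\big(\delta_{ac}L_{cb}(w)-L_{ac}(w+\partial)\delta_{cb}\big)
=L_{ab}(w)-L_{ab}(w+\partial)\cdot 1\,.
\]
The point is that the positive powers of $\partial$ in $L_{ab}(w+\partial)$ annihilate the constant $1$, so $L_{ab}(w+\partial)\cdot 1=L_{ab}(w)$ and the two terms cancel, exactly as in the scalar case.

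I do not anticipate any real obstacle: the analytic content is already packaged in Lemma \ref{cor:17032013_mat}(b), the matrix analogue of Lemma \ref{cor:17032013}. The only delicate point is the bookkeeping for $\varepsilon=N$, namely reading off that the symbol of $\mbb1_m$ produces the Kronecker deltas and recognizing that $L_{ab}(w+\partial)$ is being applied to the constant $1$ rather than to a nonconstant argument; once this is handled the cancellation is immediate.
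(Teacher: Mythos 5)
Your proposal is correct and coincides with the paper's argument: the paper simply states that Lemma \ref{lem:lenard_start_mat} is proved in the same way as the scalar Lemma \ref{lem:lenard_start}, i.e.\ by observing that $(L^{\frac{\varepsilon}{N}-1})_+=0$ for $1\le\varepsilon<N$ and that for $\varepsilon=N$ the two terms of Lemma \ref{cor:17032013_mat}(b) cancel because $L_{ab}(w+\partial)\cdot 1=L_{ab}(w)$. Your reduction to the generating series $L_{ab}(w)$ via the Leibniz rule is exactly the mechanism used in the scalar case as well.
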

\begin{remark}\label{20132507:rem2}
As already noted in Remark \ref{20132507:rem1} in the scalar case,
it follows from Lemma \ref{cor:17032013_mat} and \eqref{eq:mult_symbol}
that the Hamiltonian equation corresponding to the Hamiltonian functional
$\tint h_k$, $k\geq1$, can be written as
\begin{equation}\label{eq:lax_pres}
\frac{dL(w)}{dt_k}=[(L^{\frac kN})_+,L](w)\,,
\end{equation}
in the space $\mc M((w^{-1}))$.
\end{remark}

\subsection{Integrable hierarchies for the reduction to the case \texorpdfstring{$U_{-N}=0$}{U_-N=0}}

Let, as before, $\mc V=\mc V_{N,m}$ and let $\mc K$ be the (differential) field of fractions of $\mc V$.
We shall identify the space $\Mat_{m\times m}\mc V$ of $m\times m$ matrices with coefficients in $\mc V$
with the space $\mc V^{m^2}$ in the obvious way.
Let, as in \eqref{lcap_mat}, 
\begin{equation}\label{lcap_mat2}
L(\partial)
=\partial^N\mbb1_m+U_{-N}\partial^{N-1}+\ldots+U_{-1}
\in\Mat_{m\times m}\mc V[\partial]
\,,
\end{equation}
where $U_{i}=\big(u_{i,ab}\big)_{a,b=1}^m$
and the $u_{i,ab}$'s are the generators of the algebra of differential polynomials $\mc V$.

Recall from Section \ref{sec:2.2b_mat}
that we have a bi-Poisson structure $H,K\in\Mat_{Nm^2\times Nm^2}\mc V[\partial]$ 
on $\mc V$, given by \eqref{HKij_mat}
with indices running in $I_-$ and $u_{i,ab}=0$ for $i\geq0$.
Consider the the Dirac modification $H^D$ of $H$ by the constraints $u_{-N,ab},\,a,b=1,\dots,m$,
given by \eqref{eq:H_dirac_mat}.
Note that $H^D$ is a non-local Poisson structure,
hence the analogue of the Lenard-Magri recursion condition \eqref{leneq_mat}
has to be expressed in terms of association relations:
\begin{equation}\label{eq:lm2}
\frac{\delta h_{k}}{\delta u}\ass{H^D} P_k
\ass{K}\frac{\delta h_{k+N}}{\delta u}\,,
\end{equation}
for some $P_k\in\mc V^{Nm^2}$.
Recall from \cite{DSK13} that, for $X\in\Mat_{Nm^2\times Nm^2}\mc V((\partial^{-1}))$
and $F,P\in\mc V^{Nm^2}$,
the association relation $F\ass{X}P$ means that 
$X$ admits a fractional decomposition of the form $X=AB^{-1}$, 
where $A$ and $B$ are matrix differential operators with coefficients in $\mc V$ 
and $B$ is nondegenerate, and there exists $F_1\in\mc K^{Nm^2}$ such that $F=BF_1$ and $P=AF_1$.
In this section we shall prove the following (cf.  \cite[Ansatz 7.3]{DSKV13c}):
\begin{theorem}\label{20140114:thm}
The Lenard-Magri recursive conditions \eqref{eq:lm2} hold.
As a consequence, we get the induced integrable hierarchy of bi-Hamiltonian equations on $\mc W_{N,m}$:
$\frac{du_{i,ab}}{dt_k}=(P_k)_{i,ab}$, $i\in\{-N+1,\dots,-1\}$, $a,b\in\{1,\dots,m\}$, $k\geq1$.
\end{theorem}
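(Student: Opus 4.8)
The plan is to mimic the local matrix case of Section~\ref{sec:hier_mat}, reducing the non-local recursion \eqref{eq:lm2} to the local Lenard--Magri recursion already established there, the only genuinely new ingredient being the translation into association relations in the sense of \cite{DSK13}. First I would prove the matrix analogue of Lemma~\ref{prop:17032013}: for all $k\geq1$ and $a,b\in\{1,\dots,m\}$,
$$
\{h_k{}_\lambda L_{ab}(w)\}_{H}\big|_{\lambda=0}
=\{h_k{}_\lambda L_{ab}(w)\}_{H^D}\big|_{\lambda=0}\,.
$$
By the matrix residue Lemma~\ref{lem:15032013_mat}, both sides are residues in $z$ paired against $L^{\frac kN-1}(z)$, and the difference between the two brackets is exactly the sum of the four extra Dirac terms of \eqref{eq:H_dirac_mat}, each carrying a factor $(\lambda+\partial)^{-1}$. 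Taking $\res_z$ of these against $L^{\frac kN-1}(z)$, Lemma~\ref{lem:residui}(b) yields the matrix identities
$$
\res_z L^*(-z)L^{\frac kN-1}(z)
=\res_z L^{\frac kN}(z)
=\res_z L(z+\partial)L^{\frac kN-1}(z)\,,
$$
so that each $(\lambda+\partial)^{-1}$ in fact acts on a vanishing coefficient; this simultaneously shows that $\{h_k{}_\lambda L_{ab}(w)\}_{H^D}$ has no pole at $\lambda=0$ and that the extra terms cancel. This is the matrix reincarnation of the residue computation closing the proof of Lemma~\ref{prop:17032013}.

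Combining this with the local recursion of Lemma~\ref{lem:lenard_works_mat}, I obtain, for every $u\in\mc V$ and $k\geq1$,
$$
\{h_k{}_\lambda u\}_{H^D}\big|_{\lambda=0}
=\{h_k{}_\lambda u\}_{H}\big|_{\lambda=0}
=\{h_{k+N}{}_\lambda u\}_{K}\big|_{\lambda=0}\,,
$$
which descends to $\mc W_{N,m}$ after imposing $U_{-N}=0$. The common value is the \emph{local} element $P_k:=K(\partial)\frac{\delta h_{k+N}}{\delta u}\in\mc V^{Nm^2}$, computed through \eqref{eq:varder_mat}; since $K$ is a genuine matrix differential operator, the relation $P_k\ass{K}\frac{\delta h_{k+N}}{\delta u}$ on the right of \eqref{eq:lm2} holds by definition, with the trivial decomposition $K=K\circ\mbb1^{-1}$. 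The content of the two displays above is thus that applying the \emph{non-local} operator $H^D$ to the particular covector $\frac{\delta h_k}{\delta u}$ produces the \emph{local} vector field $P_k$.

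The last and hardest step is to upgrade this pole-free evaluation into the genuine left association relation $\frac{\delta h_k}{\delta u}\ass{H^D}P_k$ of \eqref{eq:lm2}, which is what the non-local Lenard--Magri scheme of \cite{DSK13} requires. Here I would invoke the rational matrix pseudodifferential calculus of \cite{CDSK13b}: write $H^D=A(\partial)B(\partial)^{-1}$ as a fractional decomposition with $B$ nondegenerate, and exhibit $F_1\in\mc K^{Nm^2}$ with $\frac{\delta h_k}{\delta u}=B(\partial)F_1$ and $P_k=A(\partial)F_1$. The main obstacle is precisely verifying that $\frac{\delta h_k}{\delta u}$ lies in $B(\partial)\mc K^{Nm^2}$ and that the resulting output is the \emph{local} $P_k$ rather than a merely rational element; the pole-freeness and locality established above are exactly what should guarantee this, but matching them to a specific fractional decomposition of the Dirac-modified structure is the technical heart of the argument (cf.\ \cite[Ansatz 7.3]{DSKV13c}). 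Finally, linear independence of the $\tint h_k$ for $k\in\mb Z_+\setminus N\mb Z_+$ follows from Lemma~\ref{20131107:lem_mat} together with the evaluation-homomorphism argument of Section~\ref{sub:hierarchies2} applied to $\mc W_{N,m}$; hence the $\tint h_k$ span an infinite-dimensional subspace of $\quot{\mc W_{N,m}}{\partial\mc W_{N,m}}$, and the non-local Lenard--Magri scheme yields the asserted integrable hierarchy $\frac{du_{i,ab}}{dt_k}=(P_k)_{i,ab}$ on $\mc W_{N,m}$.
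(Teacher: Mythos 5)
Your first two steps are sound and in fact reproduce the computational core of the paper's argument: the residue identities you use to kill the Dirac correction terms against $L^{\frac kN-1}$ are precisely the content of the paper's Lemma \ref{20140114:lem7}, which states that $B^*(\partial)\frac{\delta h_k}{\delta u}=0$ for the operator $B$ appearing in the decomposition $H^D=H+BC^{-1}B^*$ of \eqref{21040114:eq1}; and taking $P_k=K(\partial)\frac{\delta h_{k+N}}{\delta u}$ does make the $K$-association trivial, exactly as in the paper. The gap is where you locate it, but it is not a formality that ``pole-freeness and locality \dots\ should guarantee'': the relation $\frac{\delta h_k}{\delta u}\ass{H^D}P_k$ requires exhibiting $F_1\in\mc K^{Nm^2}$ with $\frac{\delta h_k}{\delta u}=\mc B(\partial)F_1$ for the denominator $\mc B$ of an actual fractional decomposition of $H^D$, i.e.\ solving a differential system \emph{inside} $\mc K$ rather than in some differential field extension, and the vanishing of the correction term under formal evaluation does not by itself produce such an $F_1$.

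The bridge the paper uses is \cite[Theorem 4.12]{CDSK13b}, which converts the association relation into the computable condition $B^*(\partial)\frac{\delta h_k}{\delta u}=0$ \emph{only for a minimal rational expression}. Establishing minimality of $H^D=H+B_1C_1^{-1}B^*$ is the bulk of the paper's proof (Lemmas \ref{20140114:lem1}--\ref{20140114:lem6}): one shows that the only solutions of $[L(\partial),X]=0$ in any differential field extension are constant scalar multiples of $\mbb1_m$, factors $B$ and $C$ on the right through the operator $Q(\partial)$ of \eqref{20140114:eq1} so that the resulting $B_1,C_1$ have trivially intersecting kernels, and checks $\ker B\cap\ker C_1^*=0$ via $C_1^*(\partial)\mbb1\neq0$. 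None of this appears in your proposal, so the association relation --- and with it the theorem --- remains unproved; this minimality analysis is also the reason the argument is confined to $\mc V_{N,m}$ with finite $N$ and $m$ (cf.\ Remark \ref{20140107:rem}). To complete your proof you would need to supply these kernel computations (or an equivalent verification that $\frac{\delta h_k}{\delta u}$ lies in the image over $\mc K$ of the denominator of some explicit fractional decomposition of $H^D$).
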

In order to prove Theorem \ref{20140114:thm}
we apply the theory of singular degree of rational matrix pseudodifferential operators,
developed in \cite{CDSK13b}.
Recall that we can write (see \cite{DSKV13c})
\begin{equation}\label{21040114:eq1}
H^D=H+BC^{-1}B^*\,,
\end{equation}
where $B\in\Mat_{Nm^2\times m^2}\mc V[\partial]$ has entries
$B_{i,ab;cd}(\partial)=\res_w\{{u_{-N,cd}}_\partial L_{ab}(w)\}_Hw^i$,
and $C\in\Mat_{m^2\times m^2}\mc V[\partial]$ has entries
$C_{ab;cd}(\partial)=\{{u_{-N,cd}}_\partial u_{-N,ab}\}_H$.
Let $B(\lambda;\partial)_{ab,cd}=\sum_{i=-N}^{-1}B_{i,ab;cd}(\partial)\lambda^{-i-1}$.
By Lemma \ref{20130926:lem1_mat}(a) and (c), we get explicit formulas for
the matrices $B(\lambda;\partial)$ and $C(\partial)$,
considered as differential operators on $\mc V^{m^2}$:
\begin{equation}\label{21040114:eq2}
B(\lambda;\partial)F
=
[F^t,L](\lambda)
\,\,,\,\,\,\,
C(\partial)F
=
[F^t,U_{-N}]-N\partial F^t
\,,
\end{equation}
where $[F^t,L](\lambda)$ is the symbol of the differential operator
$[F^t,L(\partial)]=F^tL(\partial)-L(\partial)\circ F^t$.

The proof of Theorem \ref{20140114:thm} will be based on the following 7 Lemmas.
\begin{lemma}\label{20140114:lem1}
Let $N>1$.
The only solutions $X\in\Mat_{m\times m}\widetilde{\mc K}$,
where $\widetilde{\mc K}$ is any differential field extension of $\mc K$,
of the system of differential equations $[L(\partial),X]=0$
are constant scalar multiples of the identity matrix.
\end{lemma}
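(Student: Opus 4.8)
The plan is to turn the operator equation $[L(\partial),X]=0$ into a finite system of differential equations for the matrix $X=(x_{ab})$ and then exploit the genericity of the coefficients $U_{-N},U_{-N+1},\dots$. First I would pass to symbols: by \eqref{eq:mult_symbol} the condition $[L(\partial),X]=0$ is equivalent to $L(z+\partial)X=XL(z)$ in $\Mat_{m\times m}\widetilde{\mc K}((z^{-1}))$, with $\partial$ acting on the entries of $X$. Comparing coefficients of $z^{j}$ (equivalently, collecting powers of $\partial$) gives, for $j=0,\dots,N-1$, the system
\[
[X,L_j]=\sum_{n=j+1}^{N}\binom{n}{j}\,L_n\,X^{(n-j)},\qquad j=0,\dots,N-1,
\]
where $L_N=\mbb1_m$, $L_{N-1}=U_{-N}$, $L_{N-2}=U_{-N+1}$, and so on. The two top equations are $NX'=[X,U_{-N}]$ (from $j=N-1$) and $\binom{N}{2}X''+(N-1)U_{-N}X'=[X,U_{-N+1}]$ (from $j=N-2$); here $N>1$ is exactly what guarantees the existence of the second equation.

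The first genuine step is to extract spectral information from the top equation $NX'=[X,U_{-N}]$, which is an isospectral (Lax-type) evolution. For every $k\geq1$, cyclicity of the trace gives $\tr(X^{k-1}[X,U_{-N}])=0$, hence $(\tr X^k)'=k\,\tr(X^{k-1}X')=\tfrac{k}{N}\tr(X^{k-1}[X,U_{-N}])=0$. Thus all power traces $\tr(X^k)$ lie in the field of constants $\widetilde{C}$ of $\widetilde{\mc K}$, so the characteristic polynomial of $X$ has constant coefficients and the eigenvalues of $X$ are constants. This part is completely rigorous and uses only the top equation.

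The crux is to promote this spectral rigidity to genuine constancy and scalarity of $X$, and here the triangular structure of the system in the coefficients is what I would exploit. Reading the system as $[X,L_j]=\sum_{n>j}\binom{n}{j}L_nX^{(n-j)}$, one sees that the lowest coefficient $U_{-1}=L_0$ occurs only through the single commutator $[X,U_{-1}]$ and never as an explicit coefficient on any right-hand side; more generally $U_{-N+s}$ enters only via $[X,U_{-N+s}]$ and through $X$ itself. The plan is, working downward, to show that $X$ does not depend on these coefficients, so that each equation takes the form $[X,U_{-N+s}]=(\text{expression free of }U_{-N+s})$, pitting a commutator linear in a fresh generic matrix against an expression independent of it. To make this comparison rigorous I would invoke the Wronskian criterion: an $\mb F$-linearly independent family of elements of $\mc V$ (such as the entries of $U_{-N}$ and suitable combinations of the entries of $U_{-N+1}$) has a nonzero Wronskian in $\mc V\subset\widetilde{\mc K}$, hence is linearly independent over $\widetilde{C}$; therefore any relation among such elements with constant-matrix coefficients forces those coefficients to vanish. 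This yields $[X,U_{-N+s}]=0$, commutation with the generic matrices $U_i$ forces $X$ to be scalar, and then $X'=0$ follows from the top equation, so $X=c\,\mbb1_m$ with $\partial c=0$.

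The main obstacle is precisely this last step. Once $X$ is not assumed constant, its entries live in the extension $\widetilde{\mc K}$, which already contains all the $u_{i,ab}$, so one cannot naively ``read off the coefficient of a generic variable'': the Wronskian argument controls relations with \emph{constant} coefficients, but a priori the entries of $X$ are arbitrary functions of the coefficients of $L$. The delicate point is therefore to control the dependence of $X$ on $U_{-N},U_{-N+1},\dots$ — equivalently, to bridge the isospectral rigidity of the second step and honest constancy of $X$ — and this is exactly what the second equation, available only because $N>1$, is designed to accomplish. Indeed, for $N=1$ only the equation $X'=[X,U_{-N}]$ survives, and it does not force $X$ to be scalar or constant, in agreement with the hypothesis $N>1$.
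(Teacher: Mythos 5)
Your derivation of the coefficient equations is correct and matches the paper's starting point: the $\partial^{N-1}$-coefficient gives $NX'=[X,U_{-N}]$ and the $\partial^{N-2}$-coefficient (which exists only because $N>1$) gives $\frac{N(N-1)}{2}X''+(N-1)U_{-N}X'=[X,U_{-N+1}]$. The isospectrality observation (all $\tr X^k$ constant) is also correct, but it is not used in the paper and does not close the argument. The problem is exactly the one you identify yourself at the end: your plan to ``read off the coefficient of $U_{-N+1,ab}$'' via a Wronskian/linear-independence argument requires the coefficients multiplying those entries to be \emph{constants}, whereas a priori they are entries of $X\in\Mat_{m\times m}\widetilde{\mc K}$ and may themselves depend on all the $u_{i,ab}$. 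You flag this as ``the main obstacle'' but do not resolve it, so the proof is incomplete.

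The missing idea is to integrate the first equation explicitly before touching the second. In a suitable differential field extension there is a nondegenerate matrix $E$ with $E'=\frac1N EU_{-N}$ (a fundamental matrix for the linear system $Y'=\frac1N YU_{-N}$), and since the solution space of $X'=\frac1N[X,U_{-N}]$ has dimension at most $m^2$ over the constants, every solution is $X=E^{-1}CE$ with $C$ a constant matrix. Substituting the first equation into the second reduces the latter to $\bigl[X,\,\frac{N-1}{2}U_{-N}'+\frac{N-1}{2N}U_{-N}^2-U_{-N+1}\bigr]=0$; conjugating by $E$ turns this into $\bigl[C,\,\frac{N-1}{2}V_{-N}'+\frac{N-1}{2N}V_{-N}^2-V_{-N+1}\bigr]=0$ with $V_i=EU_iE^{-1}$. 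Now the unknown is the \emph{constant} matrix $C$, the entries of $V_{-N},V_{-N}',V_{-N+1}$ generate a polynomial algebra over the constants that injects into $\widetilde{\mc K}$, and extracting the coefficient of $V_{-N+1,ab}$ legitimately yields $[E_{ab},C]=0$ for every elementary matrix $E_{ab}$, hence $C=c\mbb1$ and $X=E^{-1}(c\mbb1)E=c\mbb1$. This conjugation step is precisely the bridge between isospectral rigidity and genuine scalarity that your proposal lacks.
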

\begin{proof}
Equating to zero the coefficients of $\partial^{N-1}$ and $\partial^{N-2}$ of $[L(\partial),X]$,
we get the following system of differential equations:
\begin{equation}\label{20140114:eq2}
\begin{array}{l}
\vphantom{\Big)}
\displaystyle{
X^\prime=\frac1N[X,U_{-N}]
\,,}\\
\vphantom{\Big)}
\displaystyle{
\frac{N(N-1)}{2}X^{\prime\prime}+(N-1)U_{-N}X^\prime+[U_{-N+1},X]=0
\,.}
\end{array}
\end{equation}
Using the first equation, the second equation of \eqref{20140114:eq2} becomes
\begin{equation}\label{20140114:eq3}
\Big[X,
\frac{N-1}{2}U_{-N}^\prime+\frac{N-1}{2N}U_{-N}^2-U_{-N+1}
\Big]=0\,.
\end{equation}
In some differential field extension $\widetilde{\mc K}$ of $\mc K$,
there exists a non-degenerate matrix $E\in\Mat_{m\times m}\widetilde{\mc K}$
such that 
\begin{equation}\label{20140114:eq5}
E^\prime=\frac1N EU_{-N}\,.
\end{equation}
(The rows of $E$ form a basis of the space of solutions of the linear system $Y^\prime=\frac1N YU_{-N}$,
for $Y\in\widetilde{\mc K}^{m}$.)
It is immediate to see that, in this case,
\begin{equation}\label{20140114:eq6}
(E^{-1})^\prime=-\frac1N U_{-N}E^{-1}\,.
\end{equation}
Then, all solutions $X\in\Mat_{m\times m}\widetilde{\mc K}$
of the first equation in \eqref{20140114:eq2} are of the form
\begin{equation}\label{20140114:eq4}
X=E^{-1}CE\,,
\end{equation}
where $C$ is an arbitrary matrix with constant entries: $\partial C=0$.
(It is immediate to check, using \eqref{20140114:eq5} and \eqref{20140114:eq6},
that all the matrices $X$ as in \eqref{20140114:eq4}
solve the first equation in \eqref{20140114:eq2}.
On the other hand, the space of solutions of the first equation in \eqref{20140114:eq2}
is a vector space over the field of constants of dimension less than or equal to $m^2$, \cite{DSK13b}.)
Let then $V_{-N}=EU_{-N}E^{-1}$, and $V_{-N+1}=EU_{-N+1}E^{-1}$.
We immediately have by \eqref{20140114:eq5} and \eqref{20140114:eq6}
that $(V_{-N})^\prime=EU_{-N}^\prime E^{-1}$.
Hence, after conjugating by $E$,
equation \eqref{20140114:eq3} becomes
\begin{equation}\label{20140114:eq7}
\Big[C,
\frac{N-1}{2}V_{-N}^\prime+\frac{N-1}{2N}V_{-N}^2-V_{-N+1}
\Big]=0\,.
\end{equation}
Let $\widetilde{\mc C}\subset\widetilde{\mc K}$ be the subfield of constants in $\widetilde{\mc K}$.
Consider the map
$\widetilde{\mc C}[V_{-N,ab},V_{-N,ab}^\prime,V_{-N+1,ab}\,|\,a,b\in\{1,\dots,m\}]\to\widetilde{\mc K}$
given by conjugation by $E$.
It is obviously injective.
Hence, we can view \eqref{20140114:eq7}
as a system of equations in the polynomial algebra 
$\widetilde{\mc C}[V_{-N,ab},V_{-N,ab}^\prime,V_{-N+1,ab}\,|\,a,b\in\{1,\dots,m\}]$.
The coefficient of $V_{-N+1,ab}$ in the LHS of \eqref{20140114:eq7}
is $[E_{ab},C]$.
Therefore, equation \eqref{20140114:eq7} implies that $C=c\mbb1$,
for some $c\in\widetilde{\mc C}$.
The claim follows.
\end{proof}
\begin{lemma}\label{20140114:lem2}
The kernel of the operator $B(\lambda;\partial)$ on $\widetilde{\mc K}^{m^2}$
consists of constant scalar multiples of the identity matrix,
and it is contained in the kernel of $C(\partial)$.
\end{lemma}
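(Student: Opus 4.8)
The plan is to identify the kernel of $B(\lambda;\partial)$ with the commutant of $L(\partial)$ and then to invoke Lemma \ref{20140114:lem1}. First I would observe that, by the explicit formula \eqref{21040114:eq2}, for $F\in\Mat_{m\times m}\widetilde{\mc K}\simeq\widetilde{\mc K}^{m^2}$ the element $B(\lambda;\partial)F$ is nothing but the symbol of the matrix differential operator $[F^t,L(\partial)]=F^tL(\partial)-L(\partial)\circ F^t$. Since the symbol map $A(\partial)\mapsto A(z)$ is injective, $F$ lies in the kernel of $B(\lambda;\partial)$ if and only if $[F^t,L(\partial)]=0$, that is, if and only if $F^t$ commutes with $L(\partial)$.

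Next, working in the relevant range $N>1$ (for $N=1$ the reduced algebra $\mc W_{1,m}$ is trivial, so this is the only case of interest), I would apply Lemma \ref{20140114:lem1} with $X=F^t$. Its conclusion gives $F^t=c\mbb1_m$ for some constant $c\in\widetilde{\mc K}$, whence $F=c\mbb1_m$. This shows that the kernel of $B(\lambda;\partial)$ on $\widetilde{\mc K}^{m^2}$ consists exactly of the constant scalar multiples of the identity matrix.

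Finally, to establish the inclusion $\Ker B(\lambda;\partial)\subseteq\Ker C(\partial)$, I would simply evaluate the second identity in \eqref{21040114:eq2} on $F=c\mbb1_m$: since $\partial c=0$ and $c\mbb1_m$ is central in $\Mat_{m\times m}\widetilde{\mc K}$, we get $C(\partial)(c\mbb1_m)=[c\mbb1_m,U_{-N}]-N\partial(c\mbb1_m)=0$, as required.

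The only genuinely nontrivial input is Lemma \ref{20140114:lem1}, which carries out the work of showing that the commutant of $L(\partial)$ is one-dimensional; everything else is a direct translation through the symbol map together with a one-line computation. The two points requiring care are the appearance of the transpose $F^t$ (so that the commutant condition matches the hypothesis of Lemma \ref{20140114:lem1}) and the restriction $N>1$, without which the first-order case $L(\partial)=\partial\mbb1_m+U_{-1}$ would produce an $m^2$-dimensional kernel and the statement would fail.
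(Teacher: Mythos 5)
Your proposal is correct and, for the substantive part of the statement, follows the same route as the paper: read off from \eqref{21040114:eq2} that $B(\lambda;\partial)F$ is the symbol of $[F^t,L(\partial)]$, so the kernel is the commutant of $L(\partial)$, and then quote Lemma \ref{20140114:lem1}. The only divergence is in the trivial second half: you verify $\Ker B(\lambda;\partial)\subseteq\Ker C(\partial)$ by plugging $F=c\mbb1_m$ into the formula $C(\partial)F=[F^t,U_{-N}]-N\partial F^t$, whereas the paper observes that $C(\partial)$ is the coefficient of $\lambda^{N-1}$ in $B(\lambda;\partial)$, which gives the containment without using the explicit description of the kernel; both arguments are one-liners and equally valid. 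Your remarks about the transpose and the restriction $N>1$ are apt.
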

\begin{proof}
The first claim is obvious by the definition \eqref{20140114:eq1} of $B(\lambda;\partial)$
and Lemma \ref{20140114:lem1}.
For the second claim we just need to observe that $C(\partial)$
is the coefficient of $\lambda^{N-1}$ in $B(\lambda;\partial)$.
\end{proof}
\begin{lemma}\label{20140114:lem3}
Let $Q(\partial):\,\mc V^{m^2}\to\mc V^{m^2}$ be the following differential operator
\begin{equation}\label{20140114:eq1}
Q(\partial)F=F-F_{mm}\mbb1+\partial F_{mm}E_{mm}\,.
\end{equation}
Then $B(\lambda;\partial)$ and $C(\partial)$ are both divisible on the right by $Q(\partial)$,
i.e. $B(\lambda;\partial)=B_1(\lambda;\partial)\circ Q(\partial)$,
$C(\partial)=C_1(\partial)\circ Q(\partial)$.
Moreover, the kernels of $B_1(\lambda;\partial)$ and $C_1(\partial)$
have zero intersection in $\widetilde{\mc K}^{m^2}$
for any differential field extension $\widetilde{\mc K}$ of $\mc K$.
\end{lemma}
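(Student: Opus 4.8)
The plan is to choose linear coordinates on $\Mat_{m\times m}\mc V\simeq\mc V^{m^2}$ adapted to $Q(\partial)$, to read off the two right-divisibilities from how $B(\lambda;\partial)$ and $C(\partial)$ act on scalar matrices, and finally to deduce the kernel statement by lifting along $Q$ and invoking Lemma~\ref{20140114:lem2}.

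First I would record the structure of $Q$. Writing $F=(F_{ab})$, introduce the coordinates
\[
F_{ab}\ (a\neq b),\qquad d_a:=F_{aa}-F_{mm}\ (1\le a\le m-1),\qquad t:=F_{mm}.
\]
From the definition \eqref{20140114:eq1} the entries of $G=Q(\partial)F$ are $G_{ab}=F_{ab}$ for $a\neq b$, $G_{aa}=d_a$ for $a<m$, and $G_{mm}=t'$. Thus from $G$ one recovers, as differential polynomials, every coordinate together with all its derivatives except the \emph{undifferentiated} $t$; in particular $\ker Q=\{c\,\mbb1\mid c'=0\}$. Consequently a matrix differential operator $P(\partial)$ is right-divisible by $Q(\partial)$ if and only if, in these coordinates, $P(\partial)F$ contains no undifferentiated $t$; and the full $t$-dependence of $P$ is obtained by setting the remaining coordinates to zero, i.e. by evaluating $P$ on $F=t\,\mbb1$.

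Next I would check this vanishing for $P=B$ and $P=C$ directly. For a single function $s$ one has (note $F^t=s\mbb1$) that $B(\lambda;\partial)(s\mbb1)$ is the symbol of $[s\mbb1,L(\partial)]=sL(\partial)-L(\partial)\circ s$, and expanding $L(\partial)\circ s$ by the product rule the undifferentiated terms $sL-sL$ cancel, leaving only $s',s'',\dots$; similarly \eqref{21040114:eq2} gives $C(\partial)(s\mbb1)=[s\mbb1,U_{-N}]-N(s\mbb1)'=-Ns'\mbb1$. In both cases there is no undifferentiated $s$, so by the previous paragraph $Q(\partial)$ divides $B(\lambda;\partial)$ and $C(\partial)$ on the right, the quotients $B_1,C_1$ being obtained by the substitution $t^{(j)}\mapsto\partial^{\,j-1}G_{mm}$ together with $F_{ab}\mapsto G_{ab}$ and $d_a\mapsto G_{aa}$.

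Finally, for the kernel statement, suppose $G\in\widetilde{\mc K}^{m^2}$ satisfies $B_1(\lambda;\partial)G=0$ and $C_1(\partial)G=0$. Passing, if necessary, to a differential field extension of $\widetilde{\mc K}$ containing an antiderivative $p$ of $G_{mm}$, I would set $F_{ab}=G_{ab}$ $(a\neq b)$, $F_{mm}=p$ and $F_{aa}=G_{aa}+p$ $(a<m)$, so that $Q(\partial)F=G$. Then $B(\lambda;\partial)F=B_1(\lambda;\partial)\,Q(\partial)F=B_1(\lambda;\partial)G=0$, and Lemma~\ref{20140114:lem2} forces $F=c\,\mbb1$ with $c'=0$; hence $G=Q(\partial)(c\mbb1)=0$. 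This already shows $\ker B_1=0$, which in particular yields $\ker B_1\cap\ker C_1=0$. The main obstacle is the divisibility step: one must be sure that the reduction ``$P$ is right-divisible by $Q$ iff its undifferentiated-$t$ coefficient vanishes'' is a genuine identity of differential operators over $\mc V$ (stable under any base change), rather than merely a statement about solution spaces. The coordinate computation above makes this explicit, and the cancellation of the undifferentiated part for both $B$ and $C$ is exactly the fact that $[s\mbb1,L(\partial)]$ has order strictly less than the map $s\mapsto sL(\partial)$, which is where I would concentrate the careful bookkeeping.
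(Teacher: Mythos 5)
Your proof is correct, but it follows a different route from the paper's. The paper's entire argument is two sentences: it observes that $Q(\partial)$ is a non\-degenerate matrix differential operator of degree $1$ with kernel $\mb F\mbb1$, and then invokes the general division theorem for matrix differential operators \cite[Thm.4.4]{CDSK13a} — the point being that, by Lemma \ref{20140114:lem2}, $\ker Q$ is contained in (indeed equals) $\ker B$, and is contained in $\ker C$, so the cited machinery delivers both the right-divisibility and the statement about the quotients' kernels. You instead do everything by hand: you choose coordinates $\big((F_{ab})_{a\neq b},(F_{aa}-F_{mm})_{a<m},t=F_{mm}\big)$ adapted to $Q$, observe that right-divisibility by $Q$ is exactly the condition that $P(\partial)F$ contain no undifferentiated $t$, verify this for $B$ and $C$ directly from \eqref{21040114:eq2} (the key cancellation being that $[s\mbb1,L(\partial)]=sL(\partial)-L(\partial)\circ s$ involves only $s',s'',\dots$), and then get the kernel statement by lifting a putative element of $\ker B_1$ along $Q$ — after adjoining an antiderivative of $G_{mm}$ — and applying Lemma \ref{20140114:lem2}. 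Your version buys two things: the quotients $B_1,C_1$ are produced explicitly, with coefficients in $\mc V$ itself rather than merely in $\mc K$, and you actually establish the stronger fact $\ker B_1=0$ (which of course implies $\ker B_1\cap\ker C_1=0$, and is consistent with what is used as $\mc E=0$ in Lemma \ref{20140114:lem6}). The paper's version is shorter but opaque, deferring all content to the external reference; both ultimately rest on the same fact, namely that $\ker Q$ coincides with the constant multiples of $\mbb1$, which by Lemma \ref{20140114:lem2} is exactly $\ker B$.
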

\begin{proof}
The operator $Q(\partial)$
is a non-degenerate matrix differential operator of degree 1,
and its kernel is $\mb F\mbb1$.
The claim follows by \cite[Thm.4.4]{CDSK13a}.
\end{proof}
\begin{lemma}\label{20140114:lem4}
We have $C_1^*(\partial)\mbb1\neq0$,
where $C_1(\partial):\,\mc V^{m^2}\to\mc V^{m^2}$ is as in Lemma \ref{20140114:lem3},
and $C_1^*(\partial)$ is its formal adjoint.
\end{lemma}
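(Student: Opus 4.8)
The plan is to turn the implicit right factor $C_1$ of Lemma \ref{20140114:lem3} into an explicit differential operator and then to extract $C_1^*(\partial)\mbb1$ from a single trace computation. The difficulty is that $C_1$ is defined only through the right division $C(\partial)=C_1(\partial)\circ Q(\partial)$, so a priori its action is known only on the image of $Q(\partial)$. The point that unlocks everything is that the terms produced by the division beyond the ``expected'' part depend only on the $(m,m)$-entry of the argument, which is precisely the component that $Q(\partial)$ governs; hence the quotient can be written in closed form, with no need to invert $Q(\partial)$.

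To carry this out I would first set $G=Q(\partial)F$. By \eqref{20140114:eq1}, and using that $\mbb1$ and $E_{mm}$ are symmetric, $G^t=F^t-F_{mm}\mbb1+(\partial F_{mm})E_{mm}$. Then, since $[\mbb1,U_{-N}]=0$, a direct expansion together with \eqref{21040114:eq2} and the substitutions $\partial F_{mm}=G_{mm}$, $\partial^2 F_{mm}=\partial G_{mm}$ gives
\[
[G^t,U_{-N}]-N\partial(G^t)
=C(\partial)F+G_{mm}[E_{mm},U_{-N}]+NG_{mm}\mbb1-N(\partial G_{mm})E_{mm}.
\]
Because every correction term on the right-hand side involves only $G_{mm}$ and $\partial G_{mm}$, this identity exhibits the quotient explicitly:
\[
C_1(\partial)G=[G^t,U_{-N}]-N\partial(G^t)-G_{mm}[E_{mm},U_{-N}]-NG_{mm}\mbb1+N(\partial G_{mm})E_{mm}.
\]
By uniqueness of right division by the nondegenerate operator $Q(\partial)$, this is the operator $C_1$ of Lemma \ref{20140114:lem3}.

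Next I would compute $\tr\big(C_1(\partial)H\big)$ for an arbitrary $H\in\Mat_{m\times m}\mc V$. The trace kills the two commutators, $\tr[H^t,U_{-N}]=\tr[E_{mm},U_{-N}]=0$, while $\tr(\partial H^t)=\partial\tr H$ and $N(\partial H_{mm})\tr(E_{mm})=N\partial H_{mm}$ are total derivatives; the only surviving contribution is $-NH_{mm}\tr(\mbb1)=-NmH_{mm}$. Thus
\[
\tr\big(C_1(\partial)H\big)\equiv -Nm\,H_{mm}\pmod{\partial\mc V}.
\]
Recalling that $C_1^*$ is the formal adjoint for the pairing $(F,G)\mapsto\tr(FG^t)$ on $\Mat_{m\times m}\mc V$ (so that $\tr(C_1(\partial)H)\equiv\tr\big(H\,(C_1^*(\partial)\mbb1)^t\big)\pmod{\partial\mc V}$ for all $H$), this reads $C_1^*(\partial)\mbb1=-Nm\,E_{mm}$.

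Finally, to conclude it is enough to observe that $C_1^*(\partial)\mbb1=0$ would force $\tr\big(C_1(\partial)H\big)\in\partial\mc V$ for every $H$, whereas the choice $H=u_{-1,11}E_{mm}$ gives $\tr(C_1(\partial)H)\equiv -Nm\,u_{-1,11}$, and $u_{-1,11}\notin\partial\mc V$ since $\frac{\delta}{\delta u_{-1,11}}u_{-1,11}=1\neq0$ while variational derivatives annihilate $\partial\mc V$. As $Nm\neq0$ in characteristic zero, this contradiction yields $C_1^*(\partial)\mbb1\neq0$. The only real obstacle is the first step --- recognizing that the division by $Q(\partial)$ closes up entirely within the $(m,m)$-slot --- after which both the trace identity and the final nonvanishing are immediate.
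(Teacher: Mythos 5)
Your proof is correct and follows essentially the same route as the paper: you produce the same explicit formula for $C_1(\partial)$ (the paper simply asserts it as ``straightforward to check'') and then evaluate its adjoint at $\mbb1$, arriving at the same answer $C_1^*(\partial)\mbb1=-Nm\,E_{mm}$. Your only variation is presentational --- you compute $\tr(C_1(\partial)H)$ modulo $\partial\mc V$, which is exactly the $G=\mbb1$ specialization of the paper's full adjoint formula, and you certify nonvanishing by a variational-derivative contradiction rather than by reading it off directly; both steps are valid.
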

\begin{proof}
It is straightforward to check that $C(\partial)=C_1(\partial)\circ Q(\partial)$,
where $C_1(\partial)$ is the following matrix differential operator
$$
C_1(\partial)F
=
[F^t,U_{-N}]-N\partial F^t
-F_{mm}(N\mbb1+[E_{mm},U_{-N}])+NF_{mm}^\prime E_{mm}\,.
$$
Its formal adjoint is
$$
C_1^*(\partial)F
=
[U_{-N},F^t]+N\partial F^t
-\tr\big(F^t(N\mbb1+[E_{mm},U_{-N}])\big)E_{mm}
-NF_{mm}^\prime E_{mm}\,.
$$
Hence, $C_1^*(\partial)\mbb1=-NmE_{mm}\neq0$.
\end{proof}
\begin{lemma}\label{20140114:lem5}
The kernels of the operators $B(\lambda,\partial)$ and $C_1^*(\partial)$
have zero intersection in $\widetilde{\mc K}^{m^2}$,
for any differential field extension $\widetilde{\mc K}$ of $\mc K$.
\end{lemma}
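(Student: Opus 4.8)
The plan is to combine the two preceding lemmas, which have already done all the substantive work: Lemma \ref{20140114:lem2} pins down the kernel of $B(\lambda;\partial)$ completely, and Lemma \ref{20140114:lem4} records that $C_1^*(\partial)$ does not annihilate the identity matrix. The intersection statement should then follow by a one-line dimension argument, since the kernel of $B$ is as small as possible.

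Concretely, I would take an arbitrary $F\in\widetilde{\mc K}^{m^2}$ lying in both kernels, i.e. $B(\lambda;\partial)F=0$ and $C_1^*(\partial)F=0$, and argue that $F=0$. First I would invoke Lemma \ref{20140114:lem2}, which asserts that the kernel of $B(\lambda;\partial)$ on $\widetilde{\mc K}^{m^2}$ consists precisely of the constant scalar multiples of the identity matrix. Hence I may write $F=c\,\mbb1$ for some $c$ in the field $\widetilde{\mc C}$ of constants of $\widetilde{\mc K}$.

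Next I would apply $C_1^*(\partial)$ to this expression. By $\mb F$-linearity of the differential operator $C_1^*(\partial)$ and the fact that $c$ is a constant, one gets $C_1^*(\partial)F=c\,C_1^*(\partial)\mbb1$. Now Lemma \ref{20140114:lem4} gives the explicit value $C_1^*(\partial)\mbb1=-Nm\,E_{mm}$, so that $C_1^*(\partial)F=-cNm\,E_{mm}$. Since $F$ was assumed to lie in $\ker C_1^*(\partial)$, this expression must vanish; as $E_{mm}\neq0$ and $Nm\neq0$ in the characteristic-zero field $\widetilde{\mc K}$, I conclude $c=0$, whence $F=0$. This shows $\ker B(\lambda;\partial)\cap\ker C_1^*(\partial)=0$, as required.

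There is essentially no serious obstacle remaining at this stage, since the genuinely delicate points—the determination of $\ker B$ via the commutator equation $[L(\partial),X]=0$ in Lemma \ref{20140114:lem1} and the nonvanishing of $C_1^*(\partial)\mbb1$ in Lemma \ref{20140114:lem4}—have already been established. The only care needed is to make sure the membership $F\in\ker B(\lambda;\partial)$ is interpreted over the extension field $\widetilde{\mc K}$ (so that Lemma \ref{20140114:lem2} applies verbatim) and that one works in a field, so that $-cNm\,E_{mm}=0$ forces the scalar $c$ to be zero rather than merely forcing some relation among the entries of $E_{mm}$.
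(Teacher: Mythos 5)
Your argument is correct and is exactly the paper's proof (which simply says the lemma follows from Lemmas \ref{20140114:lem2} and \ref{20140114:lem4}), merely spelled out: an element of both kernels is a constant multiple $c\,\mbb1$ by Lemma \ref{20140114:lem2}, and then $c\,C_1^*(\partial)\mbb1=-cNm\,E_{mm}=0$ forces $c=0$ by Lemma \ref{20140114:lem4}. No gaps.
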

\begin{proof}
It follows from Lemmas \ref{20140114:lem2} and \ref{20140114:lem4}.
\end{proof}
\begin{lemma}\label{20140114:lem6}
$H^D=H+B_1C_1^{-1}B^*$ is a minimal rational expression for $H^D$.
\end{lemma}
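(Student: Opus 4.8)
The plan is to deduce the asserted equality from the right-divisibility of Lemma \ref{20140114:lem3}, and then to establish minimality by feeding the two coprimeness conditions of Lemmas \ref{20140114:lem3} and \ref{20140114:lem5} into the criterion for minimal rational expressions of \cite{CDSK13b}.

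First I would prove the identity $H^D = H + B_1 C_1^{-1} B^*$. By Lemma \ref{20140114:lem3} we may write $B = B_1 Q$ and $C = C_1 Q$ with $Q(\partial)$ a nondegenerate differential operator; since $C$ is invertible (recall Lemma \ref{20130926:lem1_mat}(c) and Proposition \ref{prop:roots}(b)), so is $C_1$, and $C^{-1} = Q^{-1} C_1^{-1}$. Substituting these into the defining formula \eqref{21040114:eq1}, the factor $Q$ cancels:
$$
B C^{-1} B^* = (B_1 Q)(Q^{-1} C_1^{-1}) B^* = B_1 C_1^{-1} B^*,
$$
so that indeed $H^D = H + B_1 C_1^{-1} B^*$.

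It remains to show that this expression is minimal, i.e. that its singular degree equals that of $H^D$. I would argue that the rational term $B_1 C_1^{-1} B^*$ has no reducible denominator: by the coprimeness criterion of \cite{CDSK13b}, the right fraction $B_1 C_1^{-1}$ is minimal because $\Ker B_1 \cap \Ker C_1 = 0$ over every differential field extension (the last assertion of Lemma \ref{20140114:lem3}), while the left fraction $C_1^{-1} B^*$ is minimal because $\Ker B \cap \Ker C_1^* = 0$ (Lemma \ref{20140114:lem5}, using $\Ker(B^*)^* = \Ker B$). Hence no nontrivial common factor of $C_1$ can be cancelled on either side, so $B_1 C_1^{-1} B^*$ is a minimal rational expression. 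Finally, the summand $H$ is a local (matrix differential) operator and therefore cannot lower the order of the denominator $C_1$; adding it leaves the singular degree unchanged, and $H + B_1 C_1^{-1} B^*$ is minimal.

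The main obstacle will be invoking the minimality criterion of \cite{CDSK13b} in exactly the right form: one must confirm that minimality of the two-sided expression $B_1 C_1^{-1} B^*$ is governed precisely by the two one-sided coprimeness conditions (left and right with respect to the common denominator $C_1$) furnished by Lemmas \ref{20140114:lem3} and \ref{20140114:lem5}, and that the purely differential summand $H$ does not contribute to the singular degree.
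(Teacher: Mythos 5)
Your proposal is correct and takes essentially the same route as the paper: the paper's proof consists precisely of invoking the minimality criterion of \cite{CDSK13b} (Cor.~4.11 there), checking that $\Ker(B_1)\cap\Ker(C_1)=0$ by Lemma \ref{20140114:lem3} and that $\Ker(B)\cap\Ker(C_1^*)=0$ by Lemma \ref{20140114:lem5}, exactly the two coprimeness conditions you identify. The preliminary cancellation of $Q$ giving $BC^{-1}B^*=B_1C_1^{-1}B^*$ is left implicit in the paper but is the same observation you make.
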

\begin{proof}
It follows from \cite[Cor.4.11]{CDSK13b}.
Indeed, 
the space $\mc E$ defined in \cite[Eq.(4.30)]{CDSK13b} is, in this case,
$\mc E=\ker(B_1)\cap\ker(C_1)\subset\widetilde{\mc K}^{m^2}$, 
which is zero by Lemma \ref{20140114:lem3},
and the space $\mc E^*$ defined in \cite[Eq.(4.31)]{CDSK13b} is, in this case,
$\mc E^*=\ker(B)\cap\ker(C_1^*)\subset\widetilde{\mc K}^{m^2}$, 
which is zero by Lemma \ref{20140114:lem5}.
\end{proof}
\begin{lemma}\label{20140114:lem7}
$B^*(\partial)\frac{\delta h_k}{\delta u}=0$ for every $k\geq1$.
\end{lemma}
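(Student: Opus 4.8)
The plan is to reduce the claim to the vanishing of the Hamiltonian flow of $h_k$ on the constraints $u_{-N,cd}$ with respect to $H$, and then to recognize the latter as the top coefficient of a commutator of matrix pseudodifferential operators that is known to vanish.

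First I would record that $u_{i,ab}=\res_w L_{ab}(w)w^i$ for $i\in I_-$, so that, by $\mb F[\lambda]$-linearity of the $\lambda$-bracket and the definition of $B$, one has $B_{i,ab;cd}(\lambda)=\{{u_{-N,cd}}_{\lambda} u_{i,ab}\}_H$. Applying the skew-symmetry of the second AGD $\lambda$-bracket (Lemma \ref{20131027:prop1_mat}(a)) gives $\{{u_{i,ab}}_{\lambda} u_{-N,cd}\}_H=-\{{u_{-N,cd}}_{-\lambda-\partial} u_{i,ab}\}_H$, whose value at $\lambda\to\partial$ is exactly $-B^*_{i,ab;cd}(\partial)$: indeed the symbol of the formal adjoint of a differential operator $P(\partial)$ is obtained from $P$ by the substitution $\partial\mapsto-\lambda-\partial$ with $\partial$ acting on the coefficients, which is precisely the operation appearing in the skew-symmetry relation. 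Combining this with the Master Formula for the Hamiltonian flow, $\{{h_k}_{\lambda} u_{-N,cd}\}_H\big|_{\lambda=0}=\sum_{i,a,b}\{{u_{i,ab}}_{\lambda} u_{-N,cd}\}_H\big|_{\lambda\to\partial}\frac{\delta h_k}{\delta u_{i,ab}}$, and using Lemma \ref{prop:var_der_mat} for the variational derivatives, I obtain the key identity $\big(B^*\frac{\delta h_k}{\delta u}\big)_{cd}=-\{{h_k}_{\lambda} u_{-N,cd}\}_H\big|_{\lambda=0}$ for all $c,d$.

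It then remains to show $\{{h_k}_{\lambda} u_{-N,cd}\}_H\big|_{\lambda=0}=0$. Since $u_{-N,cd}$ is the coefficient of $w^{N-1}$ in $L_{cd}(w)$, this is the coefficient of $w^{N-1}$ in $\{{h_k}_{\lambda} L_{cd}(w)\}_H\big|_{\lambda=0}$, which by Lemma \ref{cor:17032013_mat}(a) (equivalently, by the Lax representation of Remark \ref{20132507:rem2}) is the $(c,d)$ entry of the symbol of the matrix commutator $[(L^{\frac kN})_+,L]$. Because $L^{\frac kN}$ commutes with $L$, one has $[(L^{\frac kN})_+,L]=-[(L^{\frac kN})_-,L]$; since $(L^{\frac kN})_-$ has order $\leq-1$ and $L$ has order $N$, this operator has order $\leq N-1$, and its coefficient of $\partial^{N-1}$ comes only from $[(L^{\frac kN})_-,\partial^N\mbb1_m]$, which is readily seen to vanish. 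Hence the coefficient of $w^{N-1}$ is zero, proving the lemma.

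The main obstacle is the first step: pinning down the signs and, above all, the $\partial$-acting-on-coefficients convention so that the formal adjoint $B^*$ appearing in the statement matches the operator produced by the skew-symmetry of the $\lambda$-bracket. Once this bookkeeping is settled, the remaining steps are a direct transcription of the scalar computation in Lemma \ref{prop:17032013}, with traces and matrix indices inserted.
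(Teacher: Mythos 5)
Your proof is correct, but it is organized differently from the paper's. The paper proves the lemma by brute force: it writes out the symbol of $B^*$ explicitly (it is read off from Lemma \ref{20130926:lem1_mat}(b)), applies it to the variational derivative computed in Lemma \ref{prop:var_der_mat}, and then uses the $\delta$-function identity \eqref{deltaprop} together with Lemma \ref{lem:residui}(b) to show that the two resulting residues are both equal to $\res_w (L^{\frac kN})_{ab}(w)$ and hence cancel. You instead first establish the identity $\big(B^*\tfrac{\delta h_k}{\delta u}\big)_{cd}=-\{{h_k}_\lambda u_{-N,cd}\}_H\big|_{\lambda=0}$ — which is a correct unwinding of the definition of $B$, the skew-symmetry of Lemma \ref{20131027:prop1_mat}(a), and the evolution formula \eqref{20131107:eq2} — and then reduce everything to the already-proven Lax form of the $H$-flow (Lemma \ref{cor:17032013_mat}(a)): the coefficient of $w^{N-1}$ in the symbol of $[(L^{\frac kN})_+,L]=-[(L^{\frac kN})_-,L]$ vanishes by an elementary order count. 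What your route buys is conceptual transparency: it exhibits the lemma as the statement that the second AGD flow preserves the constraint $u_{-N}$, which is precisely the general Dirac-reduction criterion the paper alludes to at the end of its own proof (Lemma 5.2(b) of \cite{DSKV13c}); it also recycles Lemma \ref{cor:17032013_mat} rather than repeating the residue gymnastics. What the paper's computation buys is self-containedness and independence from the Lax-pair formula. The only caveats in your write-up are cosmetic: the index placement on $B^*_{i,ab;cd}$ should be transposed to match the adjoint's row/column structure, and one should note that $[(L^{\frac kN})_-,L]$ in fact has order $\leq N-2$ (the $\partial^{N-1}$ coefficients of $A_{-1}\partial^{-1}\circ\partial^N$ and $\partial^N\circ A_{-1}\partial^{-1}$ cancel), which is exactly the vanishing you need.
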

\begin{proof}
The symbols of the matrix elements of $B^*$ are
($i\in\{-N,\dots,-1\}$, $a,b,c,d\in\{1,\dots,m\}$)
$$
B^*_{ab;i,cd}(\lambda)
=\res_z\big(
\delta_{ad}L_{cb}^*(-z+\lambda)-\delta_{cb}L_{ad}(z)
\big)w^i\,.
$$
Using Lemma \ref{prop:var_der_mat} we have
$$
\begin{array}{l}
\displaystyle{
\big(B^*(\partial)\frac{\delta h_k}{\delta u_i})_{ab}
=\sum_{c=1}^m\res_z\res_w
L_{cb}^*(-z+\partial)
\delta(z-w-\partial)
(L^{\frac kN-1})_{ac}(w)
}
\\
\displaystyle{
-\sum_{d=1}^m\res_z\res_w
L_{ad}(z)
\delta(z-w-\partial)
(L^{\frac kN-1})_{db}(w)
\,.}
\end{array}
$$
Using equation \eqref{deltaprop} and Lemma \ref{lem:residui}(b) we can rewrite
$$
\begin{array}{l}
\displaystyle{
\sum_{c=1}^m\res_z\res_w
L_{cb}^*(-z+\partial)
\delta(z-w-\partial)
(L^{\frac kN-1})_{ac}(w)
}\\
\displaystyle{
=\sum_{c=1}^m
\res_w L_{cb}^*(-w)
(L^{\frac kN-1})_{ac}(w)
=\res_w(L^{\frac kN})_{ab}(w)
}
\end{array}
$$
and
$$
\begin{array}{l}
\displaystyle{
\sum_{d=1}^m\res_z\res_w
L_{ad}(z)
\delta(z-w-\partial)
(L^{\frac kN-1})_{db}(w)
}\\
\displaystyle{
=
\sum_{d=1}^m\res_w
L_{ad}(w+\partial)
(L^{\frac kN-1})_{db}(w)
=\res_w(L^{\frac kN})_{ab}(w)\,.
}
\end{array}
$$
This shows that $\big(B^*(\partial)\frac{\delta h_k}{\delta u_i})_{ab}=0$
for all $a,b=1,\dots,m$ and $k\geq1$.
Note that, in fact, this Lemma follows from a general result on Dirac reduction,
see \cite[Lem.5.2(b)]{DSKV13c}.
\end{proof}
\begin{proof}[{Proof of Theorem \ref{20140114:thm}}]
Condition $\frac{\delta_{h_k}}{\delta u}\ass{K} P_k$ holds by assumption.
By Lemma \ref{20140114:lem6} and \cite[Theorem 4.12]{CDSK13b},
condition $\frac{\delta_{h_k}}{\delta u}\ass{H^D} P_k$
is equivalent to $B^*(\partial)\frac{\delta h_k}{\delta u}=0$,
which holds by Lemma \ref{20140114:lem7}.
As in the scalar case, it is not dificult to show that the elements
$\frac{\delta h_k}{\delta u},\,k\in\mb Z_+$, are linearly independent.
The claim follows.
\end{proof}

\begin{example}[$\mc W_{2,m}$: the Korteweg-de Vries hierarchy]
Recall from Example \ref{exa:matrixkdv_poisson} that
$\mc W_{2,m}=\mb F[u_{ab}^{(n)}\mid a,b\in\{1,\dots,m\},n\in\mb Z_+]$,
with the bi-PVA structure as in \eqref{matrixkdv_lambda}.
The first few fractional powers of 
$L(\partial)=\partial^2+U$, where $U=(u_{ab})_{a,b=1}^m$, are
$$
\begin{array}{l}
\displaystyle{
L^{\frac12}(\partial)
=\partial+\frac12U\partial^{-1}
-\frac14U'\partial^{-2}+\frac18(U''-U^2)\partial^{-3}+\dots
\,,} \\
\displaystyle{
L^{\frac32}(\partial)
=\partial^3+\frac32U\partial+\frac34U'
+\frac18(3U^2-U'')\partial^{-1}+\dots
\,,}
\end{array}
$$
from which we get 
$L^{\frac12}(w)_+=w$, $L^{\frac32}(w)_+=w^3+\frac{3}{4}(2w+\partial)U$,
and $\tint h_1=\tint \tr U$, $\tint h_3=\tint\frac14\tr U^2$.
The corresponding Hamiltonian equations \eqref{eq:hierarchy} are
$\frac{dU}{dt_1}=U'$ and the \emph{matrix Korteweg-de Vries equation}
\begin{equation}\label{eq:matrix_kdv}
\frac{dU}{dt_3}=\frac14(U'''+3UU'+3U'U)\,.
\end{equation}
Thus we proved, in particular, that this equation is integrable,
as was conjectured by \cite{OS98}.
For $m=2$ (see Example \ref{exa:matrixkdv_poisson} for notation),
the Hamiltonian functionals, expressed in terms of the matrix elements of $U$,
become $\tint h_0=\tint T$ and $\tint h_1=\frac12\tint (T^2+2v_+v_-+v^2)$,
and the matrix equation \eqref{eq:matrix_kdv} becomes the system
$$
\left\{\begin{array}{l}
\displaystyle{
\frac{dT}{dt_3}=\frac14(T'''+3(T^2+v_+v_-+v^2)')\,,
}\\
\displaystyle{
\frac{dv}{dt_3}=\frac14(v'''+6(vT)')\,,
}\\
\displaystyle{
\frac{dv_{\pm}}{dt_3}=\frac14(v_{\pm}'''+6(v_{\pm}T)')\,.
}
\end{array}\right.
$$
\end{example}

\begin{remark}\label{20140107:rem}
The results of \cite{CDSK13b} only apply to the case of matrices of finite size.
Hence, the proof of Theorem \ref{20140114:thm}
does not work for the bi-Poisson structure $(H^D,K)$ on $\mc W_{N,m}^\infty$.
However, we expect that the claim of Theorem \ref{20140114:thm}
holds also in this case.
\end{remark}

\begin{example}[$\mc W^\infty_{1,m}$: the matrix KP hierarchy]
On $\mc W^\infty_{1,m}=\mb F[u_{i,ab}^{(n)}\mid i,n\in\mb Z_+,a,b\in\{1,\dots,m\}]$,
we have
$L(\partial)=\partial+\sum_{i\in\mb Z_+}U_i\partial^{-i-1}$,
where $U_i=(u_{i,ab})_{a,b=1}^m$.
The first few integrals of motion $\tint h_k$, $k\geq1$, are
$$
\begin{array}{l}
\displaystyle{
\tint h_1=\tint \tr U_0=\int\sum_{a=1}^mu_{aa;0}\,,
\qquad
\tint h_2=\tint \tr U_1=\int\sum_{a=1}^mu_{aa;1}\,,
}
\\
\displaystyle{
\tint h_3=\tint \tr(U_2+U_0^2)=\int\sum_{a=1}^mu_{aa;2}
+\sum_{a,b=1}^mu_{ab;0}u_{ba;0}\,,
\dots
}
\end{array}
$$
To find the corresponding bi-Hamiltonian equations,
we use Lemma \ref{cor:17032013_mat}.
We have
$L(w)_+=w$, $L^2(w)_+=w^2+2U_0$ and $L^3(w)_+=w^3+3U_0w+3(U_1+U_0')$. 
Hence, 
\begin{equation}\label{KP_mat}
\begin{array}{l}
\displaystyle{
\frac{dL(w)}{dt_1}=\partial L(w)\,,
\quad
\frac{dL(w)}{dt_2}
=\partial^2L(w)+2w\partial L(w)+2U_0L(w)-2L(w+\partial)U_0\,,
}
\\
\displaystyle{
\frac{dL(w)}{dt_3}=
\partial^3L(w)+3w\partial^2L(w)+3w^2\partial L(w)+3U_0\partial L(w)
}\\
\displaystyle{
+3w\left(U_0L(w)-L(w+\partial)U_0\right)
+3\left((U_1+U_0')L(w)-L(w+\partial)(U_1+U_0')\right)\,.
}
\end{array}
\end{equation}
As done in Remark \ref{rem:kp},
we can eliminate the variable $U_2$ from
the first two equations in the second system of the hierarchy \eqref{KP_mat},
and the first equation in the third system of \eqref{KP_mat}.
After relabeling $t_1=y$, $t_2=t$, $U=U_0$ and $W=2U_1+U_0'$, we get
the system
\begin{equation}\label{20140107:eq2}
\left\{
\begin{array}{l}
\displaystyle{
W'=U_y\,,
}\\
\displaystyle{
3W_y=4U_t-U'''-6(U^2)'+6[U,W]\,,
}
\end{array}\right.
\end{equation}
which we call the \emph{matrix Kadomtsev-Petviashvili equation} (when $m=1$
it reduces to the usual Kadomtsev-Petviashvili equation \eqref{eq:kp}).
According to Remark \ref{20140107:rem},
we expect that equation \eqref{20140107:eq2}
belongs to an infinite hierarchy of integrable bi-Hamiltonian equations.
\end{example}

\acks
We wish to thank Vladimir Sokolov and Chao-Zhong Wu for useful discussions.
We are also grateful to IHES, University of Rome and MIT for their kind hospitality.

The first author is supported by the national FIRB grant RBFR12RA9W.
%
%
The third author is supported by the ERC grant
``FroM-PDE: Frobenius Manifolds and Hamiltonian Partial
Differential Equations''.

\end{document}